\newtheorem*{rep@theorem}{\rep@title}
\newcommand{\newreptheorem}[2]{%
\newenvironment{rep#1}[1]{%
 \def\rep@title{#2 \ref{##1}}%
 \begin{rep@theorem}}%
 {\end{rep@theorem}}}
\newtheorem{theorem}{Theorem}
\newtheorem{claim}{Claim}
\newtheorem{lemma}{Lemma}
\newtheorem{definition}{Definition}
\newtheorem{corollary}{Corollary}
\newtheorem{remark}{Remark}
\newtheorem{fact}{Fact}
\newenvironment{alphaenumerate}{
  \begin{enumerate}[(a)]
}{%
\end{enumerate}
 }
\newenvironment{romanenumerate}{
  \begin{enumerate}[(i)]
}{%
\end{enumerate}
 }
\newenvironment{claimproof}[1][\proofname]{
  \begin{proof}
}{%
\end{proof}
 }
\DeclareMathOperator*{\argmax}{arg\,max}
\newcommand*\diff{\mathop{}\!\mathrm{d}}
\DeclareMathOperator{\Exp}{E}
\DeclarePairedDelimiter\floor{\lfloor}{\rfloor}
\DeclarePairedDelimiter\ceil{\lceil}{\rceil}
\DeclarePairedDelimiter\round{\lfloor}{\rceil}
\newcommand{\Normal}{\mathcal{N}}
\newcommand{\W}{\mathcal{W}}
\newcommand{\dw}{\mathrm{ans}_w}
\newcommand{\poly}{\mathrm{poly}}
\newcommand{\polylog}{\mathrm{polylog}}
\newcommand{\R}{\mathbb{R}}
\renewcommand{\th}{^\text{th}}
\begin{document}

\title{The power of adaptivity in source identification\\ with time queries on the path}

\author[1]{Victor Lecomte}
\ead{vlecomte@stanford.edu}
\author[2]{Gergely Ódor}
\ead{gergely.odor@epfl.ch}
\author[2]{Patrick Thiran} 
\ead{patrick.thiran@epfl.ch}

\address[1]{Stanford University, USA}
\address[2]{EPFL, Switzerland}

\begin{abstract}
We study the problem of identifying the source of a stochastic diffusion process spreading on a graph based on the arrival times of the diffusion at a few queried nodes. In a graph $G=(V,E)$, an unknown source node $v^* \in V$ is drawn uniformly at random, and unknown edge weights $w(e)$ for $e\in E$, representing the propagation delays along the edges, are drawn independently from a Gaussian distribution of mean $1$ and variance $\sigma^2$. An algorithm then attempts to identify $v^*$ by querying nodes $q \in V$ and being told the length of the shortest path between $q$ and $v^*$ in graph $G$ weighted by $w$. We consider two settings: \emph{non-adaptive}, in which all query nodes must be decided in advance, and \emph{adaptive}, in which each query can depend on the results of the previous ones. Both settings are motivated by an application of the problem to epidemic processes (where the source is called patient zero), which we discuss in detail.

We characterize the query complexity when $G$ is an $n$-node path. In the non-adaptive setting, $\Theta(n\sigma^2)$ queries are needed for $\sigma^2 \leq 1$, and $\Theta(n)$ for $\sigma^2 \geq 1$. In the adaptive setting, somewhat surprisingly, only $\Theta(\log\log_{1/\sigma}n)$ are needed when $\sigma^2 \leq 1/2$, and $\Theta(\log \log n)+O_\sigma(1)$ when $\sigma^2 \geq 1/2$. This is the first mathematical study of source identification with time queries in a non-deterministic diffusion process.
\end{abstract}

\begin{keyword}graph algorithms \sep source location \sep noisy information \sep lower bounds\end{keyword}

\maketitle


\section{Introduction}
\label{sec:intro}
When a diffusion process spreads in a network, identifying its \textit{source}, i.e., the first node $v^*$ that started the diffusion, is a difficult and intriguing task. Depending on the application, the diffusion process can model a variety of real-world phenomena, including a worm in a computer network \cite{xie2005worm}, a false-rumor in a social network \cite{shah2011rumors}, or an epidemic process \cite{PintoTV12}. In epidemics, the identification of the source (also called patient zero) can be useful while planning our response as a society, since any information on the disease is crucial in uncertain times \cite{ingraham2021omicron} (e.g. source identification can aid contact tracing efforts \cite{carinci2020covid,russo2020tracing}, or it can give information on how dangerous the outbreak is in the case of a new mutation \cite{kandeelomicron,kupferschmidt2021did}). While often useful, we note that the identification of the source may be undesired in certain cases due to privacy concerns \cite{fanti2017deanonymization,prasad2020role}; we refer to \cite{fanti2016metadata,fanti2015spy} for theoretical work addressing this issue. Since the majority of the recent work in source identification is focused on disease spreading, we adopt the language of epidemics in the introduction. 

If we could observe the entire process of the epidemic propagation and know the precise infection times, identifying its source would be easy. Unfortunately, due to the costs of information collection and the overhead constraints, the data available for source identification is usually very sparse. There are two popular frameworks for source identification that mathematically formalize the data-sparsity constraint: in the setting with binary queries (also called snapshot-based setting), proposed by \cite{shah2011rumors}, every node reveals whether they are infected or not at some time $t$, whereas in the setting with time queries (also called sensor-based setting), proposed by \cite{PintoTV12}, a small subset of nodes, which we call hereafter \emph{queries} or \emph{query nodes}, reveal their infection time after the epidemic has spread to the entire network. The two frameworks are quite different, and in this paper we consider only the formulation with time queries, which is driven by the following three research questions of increasing complexity (as identified by \cite{zejnilovic2017sequential}):

\begin{romanenumerate}
\item given the answers to a fixed set of queries, how can we estimate the source?
\item given a the maximum number of queries that we can ask, which queries should we choose so that we can solve the estimation problem as accurately as possible?
\item if we want to correctly identify the source, what is the minimum number of queries that we need to ask? 
\end{romanenumerate}
We note that in theoretical papers, (ii) and (iii) are difficult to separate, however, in applied papers (ii) is often solved before (iii).

The answers to these three research questions depend on the specific assumptions on the epidemic model. The original paper \cite{PintoTV12} assumed that the epidemic spreads on a fixed (and known) network following the Susceptible-Infected dynamics \cite{newman2018networks} (also known as first passage percolation \cite{auffinger201550}) with a known edge-delay distribution to model the randomness in the spread of the epidemic. Additionally, it is assumed that the epidemic already has infected everyone in the network, and therefore every query node can reveal their precise infection time. Finally, in \cite{PintoTV12} it is assumed that query nodes also reveal the neighbor from whom they received the infection. This last assumption relies on information that is difficult to obtain in the context of epidemics, and most follow-up works have dropped it. The resulting modified version of \cite{PintoTV12}, which we call S1, is the most popular model in source identification with time queries, and has been the subject of a long list of papers, which address problem (i) \cite{hu2018localization,li2019locating,paluch2018fast,paluch2020locating,shen2016locating,tang2018estimating,xu2019identifying,zhu2016locating} and problem (ii) (see \cite{paluch2020optimizing}, and the references therein) by algorithms of heuristic nature. Our goal in this paper is to rigorously address problem (iii), albeit on a simpler network model, the path network. The epidemic model in our paper, which we call S2, is exactly the same as S1 with one additional assumption: we assume that the infection time of the source is known. We have several reasons to focus on the S2 model instead of the S1, including that S2 is easier to define and it is theoretically more appealing, as pointed out by several papers in the field \cite{zejnilovic2013network,OdorT19}, and that there is little difference between the number of queries required in the two models \cite{spinelli2018many}. We further discuss the differences between the two models and how our results can be extended to S1 in \ref{appendix:SD1}.

One of the main criticisms of source identification algorithms is that the number of queries required to find the source is large. Although this has not been shown theoretically before our paper, it is widely accepted that source identification is possible only if a constant fraction of the population are queried, which makes the developed algorithms unfit for real-world scenarios. To remedy the situation, a recent research direction suggests to give up the exact identification of the source and to replace it by the computation of confidence sets around it, which can be done with fewer queries (see \cite{bubeck2017finding, khim2016confidence} for the binary query and \cite{dawkins2021diffusion} for the time query settings). However, if our goal is to find the source exactly without querying a prohibitively large fraction of the population, the underlying model needs to be changed. A promising approach is to allow the queries to be selected adaptively to previous answers \cite{zejnilovic2015sequential,zejnilovic2017sequential}, which we call \textit{the adaptive setting}. Adaptive strategies have been studied by Spinelli, Celis and Thiran \cite{spinelli2017general,spinelli2017back} by simulations, and they show a large reduction in the number of required queries in real networks. It is important to quantify the magnitude of the reduction, because it is safe to assume that adaptive queries incur more expensive operational costs than the non-adaptive queries, and it is possible that placing $\Theta(\sqrt{N})$ queries adaptively still remains infeasible in practice. To be self-contained, we include simulations in Figure~\ref{fig:sim_fig}(a), which suggest that in the adaptive setting, the number of required queries grows slowly as a function of the network size, especially on geometric networks. But whether the growth is logarithmic or even lower is difficult to estimate from such plots. In this paper, we show that on the path network, we only need $\Theta(\log\log(N))$ queries, which is practically constant in most applications.

We are aware of only one other theoretical work that addresses the role of adaptivity in source identification \cite{OdorT19}, however, they only consider the case when the propagation delays are deterministic. In this case, if the first infection time is also known (model S2), problem (i) is trivial, problem (ii) is equivalent to finding a \emph{resolving set} in a graph (a set of nodes such that the distance to those nodes is enough to uniquely determine the identity of an unknown node) \cite{zejnilovic2013network}, and problem (iii) is equivalent to the \emph{metric dimension} problem \cite{slater1975leaves}. If the time when the infection starts is unknown (model S1), the corresponding combinatorial notion is the \emph{double metric dimension} \cite{chen2014approximability}. In the past few years, there has been a line of work on the metric and the double metric dimensions in the context of source identification of both of simulation-based \cite{spinelli2017general,spinelli2017effect} and of rigorous nature \cite{spinelli2018many}. In fact, the adaptive setting in the context of source identification was also first proposed with the deterministic propagation delay assumption \cite{zejnilovic2015sequential}. The adaptive (sequential) version of the metric dimension also exists in the combinatorial literature \cite{seager2013sequential}, and the result of Odor and Thiran \cite{OdorT19} says that in Erd\H{o}s-R\'enyi graphs, the difference between the adaptive and non-adaptive settings is only a constant factor, which suggests that adaptivity plays little role in this setting. On the other hand, Kim el. al. \cite{kim2015identifying} finds that the sequential metric dimension is $O(\Delta \log(n))$ on graphs with maximum degree $\Delta$ (Theorem 1.2), which is relatively low compared to the metric dimension of most random tree distributions, which tends to be $\Theta(n)$ \cite{mitsche2015limiting,komjathy2021}, suggesting a large role of adaptivity in these settings. Unfortunately, neither Erd\H{o}s-R\'enyi graphs, nor tree graphs are good models of real networks, which motivates the analysis of further network models. A recent work by Lichev, Mitsche and Pra\l{}at \cite{lichev2021localization} showed that the metric dimension of random geometric graphs of $n$ nodes in the unit square with connectivity range $r$ is $\Omega(\mathrm{max}(1/r^2, n^{2/3}r^{4/3}/\log^{1/3}(n)))$ for $1/\sqrt{n} \ll r \le 1/4$ (Theorem 5.2), which together with the upper bound on the sequential metric dimension by Kim el. al. \cite{kim2015identifying}, and with the result that the maximum degree of random geometric graphs is $O(nr^2)$ with high probability \cite{penrose2003random} suggests a large (polynomial in $n$) role of adaptivity. Simpler geometric graph models without randomness, such as the path graph and the grid graph do not exhibit such a big difference between the adaptive and non-adaptive settings (the metric dimension of the path and the grid is 1 and 2, respectively). It is an interesting question to investigate whether adaptivity plays an important role in these simple geometric graph models if randomness is introduced into the diffusion process (as in the S1 or S2 models) and not in the graph model, especially since in most applications of the source identification problem, the diffusion process is in fact stochastic. In Figure~\ref{fig:sim_fig}(b) we show that in the adaptive setting, the number of queries required to find the source depend on the stochasticity of the diffusion, as one would expect; the larger the stochasticity, the more queries are needed. Few works study in more detail the role of stochasticity in source identification, even empirically and in the non-adaptive setting alone (see e.g., \cite{spinelli2017effect}), and we are not aware of any previous work that has determined the exact dependence of the number of required queries on the randomness of the epidemic, neither in the adaptive nor in the non-adaptive setting.

 \begin{figure}
\begin{center}
  \includegraphics[width=0.9\textwidth]{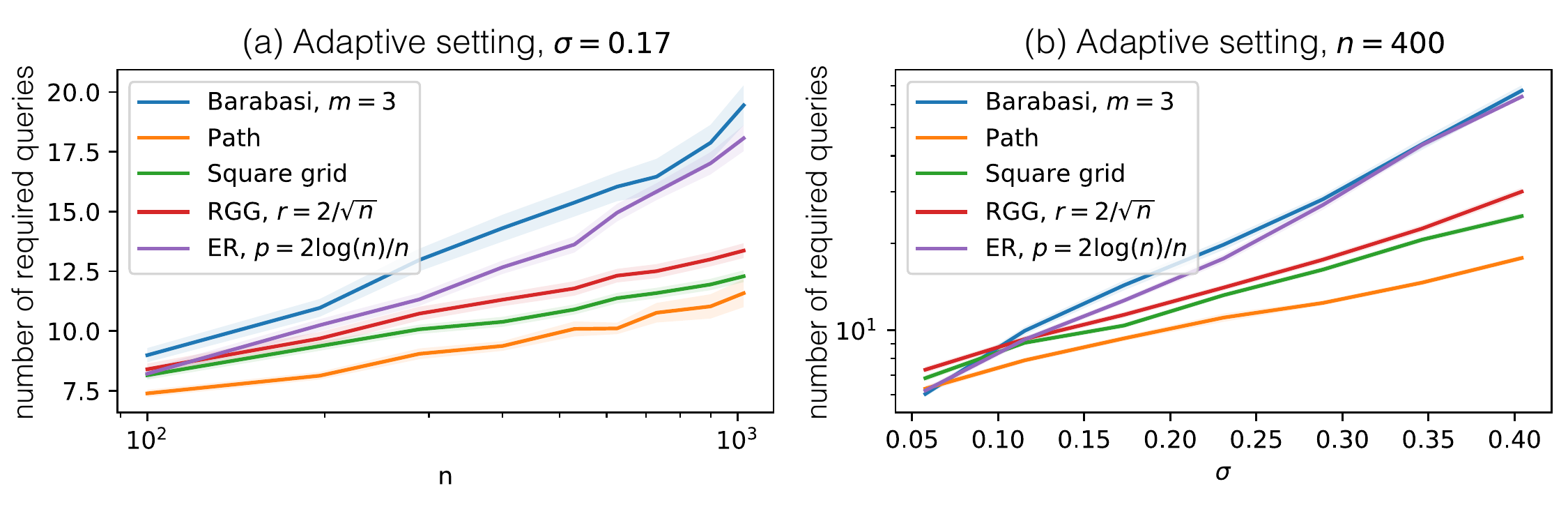}
  \caption{The number of queries required by the adaptive source identification algorithm called Max-Gain \cite{spinelli2017back} in the S1 source identification model as a function of (a) the network size $n$, and (b) the standard deviation of the edge-delays $\sigma$ in the following network models: Barab\'asi-Albert network \cite{barabasi1999emergence} with average degree $2m$, path graph, square grid, random geometric graphs \cite{penrose2003random} in the unit square with connection radius $r$, Erd\H{o}s-R\'enyi graphs \cite{erdHos1959renyi} with connection probability $p$. The network parameters, controlling the number of edges in the random networks, were chosen slightly above the connectivity threshold, so that all networks in the simulation were connected. Each datapoint is an average of 192 simulations and the confidence intervals are computed using the Student's t distribution-test. See \ref{appendix:sim_fig} for more details about how the Max-Gain was run to produce these simulation results.}
  \label{fig:sim_fig}
\end{center}
\end{figure}

In this paper, we compute the query complexity of the stochastic version of the source identification problem in the S2 model, on an $n$-node path, in both the adaptive and the non-adaptive settings. We chose the path network for its simplicity and for the insight that it may offer into the query complexity when the contact graph $G$ has an underlying geometry (see Figure \ref{fig:sim_fig}). The choice of the S2 model is motivated by its close resemblance to the S1 model (see \ref{appendix:SD1}), which is one of the standard models in the literature, but we mention that both the S1 and the S2 are abstract theoretical models, and they may have several currently unidentified applications. The propagation delays, which we call edge \emph{weights}, are chosen to be i.i.d Gaussian variables with unit mean and variance $\sigma^2$, following the model proposed in \cite{PintoTV12}. One should note that the weights can take negative values, especially when $\sigma$ is large, in contradiction with the non-negativity of propagation delays. Letting weights take negative values further accounts for the randomness in the incubation and reporting times. It makes source identification more challenging because of the absence of a deterministic, monotone dependence between the time of infection of a query node and its distance to the source. We also discuss how to extend our results to different propagation delay distributions, including ones that only take positive values, in \ref{sec:other-distributions}.

We find that for a wide range of $\sigma$, there is a drastic decrease in the number of required queries in the adaptive setting compared to the non-adaptive setting. For constant $\sigma$, which might be the most relevant range for practical purposes, the number of required queries is $\Theta(n)$ in the non-adaptive setting and $\Theta(\log\log n)$ in the adaptive setting. For the more precise dependence of our results on the standard deviation~$\sigma$, we refer to Section~\ref{sec:results}.

\subsection{Related work in Information Theory} The role of adaptivity is a central question in several fields in computer science, including property testing \cite{canonne2018adaptivity}, information theory \cite{chiu2016sequential,lalitha2017measurement} and learning theory \cite{settles2009active}. The most well-known example is perhaps binary search on a line, where being adaptive reduces the number of queries from $n$ to $\log_2(n)$. Such a significant decrease in the query complexity of standard binary search is possible because the queries are very constrained; we can only ask whether the target is to the left or to the right of the queried vertex. If instead we are allowed to query any subset for containment of the target without any noise, then there is no difference between the adaptive and the non-adaptive query complexities (this is the well-known BarKochba or 20 Questions Game between two players, where the first player comes up with an item that the other player must identify by asking (in principle up to 20) yes-no questions). Indeed, $\log_2(n)$ questions are necessary because every answer carries only binary information, and the target can be found by $\log_2(n)$ non-adaptive questions by querying each digit of the binary representation of the index of the target vertex. One way to reintroduce a difference between the adaptive and non-adaptive cases in the 20 Questions Game is to corrupt the answers by a query dependent noise, which was proposed initially by R\'enyi \cite{renyi1961problem}, and has been studied by several follow-up works, including \cite{chiu2016sequential,lalitha2017measurement,zhou2021resolution}.

The problem setup of \cite{lalitha2017measurement} has a close resemblance to our setup. In both cases, the search is done on a line, and the answers are corrupted by Gaussian noise, the variance of which depends on how close the query was to finding the target. The notable differences between \cite{lalitha2017measurement} and our setup are that:
\begin{romanenumerate}
\item we have more restrictive queries (one query in our setup is a single node (hence there are $n$ possible queries), whereas one query in the setup of \cite{lalitha2017measurement} is a subset of the nodes (hence there are $2^n$ possible queries))
\item we receive more information (we receive a noisy version of the distance between the queried vertex and the source, whereas in \cite{lalitha2017measurement} they receive a noisy binary answer for belonging to the query set)
\item in our case the noise that corrupts the answers is not independent between queries. 
\end{romanenumerate}

Because of these differences, our proof techniques and our results are also different from \cite{lalitha2017measurement}. The main tool in \cite{lalitha2017measurement} for the adaptive upper bound is the \textit{posterior matching scheme}. Roughly speaking, posterior matching produces queries that split the line into two approximately equal-weight subsets weighed by the posterior. In particular, there is no restriction on the queries produced by the posterior matching scheme, and therefore it is not applicable in our case (see (i) above). We also note, that as opposed to our setup, in \cite{lalitha2017measurement}, the geometry of the search space does not play an important role; any subset of the vertices of the line can be a query and the answers are insensitive to the distances between the queried vertices and the target vertex. For this reason, the usual geometry-insensitive information-theoretic notions (such as the entropy of the posterior) that work well in \cite{lalitha2017measurement}, cannot be used in our setup (see Section \ref{sss:mu} for our notion of ``progress''). In terms of results, for constant $\sigma$, both the non-adaptive and adaptive query complexities are found to be $\Theta(\log(n))$ in \cite{lalitha2017measurement}, which is in sharp contrast with our finding of $\Theta(n)$ and $\Theta(\log\log n)$ in the non-adaptive and adaptive settings, respectively. Finally, we note that the paper \cite{lalitha2017measurement} features results about the expected query complexity of the search algorithms, whereas we give query complexity bounds that hold with any constant failure (or success) probability.

\subsection{Related Work in Theoretical Computer Science} Extensions of binary search to graphs have been proposed on numerous occasions \cite{EmamjomehKS16,FeigeRPU94, KarpK07, OnakP96}. Of these, perhaps \cite{EmamjomehKS16} has the closest connections with source identification with time queries. In this extension, a target vertex at an unknown position in a general graph is to be identified by adaptively querying vertices. A queried vertex can only respond whether it is the target or not, and if not, it indicates the edge on a shortest path between itself and the target. In the noiseless setting, queries always report the correct answer, whereas in the noisy setting, queries report a correct answer independently with probability $1/2 < p < 1$. In a sense, noisy binary search is an adaptive version of the source identification model proposed by \cite{PintoTV12} where we would keep the ``who infected me'' information and drop the time information instead, with the notable difference that in noisy binary search the noise that corrupts the answers is independent between queries. Since the information that a queried node can provide is its distance and/or its direction towards the source, adaptive source identification and noisy binary search on a line can be seen as ``duals'' of each other, in the sense that the former collects a noisy estimate of the distance whereas the latter collects a noisy estimate of the direction to the source. In the latter case, the adaptive query complexity is found to be $\Theta(\log n)$ for constant $p$ in \cite{EmamjomehKS16,lalitha2017measurement}. Comparing this result with our result of $\Theta(\log\log n)$ for the number of required queries in the adaptive case indicates that the distance to the target is far more informative than the direction, at least on the path graph. On different graphs, notably on star graphs, the distance is expected to be less informative than the direction. We limit the study of stochastic source identification in this paper to the path topology because of the complexity of the computations, and we leave the study in other graph topologies for further work.

%
\section{The model}

A known graph $G = (V,E)$ is fixed in advance. 
First, a node $v^* \in V$ is picked uniformly at random, which is called the \emph{source}. Then for each edge $e\in E$, a weight $w(e)$ is drawn independently from some distribution $\W$.
Both $v^*$ and the weights $w(e)$ are hidden from the identification algorithm.
Once they are drawn, the algorithm will start making queries to identify $v^*$.
To perform a query, the algorithm chooses a query node $q\in V$, and receives an \emph{answer} with value $\dw(v^*,q)$: the shortest distance between $v^*$ and $q$ in graph $G$ with edges weighted by $w$. 

 \begin{figure}
\begin{center}
  \includegraphics[width=\textwidth]{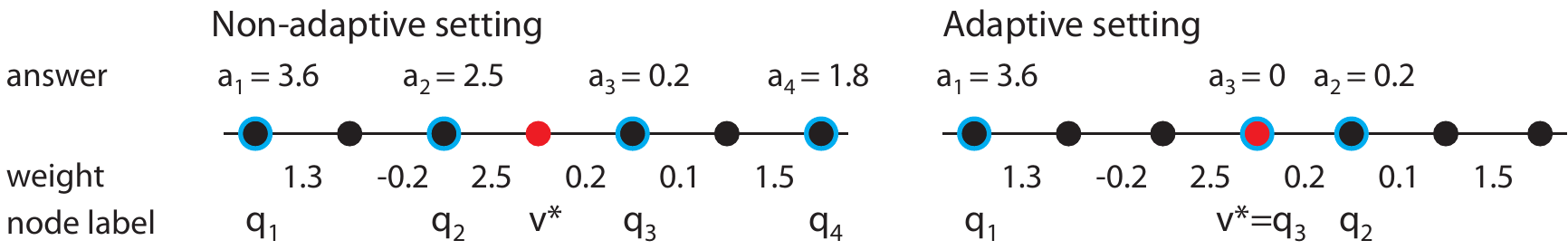}
  \caption{An illustration of the model. Query nodes are marked blue and the source is marked red.}
  \label{illustration}
\end{center}
\end{figure}

We distinguish between two settings. In the \emph{non-adaptive} setting, the algorithm has to submit all of its queries in one batch, then receives all answers, and has to make a prediction. In the \emph{adaptive} setting, the algorithm can make queries one by one, and adapt the choice of the next query based on previous answers. In both settings, the weights $w(e)$ are only drawn once, at the very beginning, and will not change between queries. As we will see, the difference between these two settings will have a huge impact on the number of queries that the algorithm needs, because in the adaptive setting the algorithm will be able to quickly zero in on the source $v^*$ and receive progressively more refined information.

In this paper, we treat the case where $G$ is an $n$-node path, with nodes numbered from 1 to $n$ (so $V = \{1,\ldots,n\}$). We will assume $n \geq 3$ for convenience. We will often say that a node $u$ is to the ``left'' (respectively, ``right'') of a node $v$ if $u<v$ (resp., $u>v$). For the weight distribution, we choose $\W=\Normal(1,\sigma^2)$: a normal of mean 1 and variance $\sigma^2 > 0$, where $\sigma$ is a parameter of the model. We choose a normal distribution because (i) by the Central Limit Theorem, the distances between faraway nodes converge to a normal distribution for most edge-delay distributions $\W$, while close-by nodes can be searched via exhaustive search (we discuss how to extend our results to these other distributions in~\ref{sec:other-distributions}), (ii) there are several properties (e.g. additivity, tight concentration) of the normal distribution that simplify our calculations. 

\section{Results and Discussion}
\label{sec:results}

\subsection{Non-Adaptive Setting}

We present matching upper and lower bounds for the non-adaptive setting. 
\begin{theorem}
\label{thm:static-ub}
For any failure probability $0<\delta<1/2$, there is a deterministic algorithm for non-adaptive source identification on the $n$-node path which asks $\min(O(1+n\sigma^2\log(1/\delta)), n)$ queries and identifies $v^*$ correctly with probability at least $1-\delta$, even if $v^*$ is chosen adversarially instead of drawn uniformly at random.
\end{theorem}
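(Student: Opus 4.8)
The plan is to split into two regimes according to the size of $\sigma^2\log(1/\delta)$. If $\sigma^2\log(1/\delta)$ exceeds a fixed absolute constant, then $n = O(1+n\sigma^2\log(1/\delta))$, and it suffices to query every node: on a path the answer from a query node $q$ is the sum of the edge weights along the unique path between $q$ and $v^*$, which equals $0$ exactly when $q=v^*$ and is an absolutely continuous random variable --- hence nonzero almost surely --- when $q\ne v^*$, so the (almost surely unique) query returning $0$ reveals $v^*$ with probability $1$. Assume from now on that $\sigma^2\log(1/\delta)$ is below that constant.

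In this regime, set $\Delta := \floor{c/(\sigma^2\log(1/\delta))}\ge 1$ for a small absolute constant $c$, place query nodes at $1,1+\Delta,1+2\Delta,\ldots$ together with node $n$, and list them as $q_0<q_1<\ldots<q_m$ with gaps $g_j := q_{j+1}-q_j \le \Delta$; then $m = O(n/\Delta) = O(1+n\sigma^2\log(1/\delta))$ queries are used (never more than $n$, since for $\Delta=1$ this just queries all nodes). Call $[q_j,q_{j+1}]$ the $j$-th block. The decoder is deterministic: if some queried node returns answer exactly $0$, output it; otherwise $v^*$ lies strictly inside a unique block, and the decoder outputs $q_j+\round{a_{q_j}}$ for some block $[q_j,q_{j+1}]$ that passes the consistency test $|a_{q_j}+a_{q_{j+1}}-g_j|\le 1/2$ (if several pass, pick any; if none, output anything).

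The analysis rests on two tail estimates. If $v^*\in[q_i,q_{i+1}]$ then $a_{q_i}$ and $a_{q_{i+1}}$ are the weight sums over the two complementary parts of that block's edge set, so $a_{q_i}+a_{q_{i+1}}\sim\Normal(g_i,g_i\sigma^2)$ exactly; since $g_i\le\Delta$ and $\Delta\sigma^2 = O(1/\log(1/\delta))$, a Gaussian tail bound shows that the correct block passes its test and that $\round{a_{q_i}} = v^*-q_i$, each failing with probability only $\delta^{\Omega(1)}$ (with a large exponent when $c$ is small). Conversely, if $v^*$ is at hop-distance $D\ge 1$ from the block $[q_j,q_{j+1}]$, then $a_{q_j}+a_{q_{j+1}}$ has mean $g_j+2D$ and variance $O((D+\Delta)\sigma^2)$, so the test passes with probability at most $\exp(-\Omega(D/\sigma^2)) \le \delta^{\Omega(D)}$ by the choice of $\Delta$. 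Because the blocks are linearly ordered with gaps of length at least $1$, the block $k$ positions away from the one containing $v^*$ is at hop-distance at least $k$, so the probability that any wrong block passes its test is at most $2\sum_{k\ge 1}\delta^{\Omega(k)} = O(\delta)$; adding the two failure events for the correct block and shrinking $c$ if needed makes the total failure probability at most $\delta$. Since the query set and decoder depend only on $n$, $\sigma$, $\delta$ and $v^*$ is fixed throughout, the guarantee holds even for adversarial $v^*$, giving the claimed $\min(O(1+n\sigma^2\log(1/\delta)),n)$ bound.

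The main obstacle is precisely this union bound over blocks. There are $\Theta(n\sigma^2\log(1/\delta))$ of them --- unboundedly many relative to $\log(1/\delta)$ --- so a flat per-block error bound of order $\delta^{\Omega(1)}$ would not survive $n\to\infty$. What makes the argument work is that the failure probability of the consistency test decays geometrically in the hop-distance of the block from the source: both the mean offset ($2D$) and the standard deviation ($O(\sqrt D\,\sigma)$) of $a_{q_j}+a_{q_{j+1}}-g_j$ scale with the distance $D$, producing an $\exp(-\Omega(D/\sigma^2))$ tail, so only the $O(1)$ blocks within $O(\Delta)$ hops of $v^*$ contribute non-negligibly. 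The remaining care is quantitative: the constant $c$ in $\Delta=\floor{c/(\sigma^2\log(1/\delta))}$ must be large enough that $n/\Delta = O(1+n\sigma^2\log(1/\delta))$, yet small enough that $1/(\Delta\sigma^2)\ge\tfrac{1}{c}\log(1/\delta)$ has a coefficient large enough for the geometric series of tail terms to sum below $\delta$ uniformly in $\delta<1/2$ and $n$.
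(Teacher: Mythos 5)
Your proof is correct, and while it shares the paper's high-level skeleton (equispaced queries with spacing $\Theta(1/(\sigma^2\log(1/\delta)))$, a trivial ``query everything'' fallback, and a geometric-decay argument to survive the union bound over $\Theta(n\sigma^2\log(1/\delta))$ candidates), your decoder and key lemma are genuinely different. The paper locates the source's block by taking the query node with the \emph{smallest answer} and then checking its left neighbour; the crux of its analysis (Claim~\ref{claim:end-of-static-ub}(a),(b)) is that all partial sums of weights emanating from the nearest query node are positive, which costs a geometric series $\sum_x e^{-x/(2\sigma^2)}$. You instead locate the block by a symmetric \emph{sum-consistency test} $|a_{q_j}+a_{q_{j+1}}-g_j|\le 1/2$, exploiting that for the true block this statistic is exactly $\Normal(g_j,g_j\sigma^2)$ while for a block at hop-distance $D$ it is $\Normal(g_j+2D,(g_j+4D)\sigma^2)$, so the mean offset $2D$ kills distant blocks. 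Your route avoids the paper's slightly delicate left/right case analysis around $\sSmall$ and needs no monotonicity/positivity of partial sums; the paper's route avoids reasoning about the joint statistic of two answers per block. One small quantitative slip: for $1\le D\ll\Delta$ the rejection probability of a wrong block is $\exp(-\Omega(D^2/(\Delta\sigma^2)))$ rather than $\exp(-\Omega(D/\sigma^2))$ as you state, but since $1/(\Delta\sigma^2)\ge\log(1/\delta)/c$ and $D^2\ge D$, the bound $\delta^{\Omega(D/c)}$ you actually use still holds in all regimes, so the geometric series and the conclusion are unaffected.
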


\begin{theorem}
\label{thm:static-lb}
For any success probability $p>1/n$, any (potentially randomized\footnote{Since the distribution for the identity of the source is fixed, rather than adversarial, randomness in the algorithm is not useful (as long as we are not considering running time): the algorithm should simply choose the set of queries that maximizes the probability of finding the source, and output the likeliest source given the answers it receives.}) algorithm for non-adaptive source identification on the $n$-node path
must ask $\Omega(1+\min(p^3n\sigma^2, p^2n)) = \Omega_p(1+\min(n\sigma^2, n))$ queries to identify $v^*$ correctly with probability at least $p$ when $v^*$ is drawn uniformly at random.

\end{theorem}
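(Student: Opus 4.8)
The plan is to reduce the source identification problem to a collection of independent hypothesis-testing problems, one per query, and show that each query can only distinguish sources that are close to it. Fix a (deterministic, by the footnote) non-adaptive algorithm with query set $Q = \{q_1, \dots, q_k\}$. The answer to query $q_i$ when the source is $v^*$ is $\dw(v^*, q_i) = \sum_{e} w(e)$ summed over the $|v^* - q_i|$ edges on the path between them, so it is distributed as $\Normal(|v^* - q_i|, |v^* - q_i|\sigma^2)$. Crucially, the answers are \emph{not} independent across queries (they share edge weights), but they have a clean conditional structure: conditioned on the location of $v^*$, the vector of answers is jointly Gaussian with means $|v^* - q_i|$ and a covariance determined by overlaps of path segments. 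The key observation is that a query $q_i$ can only hope to separate two candidate sources $u, u'$ if the \emph{distributions} of the answer differ enough — and since the answer to $q_i$ is $\Normal(|u - q_i|, |u-q_i|\sigma^2)$ versus $\Normal(|u'-q_i|, |u'-q_i|\sigma^2)$, these two Gaussians have total variation distance $O(|u-u'|/\sqrt{d\,\sigma^2})$ roughly, where $d \approx |u-q_i|$, when $u,u'$ are both at distance $\approx d$ from $q_i$.

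First I would set up the right ``hard instance'': partition the path into $\Theta(n)$ consecutive blocks of some length $L$ to be chosen, and argue that to succeed with probability $p$, the algorithm must place queries so that for a $p$-fraction of blocks $B$, the answers reliably pinpoint $v^*$ within $B$. Since a query at $q_i$ at distance $d$ from block $B$ contributes an answer with noise standard deviation $\sqrt{d}\,\sigma$, and the block has width $L$, a single such query carries only $O(L/(\sqrt{d}\sigma))$ ``bits'' of useful discrimination within the block by a Fano / Le Cam two-point argument (and at most $O(1)$ regardless). Summing the squared discrimination capacities of all queries against a fixed block and invoking a Fano-type inequality, to identify which of the $L$ nodes in $B$ is the source we need $\sum_i \min(1, L^2/(d_i \sigma^2)) = \Omega(\log L)$, hence $\Omega(1)$ queries contributing at distance $d_i = O(L^2/\sigma^2)$, or at distance $O(L^2)$ if $\sigma^2 \geq 1$. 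Choosing $L$ appropriately (a small constant times $1/\sqrt{?}$ — roughly $L = \Theta(1/(p \cdot \text{stuff}))$ so that a constant fraction of blocks have no nearby query) and counting: there are $\Theta(n/L)$ blocks, each needs a query within distance $O(L^2/\sigma^2)$, and each query is within that distance of only $O(L^2/(\sigma^2 L)) = O(L/\sigma^2)$ blocks, giving $k = \Omega\big((n/L)/(L/\sigma^2)\big) = \Omega(n\sigma^2/L^2)$. The $p$-dependence enters because we only need a $p$-fraction of blocks covered and because the two-point bound loses a factor of $p$ when we only demand success probability $p$; tracking these carefully should yield the $\Omega(p^3 n\sigma^2)$ term. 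The regime $\sigma^2 \geq 1$ is the ``direction-dominated'' one where even a faraway query still reveals on which side of $q_i$ the source lies (the sign information has $\Theta(1)$ reliability independent of $\sigma$), so the relevant bound becomes $k = \Omega(p^2 n)$; I would handle it by the same block argument but using that each query partitions the line into two halves and contributes only $O(1)$ bits total, combined with the fact that to locate $v^*$ to within a block of length $L$ by sign information alone requires $\Omega(\log(n/L))$ well-placed queries — actually here it is cleaner to note that the non-adaptivity means sign queries behave like a fixed comparison tree, and a volume/counting argument on which blocks are ``resolved'' gives $\Omega(n)$ directly for constant $p$, with the $p^2$ coming from the two-point analysis of a single noisy sign.

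The main obstacle I anticipate is \textbf{handling the correlations between answers correctly}. The clean per-query discrimination bound is for a single Gaussian observation; for the joint vector one must argue that correlation does not \emph{help} — intuitively it cannot, since the algorithm could always reconstruct independent copies by conditioning, but making this rigorous requires either (a) a data-processing / chain-rule argument bounding the mutual information $I(v^*; \text{answers})$ by $\sum_i I(v^*; \text{answer}_i \mid \text{answers}_{<i})$ and then bounding each conditional term by the unconditional single-query capacity (which holds because conditioning on other answers only reveals information about \emph{edge weights}, not about $v^*$ beyond what answer $i$ already constrains — this needs care since edge weights and $v^*$ jointly determine answers), or (b) restricting attention to a single block $B$ at a time and a single ``decisive'' query for that block, reducing to genuinely independent two-point tests across disjoint blocks by a clever choice of which nodes in each block to put in the hypothesis pair. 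Approach (b) is more robust and is the one I would pursue: for each block pick two adjacent candidate sources, note the answer to any query is an affine function of edge weights, and the only edge whose weight sign matters for distinguishing the pair is the single edge between them — then a Neyman–Pearson / Le Cam computation on that one-dimensional projection gives the bound cleanly, and disjointness of the ``decisive edges'' across blocks gives independence. The remaining routine work is the optimization over $L$ and bookkeeping of the $p$ factors.
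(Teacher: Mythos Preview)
Your block decomposition matches the paper's, but two pieces do not go through.

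First, the $\sigma^2 \geq 1$ paragraph rests on a misreading of the model: the answer $\dw(v^*,q)$ is a single scalar distributed as $\Normal(|v^*-q|,|v^*-q|\sigma^2)$, the same whether $v^*$ lies left or right of $q$, so there is no ``sign information'' revealing direction. The paper does not split into regimes at all; the $\min$ in the statement falls out of one parameter choice $d=\max(\Theta(k^2/\sigma^2),k)$.

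Second, and more seriously, approach (b) cannot deliver the per-block estimate you need. To make the uncovered contribution at most $pn/2$ you must show, for a block of $k$ candidates all at distance $\geq d$ from every query, that $\sum_{i=1}^{k} p(v+i) \leq pk/2$, i.e.\ the \emph{average} success per candidate is at most $p/2$. A Le Cam two-point bound between adjacent $v,v+1$ only gives $p(v)+p(v+1) \leq 1+\mathrm{TV}$; since random guessing already achieves average $1/2$ on two hypotheses, no two-point argument pushes the per-candidate average below $1/2$, let alone below $p/2$. You need a genuine $k$-hypothesis bound. The paper obtains it by first reducing to the two flanking answers $(a_\ell,a_r)$ via exactly the sufficient-statistic observation you gesture at (any other answer is $a_\ell$ or $a_r$ plus noise independent of $v^*$), then writing the MAP success on the block as $\iint \max_{1\le i\le k} g_{v+i}(x,y)\,dx\,dy$ and dominating all $k$ densities $g_{v+i}$ pointwise by a single rescaled Gaussian, which integrates to give
\[
\sum_{i=1}^{k} p(v+i) \leq \frac{2(d+k)\,e^{k^2/(2(d+k)\sigma^2)}}{d}.
\]
Taking $k=\Theta(1/p)$ and $d=\max(\Theta(k^2/\sigma^2),k)$ makes this $\leq pk/2$, and the covering count $|Q|=\Omega(pn/d)$ then yields the $p^3 n\sigma^2$ and $p^2 n$ terms. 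A Fano argument could in principle replace the pointwise-pdf bound, but then you must bound $I(v^*;a_\ell,a_r)$ directly, which is comparable work; your mutual-information chain rule over all queries is both unnecessary (after the sufficient-statistic reduction) and, as you suspected, not obviously valid because conditioning on earlier answers reveals shared edge weights.
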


We can interpret the results in the following way. Intuitively, answers are ``accurate'' up to distance roughly $1/\sigma^2$: indeed, for a query node at distance $d$ from the source, the mean of the received answer is $d$ and the variance is $d\sigma^2$, so if $d= \omega(1/\sigma^2)$, the variance becomes $\omega(1)$ and it becomes impossible to deduce the real distance with constant probability. Therefore, instead of thinking of receiving stochastic answers, we can imagine that we receive the exact distance, but only if this distance is $\leq 1/\sigma^2$ (and otherwise they would give no answer at all). That is, we think of the queries as effective within a ``limited range'' $1/\sigma^2$. In that model, it is clear that $\Theta(1+\min(n\sigma^2, n))$ queries are necessary and sufficient, which is exactly what we find in Theorems~\ref{thm:static-ub} and~\ref{thm:static-lb}.

Our proofs also build on the intuition of query nodes with limited range. In the proof of Theorem \ref{thm:static-ub}, we show that if the query nodes are spaced equally (and deterministically), every $v^*$ is in the ``range'' of the closest two query nodes, meaning that once rounded to the closest integer, they give the correct answer with high probability.
This probability is based only on the randomness of the edge weights $w(\cdot)$, and is high no matter where $v^*$ ends up, hence we can identify $v^*$ even without assuming any prior on its distribution. This extra guarantee was not required by the model, but it comes naturally without any additional cost.

In contrast, in Theorem \ref{thm:static-lb} we show that any algorithm that succeeds with constant probability must use $\Omega(1+\min(n\sigma^2, n))$ queries even if the algorithm is allowed to take advantage of the assumption that $v^*$ is uniformly distributed over the nodes $V$. The proof works by showing that if one uses fewer queries, then most of the nodes are so far away from the closest query node that they are indistinguishable from other close-by nodes.

While in this paper we only consider the path graph, these results indicate that limited range query nodes might be a good proxy for non-adaptive source identification in other graphs as well, which has not been thoroughly explored in the source identification literature.

\subsection{Adaptive Setting}

The adaptive setting is more complex and more interesting than the non-adaptive case.
For instance, it is not obvious anymore how the queries should be selected.
We may consider an algorithm that, at each decision, selects a query node based on the posterior probabilities that the source is at some node: we call \emph{posterior} at a node $v$ the probability that the source is $v$, conditioned on the answers made so far. However, those posteriors might be hard to compute, and might not be well-behaved as a function of $v$ (for example, they might not be unimodal).
Fortunately, as long as the variance of the edge-delays is relatively low, the answers that we see are concentrated around their expected value and we can form a fairly good idea about what the posteriors might look like.
This inspires the following algorithm, which (for intuition) can be seen as a procedure that computes at each step the posteriors approximately, and selects the next query node close to the node with the highest posterior (the node that is most likely to be the source).

\begin{theorem}
\label{thm:sequential-ub}
For any failure probability $0<\delta<1/2$, there is a deterministic algorithm for adaptive source identification on the $n$-node path which uses
\[
\begin{cases}
O(\log(1+\log_{1/\sigma} n) + \polylog(1/\delta))\text{ queries if $\sigma^2 \leq 1/2$}\\
O(\log \log n + \sigma^2\cdot \polylog(\sigma, 1/\delta))\text{ queries if $\sigma^2 \geq 1/2$}
\end{cases}
\]
and identifies $v^*$ correctly with probability at least $1-\delta$, even if $v^*$ is chosen adversarially instead of drawn uniformly at random.
\end{theorem}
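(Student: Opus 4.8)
The plan is to design a recursive "zoom-in" algorithm that maintains an interval $I \subseteq \{1,\dots,n\}$ known to contain $v^*$, and repeatedly shrinks $I$ by a large factor using a bounded number of queries per round. The key observation is the one already highlighted after Theorem~\ref{thm:static-lb}: a query node $q$ at distance $d$ from $v^*$ returns an answer with mean $d$ and variance $d\sigma^2$, so queries placed at distance $O(1/\sigma^2)$ of $v^*$ reveal that distance essentially exactly (after rounding), and more usefully, a single batch of $O(1)$ well-placed queries inside or near $I$ lets us locate $v^*$ within $I$ up to an additive error of roughly $\sqrt{|I|\sigma^2 \log(1/\delta')}$ (the standard deviation of the answer from a query at distance $\sim|I|$), with failure probability $\delta'$. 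So from an interval of length $L$ we pass to an interval of length $O(\sqrt{L\sigma^2\,\mathrm{polylog}})$ — a quadratic (square-root) contraction — which is exactly the kind of recursion whose depth is $\log\log$ of the starting size.

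Concretely, for $\sigma^2 \le 1/2$ I would: (1) query the two endpoints (and maybe the midpoint) of the current interval $I$ of length $L$; (2) use the answers, which concentrate around the true distances, to pin down a sub-interval $I'$ of length $O(1 + \sigma\sqrt{L \cdot \mathrm{polylog}(1/\delta')})$ containing $v^*$ with probability $1-\delta'$; (3) recurse on $I'$. Writing $L_0 = n$ and $L_{k+1} = C\sigma\sqrt{L_k \log(1/\delta')}$ (plus a constant), the recursion reaches $L_k = O(1)$ — at which point we can finish by exhaustive/limited-range querying as in Theorem~\ref{thm:static-ub} — after $k = O(\log(1+\log_{1/\sigma} n))$ rounds, because each step roughly halves $\log_{1/\sigma} L_k$. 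Setting $\delta' = \delta / O(\text{number of rounds})$ and union-bounding over rounds gives overall failure probability $\delta$, and since $\log(1/\delta') = \log(1/\delta) + O(\log\log\cdots)$, each round costs $O(1)$ queries while the size of the final exhaustive search and the error terms contribute the $\mathrm{polylog}(1/\delta)$ additive overhead. For $\sigma^2 \ge 1/2$ the range $1/\sigma^2$ of a query is $\le 2$, so the first phase must instead be a coarse, brute-force-flavored reduction: here $\sigma$ is essentially a constant (or larger), answers are too noisy to use the geometry directly, and I would reduce the interval to size $\mathrm{poly}(\sigma)$ using $O(\log\log n)$ rounds of a noisier version of the same square-root contraction (each round now needs to average $\mathrm{polylog}$ repeated-style queries or place $\mathrm{poly}(\sigma)$ query nodes to beat the variance), then finish with the non-adaptive algorithm of Theorem~\ref{thm:static-ub} on the residual interval of size $O(\sigma^2 \cdot \mathrm{polylog})$, costing the stated $\sigma^2 \cdot \mathrm{polylog}(\sigma, 1/\delta)$ extra queries.

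Two points need care. First, the "posteriors" mentioned in the text are not actually unimodal and need not be easy to compute, so I would avoid reasoning about them directly; instead I would work with the concrete event that all answers in a round lie within their typical $O(\sqrt{\text{variance}\cdot\log(1/\delta')})$ window of their means, and show that on this event the deterministic decision rule (intersect the consistent sub-intervals) correctly narrows $I$. Second, the edge weights are drawn once and never resampled, so answers across rounds are \emph{not} independent — an answer from query $q$ in an early round shares weight variables with answers from later rounds near $v^*$. I expect this dependence to be the main obstacle. The way around it is that the algorithm only ever \emph{uses} each query's answer for a single comparison, and the Gaussian concentration bound for an individual answer holds regardless of what else has been observed; a clean way to make this rigorous is to condition on the weights on edges already "traversed" by previous shortest paths and observe that the incremental weight from the current interval to a new query node is a fresh sum of Gaussians, independent of the conditioning, with the stated mean and variance. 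With that, a per-query tail bound plus a union bound over all $O(\log\log n + \mathrm{polylog})$ queries suffices, and the independence issue never actually bites. The adversarial-source guarantee comes for free because the analysis never uses the uniform prior — every bound is a worst-case statement over the location of $v^*$ given the randomness of $w(\cdot)$.
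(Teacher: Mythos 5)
Your overall strategy is the same as the paper's: maintain an interval known to contain $v^*$, query its endpoint(s), use Gaussian concentration to contract the interval from length $L$ to length $O(\sigma\sqrt{L}\cdot\mathrm{polylog})$, iterate $O(\log\log)$ times, and finish by exhaustively testing the residual candidates. The recursion analysis (tracking $\log(L_k/\sigma^2)$, which halves each round) is exactly the paper's Claim~\ref{claim:asymptotics-ub}.

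The one step that would fail as written is your treatment of the dependence between answers. You assert that ``the Gaussian concentration bound for an individual answer holds regardless of what else has been observed'' and that the new answer involves ``a fresh sum of Gaussians, independent of the conditioning.'' Neither is true here: each new query node lies \emph{inside} the current interval, so the path from it to $v^*$ is a sub-path of the one from the previous endpoint to $v^*$, and the new answer $\dw(v^*,q_{j+1})$ is a partial sum of weights whose total has already been observed --- there is no fresh increment, and the conditional law of the new answer given the history is not $\Normal(d,d\sigma^2)$. Moreover, a union bound over the queries actually asked is not valid because which queries are asked is itself a function of $w$. The clean resolution (Lemma~\ref{lemma:concentration-ub} in the paper) is to take a single union bound over all $n$ possible query positions \emph{before} any adaptivity: with probability $1-\delta$ over $w$ alone, $\dw(v^*,q)\in|v^*-q|\pm C(\delta)\sigma\sqrt{|v^*-q|}\ln(1+|v^*-q|)$ holds simultaneously for every $q$ (the series $\sum_d (1+d)^{-\ln(1+d)/2}$ converges, so $C(\delta)=O(\sqrt{\log(1/\delta)})$ suffices independently of $n$). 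On that single good event the entire deterministic contraction goes through, with no per-round $\delta'$ and no conditioning argument needed. Separately, your $\sigma^2\ge 1/2$ phase is overcomplicated: repeating a query returns the identical answer (the weights are fixed), and no per-round repetition or $\mathrm{poly}(\sigma)$ batch is needed --- the same one-query-per-round contraction works as long as the interval length exceeds $\sigma^2\cdot\mathrm{polylog}$, and paying $\mathrm{polylog}$ or $\mathrm{poly}(\sigma)$ queries in each of the $\log\log n$ rounds would actually overshoot the stated additive bound. Finishing with exhaustive querying (only $v^*$ answers $0$, almost surely) or with Theorem~\ref{thm:static-ub} on the residual interval are both fine.
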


To show optimality, as we did in the non-adaptive case, we show that no algorithm can succeed without asking a large number of queries, even under the assumption that $v^*$ is uniformly distributed over $V$.

\begin{theorem}
\label{thm:sequential-lb}
For any success probability $p>1/n$ and any $n \geq \Theta_p(\max(\sigma^3, 1))$, any (potentially randomized) algorithm for adaptive source identification on the $n$-node path must use
\[
\begin{cases}
\Omega_{p}(1 + \log (1+\log_{1/\sigma} n))\text{ queries if $\sigma^2 \leq 1/2$}\\
\Omega_{p}(\log \log n)\text{ queries if $\sigma^2 \geq 1/2$}
\end{cases}
\]
to identify $v^*$ correctly with probability at least $p$ when $v^*$ is drawn uniformly at random.
\end{theorem}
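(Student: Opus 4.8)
The plan is to prove a lower bound on the number of adaptive queries by an information-theoretic/potential argument: no matter what query an algorithm asks, the answer cannot shrink the set of ``plausible'' sources too fast. The key is to identify the right progress measure. Since the algorithm starts knowing only that $v^*$ is uniform over $n$ nodes, and must end up certain (up to failure probability $1-p$), we need to show that each query reduces uncertainty by at most a bounded multiplicative-of-a-logarithm factor, which forces $\Omega(\log\log n)$ rounds in the regime $\sigma^2 \ge 1/2$ and $\Omega(\log(1+\log_{1/\sigma} n))$ in the regime $\sigma^2 \le 1/2$. Concretely, I would track, after $t$ queries, the \emph{width} $W_t$ of an interval of nodes that are still roughly equally likely given the answers seen; the claim is $W_{t+1} \gtrsim \sqrt{W_t}$ up to constants (in the high-$\sigma$ regime), so $W_t$ cannot drop from $n$ to $O(1)$ in fewer than $\log\log n - O(1)$ steps.

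The main step is a \emph{single-query indistinguishability lemma}: fix a query node $q$ and an interval $I$ of candidate sources that are (conditionally) nearly uniform with width $W$; then for a random source in $I$, with constant probability the answer $\dw(v^*,q)$ does not distinguish $v^*$ from a $\sqrt W$-sized subinterval around it. The reason is the same ``limited range'' intuition used in the non-adaptive bounds: a query at distance $d$ from the source gives an answer with mean $d$ and variance $d\sigma^2$, so the answer only localizes the source to within a window of size $\Theta(1/\sigma^2 + \sqrt{d}\,\sigma)$; when $d \approx W$ and $\sigma = \Theta(1)$ this window has size $\Theta(\sqrt W)$. More carefully, I would couple two candidate sources $v^*$ and $v^* + \Delta$ with $|\Delta| \lesssim \sqrt{W}$ and show that the induced distributions on the answer vector (over the edge-weight randomness, with the already-revealed weights conditioned on) have total variation distance bounded away from $1$ — the mean shifts by $\Delta$ while the standard deviation is $\Theta(\sqrt W) \gtrsim |\Delta|$, so a Gaussian TV / Pinsker-type bound applies. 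Summing over the (at most) $t$ queries and using the union bound / data-processing inequality, after $t$ queries an adversarially-chosen surviving interval still has width $W_t$ with $W_t \ge n^{1/2^{O(t)}}$, and to have $W_t = O(1)$ (necessary to identify $v^*$ with probability $> p$ since $v^*$ is uniform) we need $t = \Omega(\log\log n)$.

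For the $\sigma^2 \le 1/2$ regime the recursion is different: the ``range'' $1/\sigma^2$ of a query is now $\ge 2$, and the sharper statement is that a query at distance $d$ localizes the source only to a window whose size, measured in the right scale, shrinks like a power of $1/\sigma$; iterating gives width $\approx n^{\sigma^{2^{O(t)}}}$ (equivalently, $\log_{1/\sigma}(\text{width})$ shrinks geometrically), so $t = \Omega(\log(1+\log_{1/\sigma} n))$; the condition $n \ge \Theta_p(\max(\sigma^3,1))$ ensures there is enough initial uncertainty for the recursion to run at least one step with the stated constants. The hard part, as usual for adaptive lower bounds, is the adversary's bookkeeping: because queries are chosen adaptively, I cannot fix $I$ in advance, so I must argue that \emph{whatever} interval the algorithm's previous answers have left as the plausible set, there is a large sub-interval within it on which the next answer is uninformative — i.e., maintain the invariant ``the posterior restricted to $I_t$ is within a constant factor of uniform and $|I_t| \ge W_t$'' through the adaptivity. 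Making this invariant robust to the conditioning on previously-revealed edge weights (which are \emph{not} independent across queries, per difference (iii) noted in the introduction) is the delicate point; I expect to handle it by revealing the weights lazily and only on the shortest paths actually probed, so that the weights on the relevant ``interior'' edges of $I_t$ remain fresh Gaussians when the next query is analyzed.
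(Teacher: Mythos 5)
Your high-level architecture matches the paper's: a progress measure that obeys a square-root-type recursion ($W_{t+1}\gtrsim \sigma\sqrt{W_t}$ up to logs), a ``dilution'' statement saying the posterior cannot concentrate faster than that, and the resulting $\Omega(\log\log n)$ / $\Omega(\log\log_{1/\sigma}n)$ counts. However, two of your central mechanisms have genuine gaps. First, the lazy-revelation plan does not keep the interior weights fresh: the informative answers are precisely those of the two nearest query nodes $q_l\le v^*\le q_r$, and each such answer is the sum of the weights on the path from the query node \emph{all the way to $v^*$}, which necessarily crosses the interior of whatever plausible interval $I_t$ you are maintaining. You cannot condition those weights away without knowing $v^*$. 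The paper instead proves a sufficiency lemma (conditioned on the two nearest answers, all other answers are redundant, so the posterior depends only on $a_l,a_r$) and then computes the posterior at a candidate $v'$ explicitly as a product of two Gaussian densities $\mathcal{N}(d_l',\sigma^2 d_l')\times\mathcal{N}(d_r',\sigma^2 d_r')$ evaluated at $(a_l,a_r)$, showing via a derivative bound that this density is flat over a window of width $\Theta(\sigma\sqrt{\mu}/\ln\mu)$ around its peak, which upper-bounds its maximum. Some such sufficiency-plus-direct-computation step is unavoidable; without it, summing per-query TV distances over adaptive rounds with shared edge weights is not justified (this is exactly the ``consistency across queries'' obstruction the paper warns about).

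Second, your invariant is both too strong and of the wrong type. ``Posterior within a constant factor of uniform on $I_t$'' is likely not maintainable (the posterior can be quite lopsided near the edges of the plausible interval), and pairwise indistinguishability of $v^*$ and $v^*+\Delta$ (a Le Cam two-point bound) does not by itself imply the posterior mass is spread over $\sqrt{W}$ candidates. What is actually needed, and what the paper proves, is only an upper bound on $\max_v \Pr[v^*=v\mid \text{transcript}]$. Moreover, the paper's bookkeeping alternates between two events of different natures: $A_j$ (``the nearest query is at distance $\ge\lambda_j$ from $v^*$'', a statement about the hidden source, not about the algorithm's knowledge) and $B_j$ (``the posterior max is at most $\approx 1/(\lambda_{j+1}\log n)$''), with $A_j\Rightarrow B_j$ deterministically and $B_j\Rightarrow A_{j+1}$ with probability $1-1/\log n$. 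Tracking the distance from the nearest query to $v^*$ rather than the width of a knowledge interval is what lets the argument avoid controlling the full shape of the posterior through adaptive rounds; your plan as written would need to be restructured along these lines to go through.
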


\begin{figure}[h]
\begin{center}
  \includegraphics[width=0.8\textwidth]{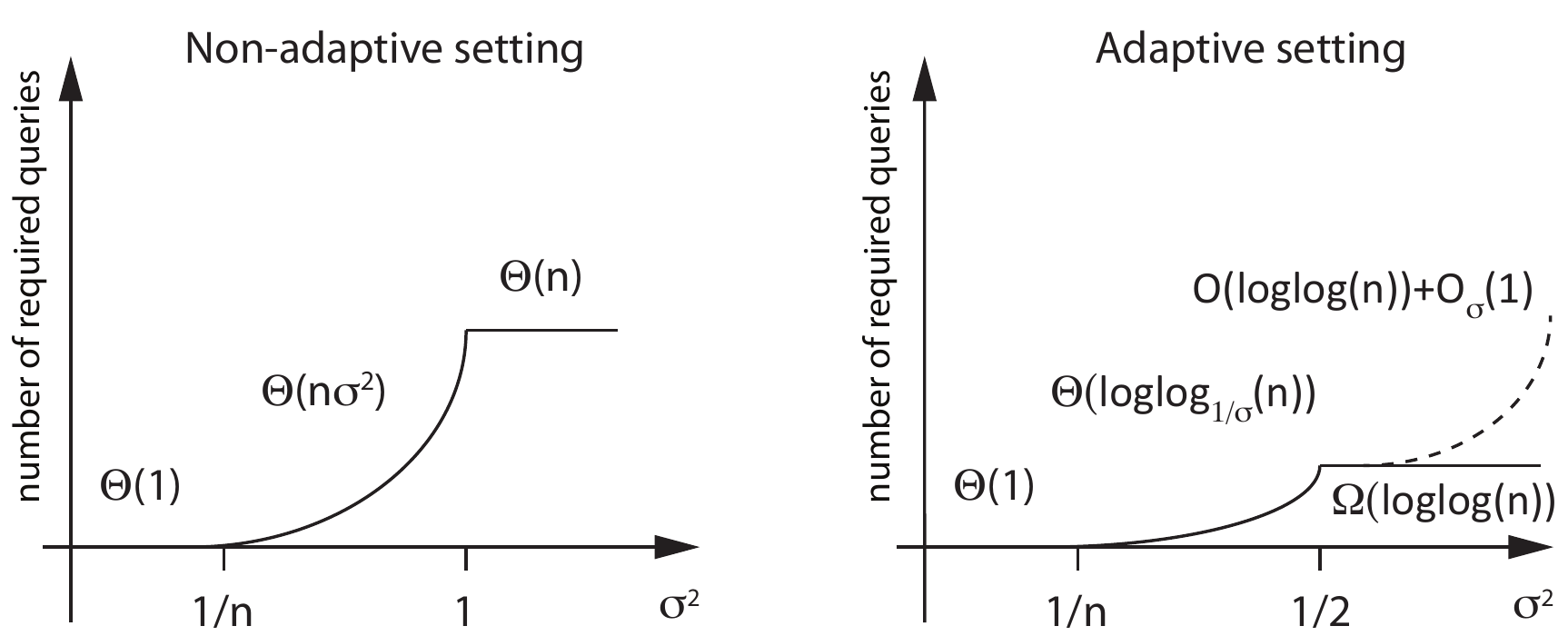}
  \caption{A sketch of a linear-log plot of the optimal number of queries in the non-adaptive and adaptive cases as a function of $\sigma$.}
  \label{plots}
\end{center}
\end{figure}

The proof of Theorem \ref{thm:sequential-lb} is the most challenging proof we present. Since the algorithm is allowed to adapt each query based on the results of the previous ones, we have to somehow quantify the progress it has made towards identifying $v^*$. However, this is made delicate by the fact that the edge weights $w(e)$ are drawn only once at the onset, which means that the algorithm does not only accumulate information about $v^*$, but also about the edge weights $w(e)$. In particular, proof approaches that try to ``fool'' the algorithm by giving it answers from a modified distribution tend to fail because the algorithm can test for consistency across queries. Instead, the proof that we present builds on a detailed understanding what the posteriors can look like in each step.

The upper and lower bounds in Theorems~\ref{thm:sequential-ub} and \ref{thm:sequential-lb} match for $\sigma$ up to $\tilde{\Theta}(\log \log n)$. For $\sigma \gg \log\log(n)$, our upper and lower bounds are separated by a $\sigma^2 \cdot \polylog(\sigma)$ term, which comes from the final steps, when the algorithm has gotten so close to the source that the variance of the edge-delays is about as big as the expected value of the answers, at which point the algorithm simply queries every node.

This $\sigma \gg \log\log(n)$ regime is a difficult regime to analyse, because for larger $\sigma$ we lose the concentration of the answers, and cannot control the shape of the posteriors anymore.
Moreover, we believe that in the high-$\sigma$ regime, any asymptotically tight results for the Gaussian case would not carry over to other edge-delay distributions $\W$, because the $O_{\sigma,\delta}(1)$ term becomes sensitive to the specific $\W$ we pick. As an example, consider $\W$ to be a Gaussian distribution with a very large $\sigma$ (say, larger than $n$), truncated at 0 (to prevent negative edge weights $w$). With this $\W$, we can always figure out which direction the source is from node $v$ by simply querying a neighbor of $v$, and checking which answer is larger. Hence we can find the source with binary search in $\log_2(n)$ rounds, however, if $\W$ is a non-truncated Gaussian random variable and $\sigma$ is large enough, we clearly have to query every node to find the source.

Finally, as an additional strength of our results, we note that the lower bounds (Theorems~\ref{thm:static-lb} and~\ref{thm:sequential-lb}) continue to apply even if the algorithm were to also receive direction information (i.e. whether the source is on the left or on the right of the query node), whereas the upper bounds (Theorems~\ref{thm:static-ub} and~\ref{thm:sequential-ub}) work well even without using this information.
\section{Preliminaries}

We denote a normal distribution with mean $\mu$ and variance $\sigma^2$ as $\Normal(\mu, \sigma^2)$. We occasionally call variables distributed according to this a normal distribution ``Gaussians''. We will often use the following basic facts about the normal distribution.

\begin{fact}
\label{fact:sum-normal}
If $X \sim \Normal(\mu_1, \sigma_1^2), Y\sim \Normal(\mu_2, \sigma_2^2)$ are independent Gaussians, then $X+Y \sim \Normal(\mu_1+\mu_2, \sigma_1^2+\sigma_2^2)$.
\end{fact}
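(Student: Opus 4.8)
The plan is to prove this via moment generating functions, since the MGF (when finite in a neighborhood of the origin) determines the distribution. First I would record the MGF of a single Gaussian: for $Z \sim \Normal(\mu,\sigma^2)$,
\[
M_Z(t) = \Exp\!\left[e^{tZ}\right] = \int_{-\infty}^{\infty} e^{tz}\,\frac{1}{\sqrt{2\pi\sigma^2}}\,e^{-(z-\mu)^2/(2\sigma^2)}\diff z = \exp\!\left(\mu t + \tfrac{1}{2}\sigma^2 t^2\right),
\]
where the last equality follows by completing the square in the exponent (writing $tz - (z-\mu)^2/(2\sigma^2) = -\,(z-\mu-\sigma^2 t)^2/(2\sigma^2) + \mu t + \sigma^2 t^2/2$) and using that a Gaussian density integrates to $1$. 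This is the only place any real computation happens, and it is routine.

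Next, I would use independence of $X$ and $Y$ to factor the joint MGF:
\[
M_{X+Y}(t) = \Exp\!\left[e^{t(X+Y)}\right] = \Exp\!\left[e^{tX}\right]\Exp\!\left[e^{tY}\right] = M_X(t)\,M_Y(t) = \exp\!\left((\mu_1+\mu_2)t + \tfrac{1}{2}(\sigma_1^2+\sigma_2^2)t^2\right).
\]
This is exactly the MGF of $\Normal(\mu_1+\mu_2,\sigma_1^2+\sigma_2^2)$ computed in the first step. Since both MGFs are finite for all $t \in \R$, the uniqueness theorem for moment generating functions yields $X+Y \sim \Normal(\mu_1+\mu_2,\sigma_1^2+\sigma_2^2)$, as claimed. (If one prefers to avoid invoking uniqueness of MGFs, the identical argument goes through verbatim with characteristic functions $\varphi_Z(t) = e^{i\mu t - \sigma^2 t^2/2}$ and the corresponding inversion/uniqueness theorem.)

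An alternative, self-contained route is to compute the density of $X+Y$ directly as the convolution $f_{X+Y}(z) = \int_{-\infty}^{\infty} f_X(x)\,f_Y(z-x)\diff x$ and complete the square in $x$; the cross terms collect into a Gaussian in $x$ that integrates out, leaving $\frac{1}{\sqrt{2\pi(\sigma_1^2+\sigma_2^2)}}\,e^{-(z-\mu_1-\mu_2)^2/(2(\sigma_1^2+\sigma_2^2))}$. There is no conceptual obstacle in either approach; the only mild nuisance is the bookkeeping of the quadratic completion, which is slightly more tedious in the convolution version than in the MGF version, which is why I would present the MGF argument as the main proof.
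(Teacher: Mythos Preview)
Your proof is correct and entirely standard. Note, however, that the paper does not actually prove this fact: it is stated without proof as one of the ``basic facts about the normal distribution'' that the paper will use freely, so there is no paper proof to compare against. Your MGF argument (or the characteristic-function or convolution variants you mention) is exactly the kind of textbook justification one would supply if a proof were required.
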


\begin{fact}
\label{fact:normal}
If $X \sim \Normal(\mu, \sigma^2)$, then $\Pr[X \notin \mu \pm a] \leq e^{-\frac{a^2}{2\sigma^2}}$.
\end{fact}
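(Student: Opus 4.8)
The plan is to reduce to the standard normal and then prove the one-parameter inequality $\Pr[|Z| \ge t] \le e^{-t^2/2}$ for all $t \ge 0$, where $Z \sim \Normal(0,1)$. First I would set $Z := (X-\mu)/\sigma$ and $t := a/\sigma \ge 0$; by the location–scale invariance of the normal family, $Z \sim \Normal(0,1)$ and the event $\{X \notin \mu \pm a\}$ is exactly $\{|Z| \ge t\}$, so the Fact is equivalent to this standardized two-sided tail bound.

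For the standardized inequality I would use a monotonicity argument rather than a Chernoff estimate, since the naive bound $\Pr[Z \ge t] \le e^{-\lambda t + \lambda^2/2}$ optimized at $\lambda = t$ gives only $\Pr[|Z| \ge t] \le 2e^{-t^2/2}$ and loses the constant. Write $\varphi(x) := \frac{1}{\sqrt{2\pi}} e^{-x^2/2}$ for the standard normal density, and define $g(t) := e^{-t^2/2} - \Pr[|Z| \ge t]$ on $[0,\infty)$. Then $g(0) = 1 - 1 = 0$ and $g(t) \to 0$ as $t \to \infty$. Since $\Pr[|Z| \ge t] = 2\int_t^\infty \varphi(x)\, dx$, its derivative is $-2\varphi(t)$, so $g'(t) = -t\,e^{-t^2/2} + 2\varphi(t) = e^{-t^2/2}\bigl(\sqrt{2/\pi} - t\bigr)$. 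Hence $g$ is increasing on $[0, \sqrt{2/\pi}]$ and decreasing on $[\sqrt{2/\pi}, \infty)$. Because $g$ starts at $0$ and increases, it is nonnegative on the first interval; because it then decreases to its limit $0$, it is nonnegative on the second interval as well. Therefore $g \ge 0$ throughout, i.e. $\Pr[|Z| \ge t] \le e^{-t^2/2}$, and unwinding the substitution $t = a/\sigma$ yields $\Pr[X \notin \mu \pm a] \le e^{-a^2/(2\sigma^2)}$.

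A shorter alternative, avoiding calculus, is a tensorization trick: for independent $Z_1, Z_2 \sim \Normal(0,1)$, the event $\{|Z_1| \ge t\} \cap \{|Z_2| \ge t\}$ implies $Z_1^2 + Z_2^2 \ge 2t^2$, so $\Pr[|Z| \ge t]^2 = \Pr[|Z_1| \ge t]\,\Pr[|Z_2| \ge t] \le \Pr[Z_1^2 + Z_2^2 \ge 2t^2]$; and $Z_1^2 + Z_2^2$ is exponential with mean $2$ (in polar coordinates the squared radius of a $2$-dimensional standard Gaussian has survival function $e^{-s/2}$), so the right-hand side equals $e^{-t^2}$, and taking square roots gives the claim. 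The only genuinely delicate point in either route is pinning down the sharp constant in the exponent's prefactor — that is, not losing a factor of $2$ from combining the two tails — which both arguments handle cleanly; the remaining steps are routine.
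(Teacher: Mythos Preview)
Your proof is correct. Both you and the paper first reduce to the standardized inequality $\sqrt{2/\pi}\int_t^\infty e^{-z^2/2}\,dz \le e^{-t^2/2}$ (equivalently, $\Pr[|Z|\ge t]\le e^{-t^2/2}$), so the strategies agree at the top level. The difference lies in how the standardized inequality is established. The paper splits into two regimes: for $t\ge 1$ it uses the Mills-ratio trick $\int_t^\infty e^{-z^2/2}\,dz \le \frac{1}{t}\int_t^\infty z\,e^{-z^2/2}\,dz = \frac{1}{t}e^{-t^2/2}$ together with $t\ge \sqrt{2/\pi}$; for $t\in[0,1]$ it observes that the tail function is convex while $e^{-t^2/2}$ is concave, checks the endpoint values, and concludes via a chord comparison. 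Your unimodal-difference argument (computing $g'(t)=e^{-t^2/2}(\sqrt{2/\pi}-t)$ once and reading off the sign) covers both regimes in one stroke and is a cleaner packaging of essentially the same calculus. Your tensorization alternative is a genuinely different idea: it trades the one-dimensional analysis for the exact exponential law of $Z_1^2+Z_2^2$, and is the slickest of the three routes, at the cost of requiring that extra distributional fact.
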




\begin{proof}
\newcommand{\erf}{\mathrm{erf}}
First, using the probability density function of the normal distribution and the change of variables $z=\frac{x-\mu}{\sigma}$ we have 
\[
\Pr[X \notin \mu \pm a] = 2 \times \int_{\mu+a}^\infty \frac{1}{\sigma\sqrt{2\pi}} e^{-\frac{1}{2}\left(\frac{x-\mu}{\sigma}\right)^2}dx = \sqrt{\frac{2}{\pi}} \int_{a/\sigma}^\infty e^{-z^2/2} dz.
\]
It only remains is to prove that for all $b \geq 0$, $\sqrt{\frac{2}{\pi}}\int_b^\infty e^{-z^2/2}dz \leq e^{-b^2/2}$.

We separate into two cases.
\begin{itemize}
\item If $b \geq 1$, then we have
\begin{align*}
    \sqrt{\frac{2}{\pi}}\int_b^\infty e^{-z^2/2}dz
    &\leq \sqrt{\frac{2}{\pi}}\int_b^\infty \frac{z}{b} e^{-z^2/2}dz\tag{$z\geq b$}\\
    &= \sqrt{\frac{2}{\pi}}\frac{1}{b}e^{-b^2/2}\tag{$\frac{d}{dz}e^{-z^2/2} = -ze^{z^2/2}$}\\
    &\leq e^{-b^2/2}.\tag{$b \geq 1 \geq \sqrt{\frac{2}{\pi}}$}
\end{align*}
\item One can easily check that on interval $[0,1]$, $\sqrt{\frac{2}{\pi}}\int_b^\infty e^{-z^2/2}dz$ is convex while $\sqrt{\frac{2}{\pi}}\int_b^\infty e^{-z^2/2}dz$ is concave (by computing their second derivatives), and in addition we have
\[
\sqrt{\frac{2}{\pi}}\int_0^\infty e^{-z^2/2}dz = 1 = e^{-0^2/2}
\quad\text{and}\quad
\sqrt{\frac{2}{\pi}}\int_1^\infty e^{-z^2/2}dz < 0.318 < 0.606 < e^{-1^2/2}.
\]
Therefore, $\sqrt{\frac{2}{\pi}}\int_b^\infty e^{-z^2/2}dz \leq e^{-b^2/2}$ on the whole interval.\qedhere
\end{itemize}
\end{proof}
\section{Proofs for the Non-Adaptive Setting}

\subsection{Upper Bound}
\newcommand{\sSmall}{{q_\text{smallest}}}
\newcommand{\sLeft}{{q_\text{left}}}
\newcommand{\sRight}{{q_\text{right}}}
\newcommand{\oSmall}{{a_\text{smallest}}}
\newcommand{\oLeft}{{a_\text{left}}}
\newcommand{\oRight}{{a_\text{right}}}

\begin{proof}[Proof of Theorem~\ref{thm:static-ub}]
First of all, observe that one can always find the source with probability 1 if one is willing to use $n$ queries: just query each node. The query at node $v^*$ will produce answer 0, while the other queries will almost surely produce nonzero answers.

Now it suffices to show that the source can be identified with $O(1+n\sigma^2\log(1/\delta))$ queries. The strategy is natural: query $\sim n/d$ nodes along the path at fixed intervals of some length $d$, where $d$ is small enough to ensure that that the query nodes nearest to the source $v^*$ return an answer that is very close to the real distance (and in particular, that will be exactly equal to it once rounded to the nearest integer).

What makes things a bit more complex is that:
\begin{alphaenumerate}
\item even if all the weights are positive, it may not be easy to determine between which two query nodes $v^*$ is identified;
\item since the weights may be negative, it is possible that a query node $q_1$ gives a smaller answer than a query node $q_2$ even though $q_2$ lies between $q_1$ and $v^*$ (in particular, the answers do not necessarily form a unimodal sequence when read from left to right).
\end{alphaenumerate}

Concretely, the algorithm will query nodes $1,d+1,2d+1,\ldots,1+\floor{\frac{n-1}{d}}d$. It will then find the query node with the smallest answer, which we call $\sSmall$, and the next query node to its left $\sLeft \coloneqq \sSmall-d$ (let's assume for now that $\sSmall \neq 1$, so that $\sLeft$ exists).
Let $\oSmall \coloneqq \dw(v^*,\sSmall)$ and $\oLeft \coloneqq \dw(v^*,\sLeft)$ be the corresponding answers.
Then the algorithm just assumes that both of them are correct (equal to the real distance) once rounded to the nearest integer (that is, $\round{\oSmall} = |v^*-\sSmall|$ and $\round{\oLeft} = |v^*-\sLeft|$), and computes $v^*$ as
\[
\begin{cases}
\sSmall+\round{\oSmall}\text{ if $\round{\oLeft} \geq d$}\\
\sSmall-\round{\oSmall}\text{ otherwise.\footnotemark}
\end{cases}
\]
\footnotetext{If $\sLeft$ does not exist, which happens only when $\sSmall=1$, then the algorithm
can simply compute $v^*$ as $1+\round{\oSmall}$, again assuming that $\oSmall$ is correct once rounded to the nearest integer.}

For this strategy to work, it is enough if the following statements hold simultaneously:
\begin{alphaenumerate}
\item among the query nodes located at or to the left of $v^*$, the closest one is the one with the smallest answer;
\item among the query nodes located at or to the right of $v^*$, the closest one is the one with the smallest answer;
\item the two closest query nodes to $v^*$ on its left side and the closest query node on its right side all give a correct answer once rounded to the nearest integer.
\end{alphaenumerate}
Indeed, if this is true, then $\sSmall$ will be the closest query node to $v^*$ on either its left or right side, and thus both $\sSmall$ and $\sLeft$ will be among the three query nodes that are guaranteed by point (c) to give the correct result once rounded.

The following claim, which is purely technical and easily obtained from concentration bounds, is proved in~\ref{proof:claim:end-of-static-ub}.
\begin{claim}
\label{claim:end-of-static-ub}
For some $d = \Omega\left(\frac{1}{\sigma^2\log(1/\delta)}\right)$,
all of (a), (b), (c) hold simultaneously with probability $\geq 1-\delta$.
\end{claim}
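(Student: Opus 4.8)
The plan is to control the three events via Gaussian concentration, handling (c) separately from (a)/(b), since the two kinds of requirement push the spacing $d$ in opposite directions: (c) wants $d$ small, so that the answers of the query nodes nearest $v^*$ have small variance and round correctly, whereas (a)/(b) want $d$ large, so that query nodes farther from $v^*$ return conspicuously larger answers than the nearest one. I will show that a value $d = \Theta\!\big(1/(\sigma^2\log(1/\delta))\big)$ satisfies both, in the regime $\sigma^2\log(1/\delta) = O(1)$ --- which is the only regime in which this claim needs to be invoked, since if $\sigma^2\log(1/\delta)$ exceeds an absolute constant then the bound $O(1+n\sigma^2\log(1/\delta))$ claimed in Theorem~\ref{thm:static-ub} already permits $n$ queries, and the trivial ``query every node'' strategy applies.

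The starting point is the exact law of an answer: on a path, $\dw(v^*,q)$ is the sum of the $\ell := |v^*-q|$ edge weights between $v^*$ and $q$, so by Fact~\ref{fact:sum-normal} it is distributed as $\Normal(\ell,\ell\sigma^2)$. For (c), the at most three query nodes involved --- the two closest to $v^*$ on its left and the closest on its right --- all lie within distance $2d$ of $v^*$, in every position of $v^*$ (including the end-of-path cases and the case where the smallest-answer query node is node $1$, handled by the footnote, since consecutive query nodes are at most $d$ apart). By Fact~\ref{fact:normal}, each of them returns an answer outside $\ell \pm \tfrac12$, and hence rounds incorrectly, with probability at most $e^{-1/(16 d\sigma^2)}$; choosing $d \le c_1/(\sigma^2\log(1/\delta))$ with $c_1$ a small enough absolute constant makes the probability of any (c)-failure at most $\delta/2$.

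For (a) (the argument for (b) is symmetric), enumerate the query nodes at or to the left of $v^*$ as $q_1,q_2,\dots$ in increasing order of distance from $v^*$, so that $q_i$ is at distance $\ell_1+(i-1)d$ for some $\ell_1\in[0,d)$. The observation that tames the correlations between the answers on this side is that the shortest path from $v^*$ to $q_i$ contains the one from $v^*$ to $q_1$, so
\[
\dw(v^*,q_i) = \dw(v^*,q_1) + S_i,
\]
where $S_i$ is the sum of the $(i-1)d$ edge weights strictly between $q_i$ and $q_1$; thus $S_i\sim\Normal\big((i-1)d,(i-1)d\,\sigma^2\big)$ and, crucially, $S_i$ is a sum over edges disjoint from those on the path from $v^*$ to $q_1$, hence independent of $\dw(v^*,q_1)$. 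Consequently ``$q_1$ has the strictly smallest answer among the left query nodes'' is precisely ``$S_i>0$ for every $i\ge2$'', and Fact~\ref{fact:normal} gives $\Pr[S_i\le0]\le e^{-(i-1)d/(2\sigma^2)}$, so by a union bound $\Pr[(a)\text{ fails}] \le \sum_{k\ge1} e^{-kd/(2\sigma^2)} = (e^{d/(2\sigma^2)}-1)^{-1}$, which is at most $\delta/4$ as soon as $d \ge c_2\,\sigma^2\log(1/\delta)$ for a suitable absolute constant $c_2$.

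It remains to reconcile the two constraints: we need $c_2\,\sigma^2\log(1/\delta) \le d \le c_1/(\sigma^2\log(1/\delta))$, which has an integer solution with $d=\Theta\big(1/(\sigma^2\log(1/\delta))\big)$ exactly when $\sigma^2\log(1/\delta)$ is below an absolute constant --- the only case that matters, as noted above. Taking such a $d$ and a union bound over (a), (b), (c) gives total failure probability at most $\delta$. I expect the main obstacle to be not any individual tail estimate but precisely the correlation structure of the answers on one side of $v^*$, where the shortest paths are nested and a naive independence-based union bound for (a)/(b) is unavailable; the decomposition $\dw(v^*,q_i)=\dw(v^*,q_1)+S_i$ with $S_i$ independent of $\dw(v^*,q_1)$ is what rescues the argument, together with the light bookkeeping needed to satisfy simultaneously the opposing demands that (c) places a small, and (a)/(b) a large, value on $d$.
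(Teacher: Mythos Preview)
Your proof is correct and takes essentially the same approach as the paper: the same decomposition $\dw(v^*,q_i)=\dw(v^*,q_1)+S_i$ for (a)/(b), the same tail bound $e^{-1/(16d\sigma^2)}$ for (c), and the same fallback to querying all $n$ nodes when $\sigma^2\log(1/\delta)$ exceeds an absolute constant. The only noticeable variation is that the paper bounds (a)/(b) via the stronger sufficient condition that the partial sum over \emph{every} prefix length $x\ge 1$ (not just multiples of $d$) is positive, giving a failure probability $\sum_{x\ge1}e^{-x/(2\sigma^2)}\le 3e^{-1/(2\sigma^2)}$ that depends only on $\sigma$; this spares them your lower-bound constraint $d\ge c_2\sigma^2\log(1/\delta)$ and the accompanying reconciliation step, but the difference is purely cosmetic.
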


This means that the number of queries used is
$
\floor*{\frac{n-1}{d}} + 1
= O(1+n/d)
= O(1+n\sigma^2\log(1/\delta))$.
\end{proof}
\subsection{Lower Bound}
\label{ss:static-lb}

\begin{proof}[Proof of Theorem~\ref{thm:static-lb}]

Since $p>1/n$, it is clear that at least one query is necessary (otherwise one could not do better than randomly guessing the source, which gives $p=1/n$). In the rest of the proof, we show  that one needs $\Omega(\min(p^3n\sigma^2, p^2n))$ queries to identify the source.

 \begin{figure}
\begin{center}
  \includegraphics[width=\textwidth]{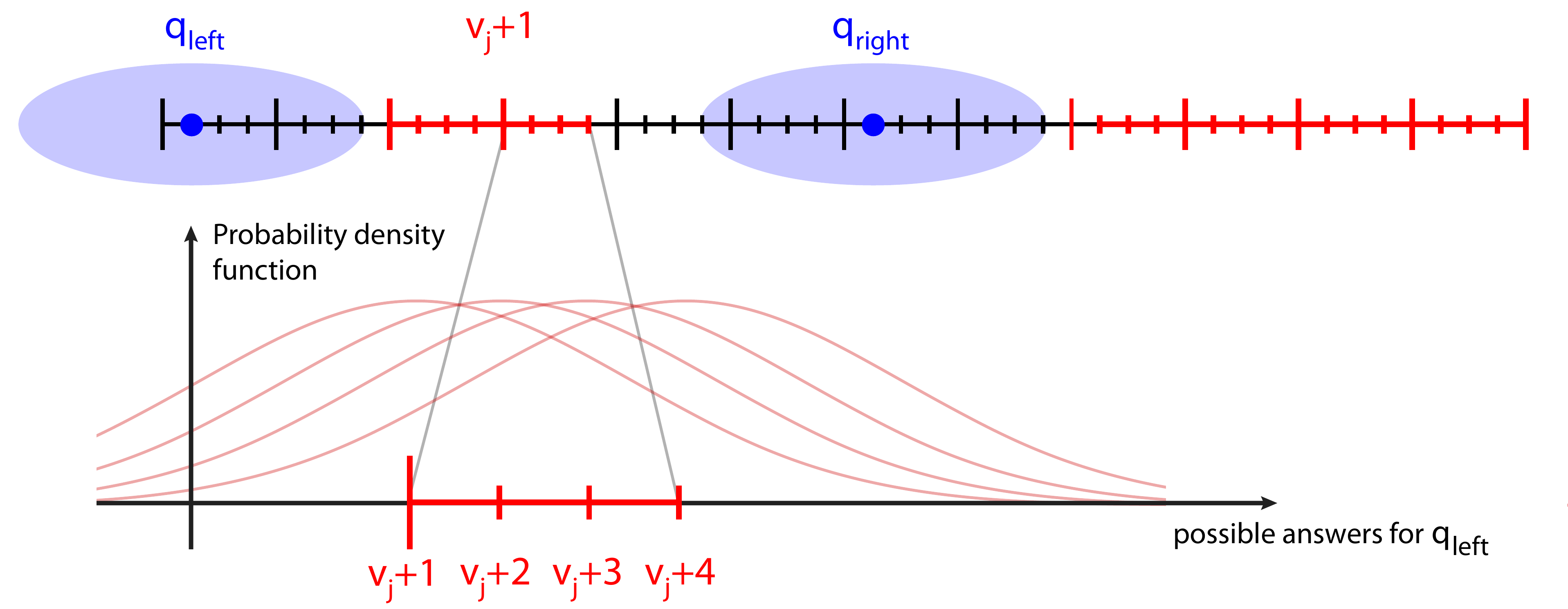}
  \caption{An illustration for the proof of Theorem~\ref{thm:static-lb} with $d=6$ and $k=4$. At the top of the figure, the graph $G$ is shown with the set $Q$ (the query nodes) and the set $C(d)$ (the ``covered'' nodes) marked blue, and the set $K(d,k)$ marked red. At the bottom of the figure, the probability density functions of the answers recorded by $\sLeft$ are shown for each candidate source in the highlighted segment. Intuitively, the union of the areas under the red curves corresponds to the probability of success of the optimal source identification algorithm (this is made concrete in Equation~\eqref{eq:integrals}, although we need to also consider the answer from $\sRight$, so instead of a single integral we get a double integral). In the proof, we show that if too few nodes are queried, then the red segments will be far from the closest query node, and therefore the red curves will have a large overlap, and their union will be small.}\label{fig:static-lb}
\end{center}
\end{figure}

Let us introduce some notation.
Let $Q$ be the set of query nodes that the algorithm chooses, let $\dw(v^*,Q) \coloneqq \{\dw(v^*,q)\}_{q \in Q}$ be the answers it receives from each query node, and let $f$ be the function that takes in these answers and returns a prediction for $v^*$.
Since we are not bounding the running time of the algorithm, we can assume that both $Q$ and $f$ are deterministic (the algorithm can simply choose the values of $Q$ and $f$ that give the best chance of finding the source),
while the answers $\dw(v^*,q)$ are random variables depending on both $v^*$ and $w$ (recall that $v^*$ is drawn uniformly in $V=[n]$).
The overall success probability of the algorithm is given by $p = \Pr_{v^*,w}[f(\dw(v^*,Q)) = v^*]$.
For a fixed node $v$, let $p(v) \coloneqq \Pr_w[f(\dw(v,Q)) = v]$ be the probability that the algorithm will output $v$ conditioned on $v^*=v$.
Clearly, $p = \frac{1}{n}\sum_{v \in V} p(v)$. 

As discussed before in Section~\ref{sec:results}, our proofs in the non-adaptive case build on the intuition of query nodes with limited range: roughly speaking, we will show that most nodes are further than a distance $1/\sigma^2$ away from the closest query node, and therefore they will be hard to distinguish from their neighbors. Figure~\ref{fig:static-lb} sketches some of the key points used in the proof.

Let $d>0$ be an integer which we will fix later in equation~\eqref{eq:def-k-and-d}, representing the ``range'' of the query nodes. Intuitively, nodes outside the range~$d$ of any query node (the ``uncovered'' nodes) might be hard to distinguish, contrary to nodes within the range of a query node (the ``covered'' nodes).
Let $C(d) \subset V$ be the set of nodes that are within distance $d$ of some query node in $Q$.

In addition, let us subdivide the first $k\floor*{n/k}$ nodes of $V$ into $\floor*{n/k}$ segments of length $k$, where $k>0$ is an integer that we will fix later, and let $K(d,k) \subseteq V\setminus C(d)$ be the set of nodes contained in the segments that are entirely included in $V\setminus C(d)$ (we will call such segments ``uncovered'').
Our goal in defining these segments is to show that there are few ``covered'' segments, and that the source identification problem is hard to solve on ``uncovered'' segments.

More precisely, to demonstrate that $Q$ needs to be large, we will split the probability of success $p = \frac{1}{n}\sum_{v \in V} p(v)$ into two parts:
\begin{itemize}
\item the part due to $v \in V \setminus K(d,k)$ (the ``covered'' segments), which will be small whenever $Q$, $d$ and $k$ are small (simply because the set $V \setminus K(d,k)$ will be small);
\item the part due to $v \in K(d,k)$ (the ``uncovered'' segments), which will be small ($\leq p/2$) whenever $d$ and $k$ are large enough.
\end{itemize}

Concretely,
\begin{align}
pn
&= \sum_{v \in V} p(v)\notag\\
&= \sum_{v \in V \setminus K(d,k)}p(v) + \sum_{v \in K(d,k)}p(v)\notag\\
&\leq |V \setminus K(d,k)| + \sum_{v \in K(d,k)}p(v).\tag{$p(v)$ is a probability, so $p(v)\leq 1$}\\
&\leq (2d+2k-1)|Q| + (k-1) + \sum_{v \in K(d,k)}p(v)\label{eq:covered-uncovered}
\end{align}
where the factor $(2d+2k-1)$ is because each query node covers $\leq 2d+1$ nodes directly, and can affect $\leq 2(k-1)$ more nodes by touching their segment; also, the $+(k-1)$ comes from the $<k$ nodes that were not within the first $k\floor*{n/k}$ nodes and therefore are not in a segment.

Let us now prove that the sum $\sum_{v \in K(d,k)}p(v)$ is small when $d$ and $k$ are large.
This makes intuitive sense: the nodes in $K(d,k)$ are far from the closest query node, so they will be hard to distinguish from each other. To do this, we will use the following lemma.

\begin{claim}
\label{claim:advantage}
Let $\{v+1, \dots v+k\}$ be a set of $k$ adjacent nodes. Let $\sLeft \in Q$ be the closest query node at or to the left of $v+1$, and let $\sRight \in Q$ be the closest query node at or to the right of $v+k$. Assume that there are no query nodes between $\sLeft$ and $\sRight$, and that $\sLeft \leq v-d$ and $\sRight \geq v+k+d$. Then 
$$\sum_{i=1}^{k} p(v+i) <\frac{2(d+k)e^{\frac{k^2}{2(d+k)\sigma^2}}}{d}.$$ 
\end{claim}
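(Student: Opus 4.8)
\emph{Proof proposal.} The plan is to reduce the claim to a clean $k$-ary hypothesis-testing bound by pinning down exactly what randomness lets an algorithm distinguish the nodes $v+1,\dots,v+k$. Set $A_L \coloneqq \dw(v^*,\sLeft)$, $A_R \coloneqq \dw(v^*,\sRight)$, and let $W_{\mathrm{ext}}$ be the tuple of edge weights on all edges lying outside the segment from $\sLeft$ to $\sRight$. On a path, if $v^* \in \{v+1,\dots,v+k\}$ — which, by the hypotheses $\sLeft \le v-d$ and $\sRight \ge v+k+d$, lies strictly between $\sLeft$ and $\sRight$ — then every query node $q$ satisfies $q \le \sLeft$ or $q \ge \sRight$ (there are none in between), and correspondingly $\dw(v^*,q) = \dw(q,\sLeft) + A_L$ or $\dw(v^*,q) = A_R + \dw(\sRight,q)$. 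Hence the whole vector of answers is a deterministic function of $(A_L, A_R, W_{\mathrm{ext}})$. These three are mutually independent (they depend on pairwise-disjoint edge sets), the distribution of $W_{\mathrm{ext}}$ does not depend on $v^*$, and by Fact~\ref{fact:sum-normal}, when $v^* = v+i$ we have $A_L \sim \Normal(\ell_i, \ell_i\sigma^2)$ and $A_R \sim \Normal(r_i, r_i\sigma^2)$, where $\ell_i \coloneqq (v+i) - \sLeft \ge d$ and $r_i \coloneqq \sRight - (v+i) \ge d$, with $\ell_i + r_i$ independent of $i$ (so the $\ell_i$ run through $k$ consecutive integers $\ge d$, and the $r_i$ through the reverse run).

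Now I would condition on $W_{\mathrm{ext}}$. Treating the algorithm's predictor $f$ as a deterministic function of the answers (as in the proof of Theorem~\ref{thm:static-lb}), fixing $W_{\mathrm{ext}}$ leaves a fixed map $g\colon \R^2 \to V$; since its law does not depend on $v^*$,
\[
\sum_{i=1}^{k} p(v+i) \;=\; \mathbb{E}_{W_{\mathrm{ext}}}\!\Big[\, \sum_{i=1}^{k} \Pr_{(A_L,A_R)\sim \mu_i}\!\big[ g(A_L,A_R) = v+i \big]\Big],
\quad \mu_i \coloneqq \Normal(\ell_i, \ell_i\sigma^2)\otimes\Normal(r_i, r_i\sigma^2).
\]
For a fixed $g$ the events $\{g = v+i\}$ are disjoint, so writing $\varphi_m$ for the density of $\Normal(m, m\sigma^2)$ and using that the density of $\mu_i$ at $(a,b)$ is $\varphi_{\ell_i}(a)\varphi_{r_i}(b)$,
\[
\sum_{i=1}^{k} \Pr_{\mu_i}[g = v+i] \;\le\; \int_{\R^2} \max_{1\le i\le k} \varphi_{\ell_i}(a)\,\varphi_{r_i}(b) \; \diff a \, \diff b \;=:\; J ,
\]
and therefore $\sum_{i=1}^{k} p(v+i) \le J$. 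It remains to bound the analytic quantity $J$.

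To bound $J$ I would build up the region under $\max_i \varphi_{\ell_i}(\cdot)\varphi_{r_i}(\cdot)$ one index at a time, in the natural order of $\ell_i$: since $\max(\,\cdot\,, \varphi_{\ell_i}\varphi_{r_i}) \ge \varphi_{\ell_i}\varphi_{r_i}$ pointwise, adding $\varphi_{\ell_{i+1}}\varphi_{r_{i+1}}$ increases the volume by at most $\int (\varphi_{\ell_{i+1}}\varphi_{r_{i+1}} - \varphi_{\ell_i}\varphi_{r_i})_+ = \mathrm{TV}(\mu_i, \mu_{i+1})$, so $J \le 1 + \sum_{i=1}^{k-1} \mathrm{TV}(\mu_i, \mu_{i+1})$. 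Each term splits over the two independent coordinates, $\mathrm{TV}(\mu_i,\mu_{i+1}) \le \mathrm{TV}(\Normal(\ell_i,\ell_i\sigma^2),\Normal(\ell_{i+1},\ell_{i+1}\sigma^2)) + \mathrm{TV}(\Normal(r_i,r_i\sigma^2),\Normal(r_{i+1},r_{i+1}\sigma^2))$, and each of these is the TV distance between two Gaussians whose means differ by $1$ and whose variances are both $\ge d\sigma^2$ and differ by $\sigma^2$. A short computation — e.g.\ bounding the relative entropy by $O(\tfrac1{d^2} + \tfrac1{d\sigma^2})$ and invoking Pinsker's inequality (or estimating the overlap of the two densities directly) — shows each such TV distance is $O\big(\tfrac1d + \tfrac1{\sqrt d\,\sigma}\big)$, whence $J \le 1 + (k-1)\big(\tfrac1d + \tfrac1{\sqrt d\,\sigma}\big)$. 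Finally I would check the elementary inequality $1 + (k-1)\big(\tfrac1d + \tfrac1{\sqrt d\,\sigma}\big) < \tfrac{2(d+k)}{d}\, e^{k^2/(2(d+k)\sigma^2)}$: when $\sqrt d\,\sigma \ge k$ the $\tfrac1{\sqrt d\,\sigma}$ term is below $1$ and the left side is at most $2 + \tfrac{2k}{d} \le \tfrac{2(d+k)}{d}$; when $\sqrt d\,\sigma < k$ the term $\tfrac{k-1}{\sqrt d\,\sigma}$ is dominated by $\tfrac{k^2}{d\sigma^2} \le \tfrac{2(d+k)}{d}\big(e^{k^2/(2(d+k)\sigma^2)} - 1\big)$, using $e^x - 1 \ge x$.

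The step I expect to be the real crux is the first one: recognizing that after conditioning on the source-\emph{independent} external weights $W_{\mathrm{ext}}$, the problem collapses to a test between $k$ product-Gaussian hypotheses whose variances are all $\ge d\sigma^2$ — and in particular that one cannot just "hand $A_R$ to the algorithm for free", since its law depends on which $v+i$ is the source, which is why conditioning on $W_{\mathrm{ext}}$ (whose law does not) is the right maneuver. Approaches that instead try to control the tails of $A_L$ or $A_R$ directly run into trouble because $\sLeft$ (or $\sRight$) may be arbitrarily far, making those variances arbitrarily large; the "union of areas" bound for $J$ sidesteps this, since a larger variance only makes the consecutive TV distances \emph{smaller}. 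Once the reduction is in place, controlling $J$ is routine, the only mild wrinkle being that the relevant Gaussians differ in variance as well as in mean — but, both variances being $\ge d\sigma^2$, this costs only the lower-order $O(1/d)$ term.
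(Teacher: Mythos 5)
Your argument is correct, and it splits naturally into two halves that compare differently with the paper's proof. The reduction is the same in substance: the paper argues that the optimal estimator for the restricted problem may as well depend only on the answers at $\sLeft$ and $\sRight$ (via a simulation argument plus MAP optimality), arriving at exactly your integral $J=\int\max_i \varphi_{\ell_i}(a)\varphi_{r_i}(b)\,\diff a\,\diff b$ in its equation~\eqref{eq:integrals}; your version --- condition on the external weights, note the reconstruction map does not depend on which $v+i$ is the source, and use disjointness of the decision regions $g^{-1}(v+i)$ --- is a clean rigorous rendering of the same step that avoids invoking MAP optimality altogether. Where you genuinely diverge is in bounding $J$. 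The paper dominates each density $g_{v+i}$ \emph{pointwise} by a single rescaled Gaussian centered at $(d,d)$ with inflated variance $2(d+k)\sigma^2$, using a law-of-cosines inequality $c^2\ge a^2/2-b^2$ to absorb the shift of the means; the prefactor of that domination is exactly the claimed bound. You instead use the telescoping bound $J\le 1+\sum_{i=1}^{k-1}\mathrm{TV}(\mu_i,\mu_{i+1})$, tensorization of total variation, and Pinsker applied to consecutive Gaussians (means differing by $1$, variances $\ge d\sigma^2$ differing by $\sigma^2$), which indeed gives each two-coordinate step $\mathrm{TV}\le \frac1d+\frac1{\sqrt d\,\sigma}$ and hence $J\le 1+(k-1)\bigl(\frac1d+\frac1{\sqrt d\,\sigma}\bigr)$; your final case analysis correctly shows this is below $\frac{2(d+k)}{d}e^{k^2/(2(d+k)\sigma^2)}$. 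Your bound is actually sharper in the regime $k\ll\min(d,\sqrt d\,\sigma)$ (it approaches $1$, whereas the paper's is never below $2$), it does not require the preliminary normalization $\sLeft=v-d$, $\sRight=v+k+d$, and --- being phrased through consecutive total-variation distances --- it would transfer to non-Gaussian edge delays via a Berry--Esseen-type estimate more directly than the paper's density manipulation (cf.\ the discussion in~\ref{sec:other-distributions}). What the paper's approach buys in exchange is self-containedness: it needs nothing beyond the explicit Gaussian density and elementary algebra, and it produces the stated constant in one step rather than via a two-case verification.
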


\begin{proof}

Let us consider a scenario where the source is sampled uniformly from $\{v+1, \dots, v+k\}$, and let $f' \coloneqq \argmax_f \sum_{i=1}^k \Pr_w[f(\dw(v+i,Q)) = v+i]$ be the algorithm that maximizes the success probability in this scenario. Let $p'(v+i) \coloneqq \Pr_w[f'(\dw(v+i,Q)) = v+i]$, then clearly $\sum_{i=1}^{k} p(v+i)\le\sum_{i=1}^{k} p'(v+i)$ by definition of $f'$.
 Also observe that
$f'$ will only depend on the answers at $\sLeft$ and $\sRight$,
since the algorithm already knows that $v^* \in [\sLeft,\sRight]$, and the other query nodes outside $[\sLeft,\sRight]$ do not carry relevant information. Indeed, any answer outside of $\sLeft$ and $\sRight$ is just the sum of the answer at $\sLeft$ or $\sRight$ plus some extra term that does not say anything about the identity of $v^*$. More formally, one could recreate the other answers from just the answers at $\sLeft$ and $\sRight$ in a way that exactly replicates the original distribution, so we can transform any algorithm that uses all the answers into an algorithm that uses only the answers at $\sLeft$ and $\sRight$ with the same performance.

For similar reasons, we can assume that $\sLeft=v-d$ and $\sRight=v+k+d$. If the query nodes were any further, that would be more difficult for the algorithm $f'$, because we can simulate that case using the answers at $v-d$ and $v+k+d$.

\newcommand{\Dl}{{a_{\text{left}}}}
\newcommand{\Dr}{{a_{\text{right}}}}

Since $f'$ maximizes the success probability of estimating the hidden parameter $v^*$ with both the likelihood function and the prior distribution being completely known, the optimal $f'$ computes the posterior distribution using Bayes rule, and picks the $v^*$ that maximizes it (this is called Maximum A Posteriori or MAP estimation) \cite{PintoTV12}. In our case we have a uniform prior, which implies that $f'$ is simply the Maximum Likelihood Estimator, i.e., for any answer $(\Dl,\Dr)$,
$$f'(\Dl,\Dr) = \argmax\limits_{i \in \{1,\ldots,k\}} (g_{v+i}(\Dl,\Dr)),$$
where $g_{v+i}(x,y)$ denotes the probability density function of $\W_{v+i}=(\Normal(d+i, (d+i)\sigma^2), \Normal(d+k-i, (d+k-i)\sigma^2)$, the distribution of the answers at $\sLeft$ and $\sRight$ (note that $\W_{v+i}$ is a pair of \emph{independent} normal distributions). Consequently,
\begin{equation}
\label{eq:integrals}
\sum_{i=1}^{k} p'(v+i) = \int\limits_{-\infty}^{\infty}  \int\limits_{-\infty}^{\infty}   \max\limits_{i \in \{1,\ldots, k\}} (g_{v+i}(x,y)) \diff x \diff y.
\end{equation}
Next, we provide the following upper bound to $g_{v+i}$ for every $i \in \{1, \ldots, k\}$:
\begin{align}
\label{eq:scaled_pdf1}
g_{v+i}(x,y) &=\frac{1}{2\pi \sigma^2\sqrt{(d+i)(d+k-i)}}\mathrm{exp}\left(-\frac{(x-(d+i))^2}{2(d+i)\sigma^2} -\frac{(y-(d+k-i))^2}{2(d+k-i)\sigma^2}  \right) \nonumber\\
&< \frac{1}{2\pi d\sigma^2} \mathrm{exp}\left(-\frac{ (x-(d+i))^2 + (y-(d+k-i))^2  }{2(d+k)\sigma^2}  \right)
\end{align}

Notice that if we consider a triangle $ABC$ with $A=(x,y)$, $B=(d+i, d+k-i)$ and $C=(d,d)$, and we denote the side lengths opposite of each point by $a,b$ and $c$, then the numerator of the exponent in equation \eqref{eq:scaled_pdf1} equals $c^2$. The following lower bound holds for $c^2$ based on the law of cosines and elementary algebra:
$$c^2 = a^2 -2ab\cos(\measuredangle ACB) + b^2 \ge a^2 -2ab + b^2 \ge \frac{a^2}{2}-b^2.$$
The last inequality can be confirmed if we move all terms to the left side and find the expression $(a/\sqrt{2}-\sqrt{2}b)^2 \ge 0$. After substituting back into $a,b$ and $c$, since the maximum distance between points $(d+i, d+k-i)$ and $(d,d)$ is $k$ for any $i \in \{ 1, \dots, k \}$, we get
\begin{equation}
\label{eq:abc_consequence}
(x-(d+i))^2 + (y-(d+k-i))^2 \ge\frac12 ((x-d)^2 + (y-d)^2) - k^2.
\end{equation}
Substituting equation \eqref{eq:abc_consequence} back into equation \eqref{eq:scaled_pdf1} yields
\begin{align}
g_{v+i}(x,y) &< \frac{1}{2\pi d\sigma^2} \mathrm{exp}\left(-\frac{ (x-d)^2 + (y-d)^2 -2k^2 }{4(d+k)\sigma^2} \right) \nonumber \\
\label{eq:scaled_pdf2}
&=\frac{2(d+k)e^{\frac{k^2}{(d+k)2\sigma^2}}}{d} \cdot \frac{1}{2\pi (d+k)2\sigma^2} \mathrm{exp}\left(-\frac{ (x-d)^2 + (y-d)^2}{2(d+k)2\sigma^2} \right).
\end{align}
Notice that the last line of equation \eqref{eq:scaled_pdf2} can be written as 
$$\frac{2(d+k)e^{\frac{k^2}{2(d+k)\sigma^2}}}{d}g(x,y),$$
where $g(x,y)$ is the probability density function of two independent copies of $\Normal(d,(d+k)2\sigma^2)$, so its double integral must sum to $1$. Thus, plugging this upper bound into equation \eqref{eq:integrals}, we get
$$\sum_{i=1}^{k} p'(v+i) < \frac{2(d+k)e^{\frac{k^2}{2(d+k)\sigma^2}}}{d} \int\limits_{-\infty}^{\infty} \int\limits_{-\infty}^{\infty} g(x,y) \diff x \diff y= \frac{2(d+k)e^{\frac{k^2}{2(d+k)\sigma^2}}}{d}.$$
Since $\sum_{i=1}^{k} p'(v+i)$ is an upper bound on $\sum_{i=1}^{k} p(v+i)$, the proof is completed.
\end{proof}


Since each segment in $K(d,k)$ contains $k$ consecutive nodes that are all a distance $d$ away from the closest query node, we can apply Claim~\ref{claim:advantage} to each of them, and get
\begin{align*}
    \sum_{v \in K(d,k)}p(v)
    &\leq \frac{|K(d,k)|}{k}\cdot \frac{2(d+k)e^{\frac{k^2}{2(d+k)\sigma^2}}}{d}\\
    &\leq \frac{n}{k}\cdot \frac{2(d+k)e^{\frac{k^2}{2(d+k)\sigma^2}}}{d}.
\end{align*}
In order to make this $\leq pn/2$, let us set
\begin{equation}
\label{eq:def-k-and-d}
k\coloneqq \ceil*{16e/p}\qquad\text{and}\qquad d\coloneqq \max\left(\ceil*{\frac{k^2}{2\sigma^2\ln(pk/8)}}, k\right)
\end{equation}
(this value of $k$ is chosen so that the $\ln(\cdot)$ in the definition of $d$ is positive\footnote{This value of $k$ also makes sense intuitively: we are showing that no algorithm can solve the source identification problem with probability better than $p/2$ in an uncovered interval, so we definitely need at least $k\geq 2/p$ to rule out random guessing.}).
Indeed, these choices give
\begin{align}
    \sum_{v \in K(d,k)}p(v)
    &\leq \frac{n}{k}\cdot \frac{2(d+k)e^{\frac{k^2}{2(d+k)\sigma^2}}}{d}\notag\\
    &\leq \frac{n}{k}\cdot \frac{4d}{d}e^{\frac{k^2}{2d\sigma^2}}\tag{$k\leq d$ by \eqref{eq:def-k-and-d}}\\
    &\leq \frac{n}{k}\cdot 4e^{\ln(pk/8)}\tag{$d\geq \frac{k^2}{2\sigma^2\ln(pk/8)}$ by \eqref{eq:def-k-and-d}}\\
    &= pn/2.\label{eq:uncovered-pnhalf}
\end{align}
Combining \eqref{eq:covered-uncovered} with \eqref{eq:uncovered-pnhalf}, we finally get
\begin{align*}
    pn
    &\leq (2d+2k-1)|Q| + (k-1) + pn/2
\end{align*}
which further implies
\begin{align*}
 |Q|&\geq \frac{pn/2 - k}{2d+2k}\\
    &\geq \frac{pn - 2k}{8d}.\tag{$k \leq d$}
\end{align*}
Let us assume $k \leq pn/4$ (otherwise we have $p^2n < 4pk \leq 4p \ceil*{\frac{16e}{p}} = O(1)$, which means the $\Omega(\min(p^3n\sigma^2, p^2n))$ bound we are trying to prove becomes a trivial $\Omega(1)$).
Then we get
\begin{align*}
    |Q| &\geq \frac{pn-2k}{8d}\\
    &\geq \frac{pn}{16d}\tag{$k \leq pn/4$}\\
    &\stackrel{\eqref{eq:def-k-and-d}}{=} \frac{pn}{16\max\left(\ceil*{\frac{k^2}{2\sigma^2\ln(pk/8)}}, k\right)}\tag{by definition of $d$}\\
    &\stackrel{\eqref{eq:def-k-and-d}}{=} \frac{pn}{O\left(\max\left(\frac{1}{p^2\sigma^2}, \frac{1}{p}\right)\right)}\tag{by definition of $k$}\\
    &= \Omega(\min(p^3n\sigma^2, p^2n)).\qedhere
\end{align*}

\end{proof}
\section{Proofs for the Adaptive Setting}
\subsection{Upper Bound}

\begin{proof}[Proof of Theorem~\ref{thm:sequential-ub}]
The algorithm crucially uses the following result on the concentration of the answers at large distances. We prove it in~\ref{proof:lemma:concentration-ub}.
\begin{lemma}
\label{lemma:concentration-ub}
For any probability $0 < \delta < 1/2$, there is some constant $C(\delta)=O\left(\sqrt{\log(1/\delta)}\right)$ such that for any $n,\sigma$ and any source $v^* \in V$, we have
\begin{equation}\label{eq:concentration-ub}
\Pr_w[\forall q \in V, \,\dw(v^*,q) \in |v^*-q| \pm C(\delta)\cdot\sigma\sqrt{|v^*-q|}\ln (1+|v^*-q|)] \geq 1-\delta.
\end{equation}
That is, the concentration bound $|v^*-q| \pm C(\delta)\cdot\sigma\sqrt{|v^*-q|}\ln (1+|v^*-q|)$ holds simultaneously for all nodes $q$ with probability at least $1-\delta$ over the choice of the weights $w$.
\end{lemma}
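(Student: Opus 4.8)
The plan is to fix the source $v^*$ and handle each of the (at most $n$) query nodes $q$ separately, then combine via a union bound with a carefully chosen, distance-dependent failure budget. Fix $q$ and let $d = |v^*-q|$. Since $G$ is a path, the shortest path from $v^*$ to $q$ is just the unique path through the $d$ intermediate edges, so $\dw(v^*,q) = \sum_{j=1}^{d} w(e_j)$ where the $e_j$ are independent $\Normal(1,\sigma^2)$. By Fact~\ref{fact:sum-normal}, $\dw(v^*,q) \sim \Normal(d, d\sigma^2)$, and Fact~\ref{fact:normal} gives $\Pr[\dw(v^*,q) \notin d \pm a] \leq e^{-a^2/(2d\sigma^2)}$ for any $a>0$. (For $q = v^*$ the answer is deterministically $0$ and the claimed bound holds trivially since the window has width $0$ but so does the event; one should state $|v^*-q|=0 \implies$ nothing to prove, or note the bound $\dw = 0 \in 0 \pm 0$.)

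Next I would choose $a = a(d) \coloneqq C(\delta)\,\sigma\sqrt{d}\,\ln(1+d)$, matching the target window, so that the per-query failure probability is at most $\exp\!\big(-\tfrac{C(\delta)^2 \ln^2(1+d)}{2}\big)$. The point is that this decays faster than any polynomial in $d$: writing $t = 1+d \geq 2$, it is $t^{-C(\delta)^2 \ln(t)/2}$. I would then sum over all query nodes. There are at most two query nodes at each distance $d \geq 1$ from $v^*$ on the path, so the total failure probability is at most
\[
2\sum_{d \geq 1} \exp\!\left(-\frac{C(\delta)^2 \ln^2(1+d)}{2}\right).
\]
I would bound this tail sum by a constant times its first term(s): since $\ln^2(1+d) \geq \ln^2(1+d_0) + (\text{increasing part})$, one gets geometric-like decay once $C(\delta)^2\ln(1+d)/2 \geq 2$, and the whole sum is $O\!\big(\exp(-c\,C(\delta)^2)\big)$ for an absolute constant $c>0$ (the crudest version: for $d \geq 7$, $\ln(1+d) \geq 2$, so the exponent is $\leq -C(\delta)^2\ln(1+d)$, and $\sum_d (1+d)^{-C(\delta)^2}$ converges and is small for $C(\delta)$ bounded below; the finitely many small-$d$ terms are each $\leq \exp(-c' C(\delta)^2)$ as well). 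Hence choosing $C(\delta)$ to be a large enough absolute constant times $\sqrt{\log(1/\delta)}$ makes the total $\leq \delta$. This also gives the stated $C(\delta) = O(\sqrt{\log(1/\delta)})$.

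The main thing to be careful about — rather than a deep obstacle — is making the union bound converge: a naive $a = \Theta(\sigma\sqrt{d})$ (no $\ln$ factor) would give per-query failure $e^{-\Theta(1)}$, which does not sum over $n$ nodes, which is exactly why the $\ln(1+|v^*-q|)$ inflation of the window is essential. I would also double-check the edge cases $d \in \{0,1\}$ where $\ln(1+d)$ is $0$ or $\ln 2$: at $d=0$ the bound is vacuous as noted; at $d=1$ the window is $C(\delta)\sigma\ln 2$, and Fact~\ref{fact:normal} with $a = C(\delta)\sigma\ln 2$ gives failure $\leq \exp(-C(\delta)^2\ln^2 2/2)$, again small for large $C(\delta)$. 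Everything else is a routine estimate of a rapidly converging series, so no step should be genuinely hard once the window shape is fixed.
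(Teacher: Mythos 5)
Your proposal is correct and follows essentially the same route as the paper: reduce to a per-query Gaussian tail bound via Facts~\ref{fact:sum-normal} and~\ref{fact:normal} with the window $C(\delta)\sigma\sqrt{d}\ln(1+d)$, observe the per-query failure probability is $(1+d)^{-C(\delta)^2\ln(1+d)/2}$, and union-bound over the at most two nodes at each distance, splitting the series into finitely many small-$d$ terms plus a convergent polynomially-decaying tail. The paper's only (cosmetic) difference is that it splits the sum at $d=\ceil*{e^4+1}$ and tracks explicit constants to arrive at $C(\delta)=\sqrt{4\log_2(114/\delta)}$, whereas you leave the constants implicit; both yield $C(\delta)=O(\sqrt{\log(1/\delta)})$.
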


With this concentration result in hand, the algorithm follows a natural ``iterative refining'' strategy: start by obtaining a rough estimate of the identity of $v^*$, then progressively refine it by querying nodes closer and closer to $v^*$. After $k$ steps (where $k$ is defined by Claim~\ref{claim:asymptotics-ub}), only a few possible candidate sources will remain, and the algorithm will switch to testing them one by one.

Concretely, let us assume that Lemma~\ref{lemma:concentration-ub} holds with the desired probability of failure $\delta$. Then the algorithm will maintain a shrinking interval $[l_i,r_i]$ which contains $v^*$. Initially $l_0=1$ and $r_0=n$. At each step, the algorithm will query node $l_i$.
Let $d_i$ be equal to $r_i-l_i$. $v^*$ has to be to the right of $l_i$ and at distance at most $d_i$ from $l_i$, so

\begin{align*}
\dw(v^*,l_i) &\in |v^*-l_i| \pm C(\delta)\cdot \sigma\sqrt{|v^*-l_i|}\ln(1+|v^*-l_i|),\tag{by \eqref{eq:concentration-ub}}\\
&\subseteq v^*-l_i \pm C(\delta)\cdot\sigma\sqrt{d_i}\ln(1+d_i)\tag{$v^* \geq l_i$ and $|v^*-l_i| \leq d_i$}
\end{align*}
and thus given answer $\dw(v^*, l_i)$ the algorithm knows that $v^*$ must be in interval
\begin{equation}
\label{eq:seq_upper_interval}
l_i+\dw(v^*,l_i) \pm C(\delta)\cdot\sigma\sqrt{d_i}\ln(1+d_i).
\end{equation}
Therefore, it shrinks its interval as follows:
\[
\begin{cases}
l_{i+1} = \max(l_i, l_i+\ceil{\dw(v^*,l_i) - C(\delta)\cdot\sigma\sqrt{d_i}\ln(1+d_i)})\\
r_{i+1} = \min(r_i, l_i+\floor{\dw(v^*,l_i) + C(\delta)\cdot\sigma\sqrt{d_i}\ln(1+d_i)})
\end{cases}
\]
The resulting interval has length $d_{i+1} = r_{i+1}-l_{i+1} \leq 2\,C(\delta)\cdot\sigma\sqrt{d_i}\ln(1+d_i)$.

Now what remains to do is to figure out how fast this interval shrinks, and when we should switch to testing the remaining candidates one by one. To get a rough initial intuition of the speed at which it shrinks, let us imagine that $d_{i+1} = \sigma\sqrt{d_i}$. Then, the sequence would decrease very fast at the onset, when $d_i$ is still large, then decrease slower and slower. We would observe that $\log(d_{i+1}/\sigma^2) = \log(\sqrt{d_i}/\sigma) = \frac{1}{2}\log(d_i/\sigma^2)$: the \emph{logarithm} of the ratio of $d_i$ to $\sigma^2$ is divided by 2 at each step. So it would be reasonable to assume that $d_i$ will approach $\sigma^2$ in a doubly-logarithmic number of steps. This is made rigorous in the following claim, which is proved in~\ref{proof:claim:asymptotics-ub}.

\begin{claim}
\label{claim:asymptotics-ub}
Assume $d_0\leq n$, and $d_{i+1} \leq C \cdot\sigma\sqrt{d_i}\ln(1+d_i)$ for some value $C>0$. Then
\begin{itemize}
\item if $\sigma^2 \leq 1/2$, there exists $k = O(\log \log_{1/\sigma} n)$ such that $d_k = \poly(C)$;
\item if $\sigma^2 \geq 1/2$, there exists $k = O(\log \log n)$ such that $d_k = \sigma^2\cdot \poly(C, \log(1+\sigma^2))$.
\end{itemize}
\end{claim}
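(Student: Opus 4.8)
The plan is to analyze the recursion $d_{i+1}\le C\sigma\sqrt{d_i}\ln(1+d_i)$ through a suitable change of variable in each regime, reducing it in both cases to a one‑dimensional recursion whose logarithm roughly halves at every step, which is what produces the doubly‑logarithmic bounds. Throughout I may assume $C\ge 1$, since replacing $C$ by $\max(C,1)$ only weakens the hypothesis, and $d_i>0$ for all relevant $i$ (if some $d_i=0$ the source is already pinned down). Writing $u_i:=\ln d_i$ and taking logarithms of the hypothesis gives the master inequality $u_{i+1}\le \ln C+\ln\sigma+\tfrac12 u_i+\ln\ln(1+d_i)$.

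\emph{The regime $\sigma^2\le 1/2$.} Here I work on the logarithmic scale \emph{relative to $\sigma^2$}: set $w_i:=\ln(d_i/\sigma^2)=u_i+2\ln(1/\sigma)$. The shift by $2\ln(1/\sigma)$ is chosen precisely so that the idealized recursion $d_{i+1}=\sigma\sqrt{d_i}$ becomes $w_{i+1}=\tfrac12 w_i$; applied to the master inequality it gives $w_{i+1}\le \tfrac12 w_i+\ln C+\ln\ln(1+d_i)$. I would then fix a threshold $D:=\max(C^{16},e^{100})=\poly(C)$ and check that, as long as $d_i>D$, the correction $\ln C+\ln\ln(1+d_i)$ is at most $\tfrac14 w_i$ (using $\ln C\le\tfrac1{16}\ln D\le\tfrac1{16}w_i$ and $\ln\ln(1+d_i)\le\ln(\ln2+u_i)\le\ln(2w_i)\le\tfrac18 w_i$), so that $w_{i+1}\le\tfrac34 w_i$ on this range. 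Let $K$ be the first index with $d_K\le D$ (if $K=0$ we are done with $k=0$). An induction gives $w_i\le(3/4)^i w_0$ for $i\le K-1$, while $d_{K-1}>D\ge1$ forces $w_{K-1}>2\ln(1/\sigma)$; combined with $w_0\le\ln n+2\ln(1/\sigma)$ this yields
\[
(4/3)^{K-1}\;<\;\frac{\ln n+2\ln(1/\sigma)}{2\ln(1/\sigma)}\;=\;1+\tfrac12\log_{1/\sigma}n ,
\]
so $K=O\!\big(\log(1+\log_{1/\sigma}n)\big)$ with $d_K\le D=\poly(C)$, as required.

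\emph{The regime $\sigma^2\ge 1/2$.} Now the shift above is useless (it can be nonpositive), so I instead identify the correct absorbing scale directly. Let $f(d):=C\sigma\sqrt d\ln(1+d)$, which is increasing, and let $E$ be a suitable constant multiple of the largest root of $f(E)=E$ (equivalently $C\sigma\ln(1+E)=\sqrt E$); a routine bootstrap shows $E=\Theta\!\big(\max(1,\,C^2\sigma^2\ln^2(2+C\sigma))\big)$ and $f(E')\le E'$ for all $E'\ge E$, and since $\sigma^2\ge1/2$ one checks $E=\sigma^2\cdot\poly(C,\log(1+\sigma^2))$. Substituting $t_i:=d_i/E$ and using $1+t_iE\le(1+t_i)(1+E)$, $f(E)\le E$, and $C\sigma\sqrt E\le E$ (the last because $\ln(1+E)>1$), the hypothesis becomes the $E$‑free recursion $t_{i+1}\le\sqrt{t_i}\,(1+\ln(1+t_i))$. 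Writing $s_i:=\ln t_i$, this reads $s_{i+1}\le\tfrac12 s_i+\ln(1+\ln(1+e^{s_i}))\le\tfrac12 s_i+O(1)+\ln s_i$, so while $s_i$ exceeds an absolute constant we again have $s_{i+1}\le\tfrac34 s_i$; since $s_0=\ln(d_0/E)\le\ln n$ (and $k=0$ works outright if $d_0\le E$), after $k=O(\log\log n)$ steps $s_k=O(1)$, i.e.\ $t_k=O(1)$, i.e.\ $d_k=O(E)=\sigma^2\cdot\poly(C,\log(1+\sigma^2))$.

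The conceptual crux is the rescaling in the first regime: tracking $w=\ln(d/\sigma^2)$ rather than $\ln d$ is exactly what lowers the ``finish line'' from a constant down to $\approx 2\ln(1/\sigma)$, and hence converts the naive $O(\log\log n)$ bound into the sharper $O(\log\log_{1/\sigma}n)$. Everything else is bookkeeping; the only genuinely computational point is the bootstrap estimate for $E$ in the second regime, together with keeping the various threshold constants consistent when bounding the $\ln(1+d_i)$ correction terms.
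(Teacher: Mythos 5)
Your proposal is correct and follows essentially the same strategy as the paper's proof: rescale $d_i$ by the scale at which the recursion stops contracting ($\sigma^2$, resp.\ the fixed point of $d \mapsto C\sigma\sqrt{d}\ln(1+d)$), and show that the logarithm of the rescaled quantity decays geometrically until it hits a threshold of size $\poly(C)$, resp.\ $\sigma^2\cdot\poly(C,\log(1+\sigma^2))$. The only cosmetic difference is that the paper uses the single normalization $d_i/\sigma^2$ in both regimes (proving $d_{i+1}/\sigma^2 \leq (d_i/\sigma^2)^{2/3}$ above a threshold $d_{\min}$ that is then evaluated separately in each regime), whereas you introduce the fixed-point scale $E$ for the case $\sigma^2 \geq 1/2$; both computations are sound.
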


We instantiate Claim~\ref{claim:asymptotics-ub} with $C \coloneqq 2\,C(\delta)$. After the first $k$ steps, we simply go through the $d_k+1$ remaining possible positions for node $v^*$ in $[l_k,r_k]$, and check them all with one query each.\footnote{As we explain in~\ref{sec:other-distributions}, this theorem can be extended to apply to many other edge weight distributions. If the distribution's support is positive, a binary search can be used instead.} With probability $1$, $v^*$ will be the only one to give $0$ as an answer . Thus, overall, this algorithm will succeed with probability at least $1-\delta$.
The total number of queries used is $k+d_k+1$, which by Claim~\ref{claim:asymptotics-ub} gives the desired bounds in both the $\sigma^2 \leq 1/2$ case and the $\sigma^2 \geq 1/2$ case.
\end{proof}

\subsection{Lower Bound}
\label{sec:sequential-lb}
\newcommand{\Typical}{\mathrm{Typical}}
\newcommand{\oL}[1]{a_{\mathrm{L},#1}}
\newcommand{\oR}[1]{a_{\mathrm{R},#1}}
\newcommand{\reduce}{\mathrm{reduce}_{n,\sigma}}
\newcommand{\jMin}{{j_{\mathrm{min}}}}
\newcommand{\jStop}{{j_{\mathrm{stop}}}}
\newcommand{\SenRV}{\{q_i\}_{\leq j}}
\newcommand{\ObsRV}{\{a_i\}_{\leq j}}
\newcommand{\dilution}[1]{\frac{#1}{\left(\frac{8}{3}\lambda_{j+1}+1\right)\log n}}
\newcommand{\GroupComplex}{E_{q_{l_j},q_{r_j},a_{l_j},a_{r_j}}}
\newcommand{\Group}{E_{l,r,a_l,a_r}}
\newcommand{\argmin}{\mathrm{argmin}}
\newcommand{\wIn}{w_{\mathrm{in}}}
\newcommand{\wOut}{w_{\mathrm{out}}}
\newcommand{\mup}{\mu'}
\newcommand{\cmup}{\mup}
\newcommand{\ssmu}{\sigma\sqrt{\mu}}
\newcommand{\myDl}{d_l'(v')}
\newcommand{\myDr}{d_r'(v')}
\newcommand{\vpeak}{v'_{\mathrm{peak}}}
\newcommand{\lMin}{{\lambda_\text{min}}}

For your reading convenience, here is a quick reference of the notations that are used throughout the proof.

\noindent\fbox{%
    \parbox{\textwidth}{%
        \emph{Notations cheatsheet (not exhaustive)}
        \begin{itemize}
            \item $p$: the desired probability of identifying the true source.
            \item $R$: the internal randomness of the algorithm (see Definition~\ref{def:r}).
            \item $q_j$: the  $j\th$ node queried by the algorithm.
            \item $a_j$: the $j\th$ answer the algorithm receives (it will take the value $\dw(v^*, q_j)$).
            \item $T$: shorthand for $\Typical_{p/2}(v^*, w)$, i.e. the event that the concentration bounds from Definition~\ref{def:typical} hold.
            \item $C$: shorthand for $C(p/2)$, a constant (in $n$ and $\sigma$) factor involved in the concentration bounds of Definition~\ref{def:typical} (see Definition~\ref{def:cd} for its precise value).
            \item $D$: shorthand for $D(\sigma, p/2)$, the minimum distance at which the concentration bounds of Definition~\ref{def:typical} hold (see Definition~\ref{def:cd} for its precise value).
            \item $l_j, r_j$: the step counters at which the closest query nodes to $v^*$ have been placed so far at step $j$; i.e. by the time the $j\th$ query has been asked, $v^*$ lies between query nodes $q_{l_j}$ and $q_{r_j}$.
            \item $\mu_j$: the minimum of the answers to query nodes $q_{l_j}$ and $q_{r_j}$. Intuitively, it is a proxy for the smallest answer made so far (after asking $j$ queries).
            \item $\reduce(x)$: a function $\R \to \R$ that models the fastest decrease of $\mu_j$ an algorithm can hope for: most of the time $\mu_{j+1} \geq \reduce(\mu_j)$ (see Definition~\ref{def:reduce}).
            \item $\lambda_i$: a lower bound on $\mu_j$ with high probability, and therefore a limit on the progress that the algorithm can make (see Definition~\ref{def:lambda}).
            \item $K_j$: a random variable representing all the information that the algorithm has at its disposal after asking the first $j$ queries (see Definition~\ref{def:k}).
            \item $A_j$: the event that $\mu_j \geq \lambda_j$; intuitively, the event that at step $j$, the algorithm has not queried any nodes very close to $v^*$.
            \item $B_j$: informally, the event that even based on everything the algorithm knows at step $j$, no node is particularly likely to be the source (see Definition~\ref{def:b}).
            \item $\jStop$ (also, $\jMin$): a lower bound on the number of steps that an algorithm needs to find the source with probability $p$ (see Definition~\ref{def:jstop}).
        \end{itemize}
    }%
}

\begin{definition}[$R$]
\label{def:r}
Let $R$ be a random variable denoting the internal randomness of the algorithm. One can for example think of $R$ as drawn uniformly from interval $[0,1]$, as this puts no limitation on the amount of randomness the algorithm can use.
\end{definition}

\begin{definition}[$q_j, a_j$]
\label{def:so}
Let $q_j$ be the $j\th$ query node selected by the algorithm, and let $a_j$ be the answer that it gets to query $q_j$ (i.e. $a_j\coloneqq\dw(v^*, q_j)$). Both $q_j$ and $a_j$ are random variables that can depend on $v^*$, $w$ and the internal randomness of the algorithm.
\end{definition}

Note that under the hypotheses of Theorem~\ref{thm:sequential-lb}, any algorithm needs to query at least one node: otherwise it would succeed with probability at most $1/n < p$. 

To simplify the proof, we make the following adaptations to the model, which only give the algorithm more power to identify the source, and therefore hold without loss of generality:
\begin{alphaenumerate}
\item Before the algorithm starts, two initial query nodes $q_{-1}=1$ and $q_0=n$ are already selected, resulting in answers $a_{-1}$ and $a_0$ at no cost to the algorithm. The first query that is actually chosen by the algorithm is $q_1$.
\item When querying node $q_j$, in addition to the answer $a_j $, the algorithm is told on which side of $q_j$ the source $v^*$ is identified.\footnote{Note that this gives the algorithm the ability to perform a binary search, which is not necessarily easy when the weight distribution is not positive.}
\item Once the algorithm is ready to guess the position of $v^*$, it should query it.\footnote{This extra query does not affect the asymptotics because as noted in the previous paragraph the algorithm always needs to query at least one node anyway.} If at any point the algorithm queries $v^*$, it immediately terminates and the identification is considered successful. More precisely, the number of queries that the algorithm uses is defined as the first positive integer $j$ such that $q_j=v^*$.

\end{alphaenumerate}

The details of the proof are at times technically heavy, so we first give a general outline to provide the gist of the proof. It proceeds in the following 8 steps. We will cover each of them in detail in the next 8 subsections (Section~\ref{sec:sequential-lb}.$x$ corresponds to step $x$).
\begin{enumerate}
\item We define a random event $T$ (over $v^*$ and $w$) which has probability $\geq 1-p/2$, guarantees that $v^*$ is not too close to the ends of the path, and gives some concentration bounds for the answers when the query node is at least some distance $D$ away from $v^*$ ($D$ will be defined in Definition~\ref{def:t}). $T$ represents a ``typical situation'': the role of this event is to exclude some extreme cases (e.g. $v^*=1$ or $v^*=n$) that would derail the proof.
\item We define a sequence of random variables $\mu_j$ that describe how close the algorithm is to find $v^*$ after it has asked $j$ queries. $\mu_j$ is (roughly) the distance between $v^*$ and the query node closest to $v^*$, and tends to decrease as $j$ increases. We also define a corresponding \emph{deterministic} sequence $\lambda_0 > \lambda_1 > \cdots$ where for each $j$, $\mu_j \geq \lambda_j$ with high probability.
\item We define the following events which (basically) imply each other in alternation (i.e. $A_j \Rightarrow B_j \Rightarrow A_{j+1}$), and will help us bound the progress of the algorithm:
    \begin{itemize}
    \item $A_j$ is the event that $\mu_j \geq \lambda_j$; it intuitively means ``none of the $j$ first query nodes are too close to $v^*$'';
    \item $B_j$ will be defined later, and intuitively means ``even after asking $j$ queries, the algorithm has only a vague idea where $v^*$ is'', or a bit more precisely, ``even conditioned on all the answers gathered by the algorithm during the first $j$ steps, none of the nodes have a high probability of being the source''.
    \end{itemize}
\item We define $\jStop$ be the largest $j$ such that $\lambda_j \geq D$ (recall that $D$ is the distance above which event $T$ gives concentration bounds on answers). Our goal will be to prove that with high probability, the algorithm needs to ask at least $\jStop$ queries.
\item We prove two key lemmas, which show that in most cases, $A_j \Rightarrow B_j$ and $B_j \Rightarrow A_{j+1}$. They state that for $j< \jStop$,
    \begin{itemize}
    \item $A_j$ implies $B_j$ (Lemma~\ref{lemma:a-implies-b});
    \item with probability $1 - \frac{1}{\log n}$, $T \wedge A_j \wedge B_j$ implies $A_{j+1}$ (Lemma~\ref{lemma:b-implies-a}).
    \end{itemize}
    This is the core technical part of the proof.
\item We chain the above lemmas by induction and use the fact that $\Pr[\neg T] \leq p/2$ to obtain $\Pr[A_{\jStop}] \geq 1-p$.
\item We prove that $\jStop = \Omega(\log \log_{\max(1/\sigma, 2)} n)$, the desired lower bound.
\item We observe that event $A_{\jStop}$ implies that the algorithm has not found $v^*$ after asking $\jStop$ queries, which using 6 and 7 completes the proof.
\end{enumerate}

\subsubsection{Typical instances: event $T$}
In our model, there are no hard guarantees on how far away the answer $\dw(v^*,q)$ might be from the real distance $|v^*-q|$. For example, $\dw(v^*,v^*+1) \sim \Normal(1,\sigma^2)$ might be as large as $1000$, even if $\sigma=1$ (though with very low probability). While such extreme events are intuitively disadvantageous for the algorithm, they also make it harder to prove lower bounds. Therefore, we need to make basic assumptions on the range of $\dw(v^*,q)$ at high distances.

To do this, we will need to use the notion of a ``typical'' instance: a choice of $v^*$ and $w$ for which some reasonable concentration results hold. Note that part (i) below is very similar to Lemma~\ref{lemma:concentration-ub}, which we used for the upper bound.
Let $C(\delta)$ and $D(\sigma,\delta)$ be two values (defined later in Definition~\ref{def:cd}) such that
\begin{equation}
\label{eq:guarantees-on-cd}
\max(\sigma^2, e^2) \leq D(\sigma,\delta) = o_\delta(\max(\sigma^2\log\sigma, 1)).
\end{equation}

\begin{definition}[$\Typical_\delta(v^*,w)$]
\label{def:typical}
For any probability $\delta>0$, let $\Typical_\delta(v^*,w)$ be the event that the following holds:
\begin{alphaenumerate}
\item $\min(\dw(v^*,1),\dw(v^*,n)) \geq \frac{n}{C(\delta)}$;
\item for all $q$ with $d_{q} \coloneqq |v^*-q| \geq D(\sigma,\delta)$,
\begin{romanenumerate}
\item $\dw(v^*,q) \in d_{q} \pm \sigma \sqrt{d_{q}}\ln d_{q}$
\item $\dw(v^*,q) \in d_{q} \pm d_{q}/4 = [\frac{3}{4}d_{q}, \frac{5}{4}d_{q}]$.
\end{romanenumerate}
\end{alphaenumerate}
Part (a) means that the two answers from the query nodes at either end of the path are not too much smaller than their expectation $\Omega(n)$, and part (b) means that above a certain distance threshold $D(\sigma, \delta)$, all answers are concentrated around their mean.
\end{definition}



As the name indicates, most instances are typical (the proof 
is given in~\ref{proof:lemma:typical-is-likely}).
\begin{lemma}
\label{lemma:typical-is-likely}
For any probability $\delta>0$ and any $n \geq \Theta_\delta(\max(\sigma^2\ln\sigma, 1))$,
$\Pr_{v^*,w}[\Typical_{\delta}(v^*,w)] \geq 1-\delta$.
\end{lemma}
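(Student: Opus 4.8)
The plan is to prove Lemma~\ref{lemma:typical-is-likely} via a union bound over the three "bad" events: (a) that one of the endpoint answers $\dw(v^*,1),\dw(v^*,n)$ is too small; (b.i) that some far-away query node violates the $\ln d_q$ concentration bound; and (b.ii) that some far-away query node violates the constant-factor bound $[\frac34 d_q, \frac54 d_q]$. I would show each of these has probability $O(\delta)$ (with appropriate constants folded into the definitions of $C(\delta)$ and $D(\sigma,\delta)$ that are deferred to Definition~\ref{def:cd}), and then rescale so the total is at most $\delta$.

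First, I would handle the two concentration bounds in part (b), which is really a single computation. Fix $v^*$; for a query node $q$ at distance $d_q = |v^*-q|$, the answer $\dw(v^*,q)$ is the sum of $d_q$ independent $\Normal(1,\sigma^2)$ weights, hence $\dw(v^*,q)\sim \Normal(d_q, d_q\sigma^2)$ by Fact~\ref{fact:sum-normal}. By Fact~\ref{fact:normal}, $\Pr[\dw(v^*,q)\notin d_q \pm a] \le e^{-a^2/(2 d_q\sigma^2)}$. For (b.i) take $a = \sigma\sqrt{d_q}\ln d_q$, giving failure probability $\le e^{-(\ln d_q)^2/2} = d_q^{-(\ln d_q)/2}$; for (b.ii) take $a = d_q/4$, giving $\le e^{-d_q/(32\sigma^2)}$. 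Summing over all $q$ with $d_q \ge D(\sigma,\delta)$ — there are at most $2$ nodes at each distance — the tails are $\sum_{d \ge D} 2 d^{-(\ln d)/2}$ and $\sum_{d\ge D} 2 e^{-d/(32\sigma^2)}$, both of which are geometric-type tails that become $\le \delta$ once $D(\sigma,\delta)$ is chosen to be a large enough constant in the first case and $\Theta(\sigma^2\log(1/\delta))$-ish in the second; this is exactly the kind of threshold that forces $D(\sigma,\delta) = \Theta(\max(\sigma^2\log\sigma, 1))$ up to $\delta$-dependence, matching the promised bound~\eqref{eq:guarantees-on-cd}. The constraint $\max(\sigma^2, e^2)\le D$ ensures $\ln d_q \ge 2$ throughout so the first tail is genuinely summable, and that $d/(32\sigma^2)$ is bounded below.

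Second, for part (a): since $v^*$ is drawn uniformly in $[n]$, with probability $\ge 1 - 2\delta/n\cdot(\text{something})$ — more carefully, since we only need a bound that holds with probability $1-\delta$ overall and part (a) is allowed to fail on a $\delta$-fraction of $(v^*,w)$ pairs — I would argue that $\dw(v^*,1)$ has mean $|v^*-1|$ and $\dw(v^*,n)$ has mean $|v^*-n|$, and $\max(|v^*-1|,|v^*-n|) \ge (n-1)/2$, so \emph{at least one} of the two endpoints is at distance $\ge (n-1)/2$ from $v^*$. Actually part (a) requires \emph{both} answers to be $\ge n/C(\delta)$, so I need to control the closer endpoint too. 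If $v^*$ is within distance $n/C(\delta)$ of an end of the path, part (a) is hopeless, but this happens only for a $\le 2/C(\delta)$ fraction of sources, so choosing $C(\delta) \ge 4/\delta$ makes that contribute $\le \delta/2$. Conditioned on $v^*$ being at distance $\ge n/C(\delta)$ from both ends, each answer is $\Normal(d,d\sigma^2)$ with $d \ge n/C(\delta)$, and by Fact~\ref{fact:normal} the probability it drops below $n/C(\delta) \le d$ — say below $d/2$ — is $\le e^{-d/(8\sigma^2)} \le e^{-n/(8 C(\delta)\sigma^2)}$, which is $\le \delta/4$ once $n \ge \Theta_\delta(\sigma^2)$; a union bound over the two endpoints finishes it. (One has to be slightly careful that the event "both answers large" as literally stated wants $\ge n/C(\delta)$ rather than $\ge d/2$, so I would simply set the threshold in Definition~\ref{def:typical} to match, i.e. replace $C(\delta)$ by $2C(\delta)$ as needed — these constants are chosen in Definition~\ref{def:cd} precisely so everything lines up.)

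The main obstacle is not any single estimate — each is a routine Gaussian tail bound — but rather the bookkeeping of constants: the lemma defers $C(\delta)$ and $D(\sigma,\delta)$ to a later Definition~\ref{def:cd}, and one must verify that the \emph{same} choices simultaneously make all three bad events small, satisfy the sandwich~\eqref{eq:guarantees-on-cd} ($\max(\sigma^2,e^2)\le D = o_\delta(\max(\sigma^2\log\sigma,1))$), and keep the hypothesis "$n \ge \Theta_\delta(\max(\sigma^2\ln\sigma,1))$" as the only largeness requirement on $n$. Concretely I expect $C(\delta) = \Theta(1/\delta)$ and $D(\sigma,\delta) = \Theta(\sigma^2\log(\sigma/\delta) + 1)$ or similar; the delicate point is that the $\ln d_q$ in (b.i) must grow fast enough that $\sum_{d\ge D} d^{-(\ln d)/2}$ converges to something $\le \delta$ for $D$ only logarithmically dependent on $1/\delta$, which is why the bound has $D$ "essentially independent of $\delta$" in its dominant $\sigma^2\log\sigma$ term and only $o_\delta$ slack. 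Once the constants are pinned down, the union bound $\Pr[\neg\Typical_\delta] \le \delta/2 + \delta/4 + \delta/4 = \delta$ concludes the proof.
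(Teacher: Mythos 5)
Your proposal is correct and follows essentially the same route as the paper: Gaussian tail bounds (Fact~\ref{fact:normal}) plus union bounds over distances for the two concentration conditions in (b), and for (a) the exclusion of the $O(1/C(\delta))$ fraction of sources near the path's ends via the uniformity of $v^*$, followed by a concentration estimate at the endpoints (the paper reuses its multiplicative bound with $\epsilon=1/4$ where you do a fresh tail computation, a cosmetic difference). The constant bookkeeping you describe matches the paper's Definition~\ref{def:cd} with $C(\delta)=\Theta(1/\delta)$ and $D=\max(D_1(\delta/3),D_2(\delta/3,1/4,\sigma),\sigma^2,e^2)$.
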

We will apply Lemma~\ref{lemma:typical-is-likely} with $\delta\coloneqq p/2$. We will use the following shorthands.
\begin{definition}[$T,C,D$]
\label{def:t}
Let $T \coloneqq \Typical_{p/2}(v^*,w)$, $C \coloneqq C(p/2)$ and $D\coloneqq D(\sigma,p/2)$.
\end{definition}
\begin{corollary}
\label{corollary:typical-is-likely}
$\Pr[T] \geq 1-p/2$.
\end{corollary}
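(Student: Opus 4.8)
This corollary is a direct instantiation of Lemma~\ref{lemma:typical-is-likely} with $\delta \coloneqq p/2$, so the plan is simply to check that the hypothesis of that lemma is met under the standing assumptions of Theorem~\ref{thm:sequential-lb} and then read off the conclusion. First I would recall that Lemma~\ref{lemma:typical-is-likely} requires $n \geq \Theta_\delta(\max(\sigma^2\ln\sigma,1))$. In Theorem~\ref{thm:sequential-lb} we assume $n \geq \Theta_p(\max(\sigma^3,1))$; since $\sigma^2\ln\sigma = O(\sigma^3)$ and the constant hidden in $\Theta_{p/2}$ differs from the one in $\Theta_p$ only by an absolute factor, the assumption of Theorem~\ref{thm:sequential-lb} implies that of Lemma~\ref{lemma:typical-is-likely} for $\delta = p/2$.

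With the hypothesis verified, I would apply Lemma~\ref{lemma:typical-is-likely} to conclude $\Pr_{v^*,w}[\Typical_{p/2}(v^*,w)] \geq 1-p/2$. By Definition~\ref{def:t}, $T$ is precisely the event $\Typical_{p/2}(v^*,w)$, so $\Pr[T] \geq 1-p/2$, which is the claim. That is the entire argument.

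There is essentially no obstacle in this corollary itself; all of the real work lives in Lemma~\ref{lemma:typical-is-likely}, which is deferred to the appendix. For completeness, I expect that proof to proceed by a union bound: part (a) of Definition~\ref{def:typical} needs a lower bound on $\min(\dw(v^*,1),\dw(v^*,n))$, which follows from Fact~\ref{fact:sum-normal} (the answer from an endpoint is a sum of $\Theta(n)$ i.i.d.\ Gaussians, hence $\Normal(\Theta(n),\Theta(n)\sigma^2)$) together with Fact~\ref{fact:normal}, plus a small contribution from the event that $v^*$ itself is within $O_\delta(1)$ of an endpoint; and part (b) follows by applying Fact~\ref{fact:normal} to each of the $O(n)$ query nodes at distance $d_q \geq D(\sigma,\delta)$ and choosing $C(\delta)$, $D(\sigma,\delta)$ large enough (subject to \eqref{eq:guarantees-on-cd}) that the resulting geometric-type sum of failure probabilities is below $\delta$. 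The constraint $n \geq \Theta_\delta(\max(\sigma^2\ln\sigma,1))$ is exactly what makes the endpoint answers concentrated enough for part (a) to hold with the stated probability.
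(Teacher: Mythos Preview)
Your proposal is correct and matches the paper's treatment: the corollary is stated immediately after setting $\delta\coloneqq p/2$ in Lemma~\ref{lemma:typical-is-likely}, with no separate proof given. Your explicit check that the hypothesis $n \geq \Theta_p(\max(\sigma^3,1))$ of Theorem~\ref{thm:sequential-lb} dominates the required $n \geq \Theta_{p/2}(\max(\sigma^2\ln\sigma,1))$ is a reasonable addition that the paper leaves implicit.
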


\subsubsection{Measure of progress $\mu_j$ and benchmark $\lambda_j$}
\label{sss:mu}
It turns out that the right metric of progress to look at is (roughly speaking) the \emph{smallest answer value seen so far}. More precisely, suppose that the algorithm has asked $j$ queries so far (and hence is at step $j$). Then we define the quantity $\mu_j$ as follows.
\begin{definition}[$l_j,r_j,\mu_j${}]
\label{def:mu}
Let $l_j \coloneqq \argmax_{i\leq j, q_i\leq v^*}(q_i)$ and $r_j \coloneqq \argmin_{i \leq j,q_i \geq v^*}(q_i)$, which means that $q_{l_j}$ (resp. $q_{r_j}$) is the closest query node at or to the left (resp. right) of $v^*$ placed so far.
Then $\mu_j \coloneqq \min(a_{l_j},a_{r_j})$, the smaller of the corresponding answers.
\end{definition}
Note in passing that by simplifying assumption (b) in the beginning of Section~\ref{sec:sequential-lb}, the algorithm knows $l_j$ and $r_j$.
Also, if $\mu_j > 0$, then the algorithm has not found $v^*$ yet (otherwise we would have $q_{l_j} = q_{r_j} = v^*$ and thus $a_{l_j} = a_{r_j} = 0$).

We want to show that, with high probability, $\mu_j$ cannot decrease too fast with $j$. To make it formal, we define an analogous \emph{deterministic} sequence $\lambda_j$, which we will show is a lower bound for $\mu_j$ with high probability. We call $\lambda_j$ a ``benchmark'' because it is a point of comparison to determine whether the algorithm is making fast progress or not. It decreases with $j$ according to the following function.
\begin{definition}
\label{def:reduce}
Let $\reduce(x) = \frac{\sigma\sqrt{x}}{400\ln x\log n}$.
\end{definition}
\begin{definition}
\label{def:lambda}
Let $\lambda_0 \coloneqq n/C$ and $\lambda_{j+1} \coloneqq \reduce(\lambda_j)$.\footnote{If $\lambda_j\leq 0$, we define $\lambda_{j+1}\coloneqq 0$. However, we will never use such values.}
\end{definition}
Observe that by point (a) in Definition~\ref{def:typical}, $T$ implies $\mu_0 \geq n/C = \lambda_0$. Our goal will be to prove that $\mu_j \geq \lambda_j$ will likely continue to hold as $j$ increases.

\subsubsection{Events $A_j$ and $B_j$}
Informally, at step $j$, $A_j$ is the event that the algorithm has not queried any nodes very close to $v^*$, and $B_j$ is the event that the algorithm has only a vague idea of where $v^*$ is (or more precisely, that even conditioned on all the answers so far, no node has a high probability of being the source). As we will see in Section~\ref{sec:key-lemmas}), intuitively,
\begin{itemize}
\item $A_j$ implies $B_j$ because if the algorithm does not have any query nodes close to $v^*$, then the answers it got are all very noisy, and thus its confidence interval for $v^*$ is wide (Lemma~\ref{lemma:a-implies-b});
\item $B_j$ implies $A_{j+1}$ because if all nodes are very unlikely to be the source $v^*$, then wherever it decides to query the next node, it is unlikely to be very close to $v^*$ (Lemma~\ref{lemma:b-implies-a}).
\end{itemize}

As we will see, both events depend only on information that is available to the algorithm at step $j$. For convenience, we define random variable $K_j$, which describes all the knowledge of the algorithm up to step $j$. 
\begin{definition}[$K_j$]
\label{def:k}
Let $\SenRV \coloneqq (q_{-1}, \ldots, q_j)$ and $\\ \coloneqq (a_{-1}, \ldots, a_j)$ be the query nodes and answers available at step $j$. Then let $K_j = (\SenRV, \ObsRV, l_j, r_j)$. This encodes the locations of all query nodes, the answers received from them, as well as the identity of the two query nodes between which $v^*$ lies.
\end{definition}
$A_j$ is the event that $\mu_j$ is greater than the benchmark $\lambda_j$.
\begin{definition}
\label{def:a}
Let $A_j$ be the event that $\mu_j \geq \lambda_j$.
\end{definition}
$B_j$ is the event that the posterior of $v^*=v$ given $K_j$ is ``diluted''.
\begin{definition}
\label{def:b}
Let $B_j$ be the event that for all nodes $v \in V$,
\[\Pr[T \wedge (v^*=v) \mid K_j] \leq \dilution{1}.\]
\end{definition}
Note that $\Pr[T \wedge (v^*=v) \mid K_j]$ itself is a random variable since it depends on $K_j$, so $B_j$ is still a random event even though it is a statement about a probability. An equivalent way to define $B_j$ is to first define random variable
\[P_j \coloneqq \max_{v \in V} \Pr[T \wedge (v^*=v) \mid K_j],\]
then to let $B_j$ be the event that $P_j \leq \dilution{1}$.

\subsubsection{Stopping step $\jStop$}
Our goal is to show that for a high value of $j$, we have $\mu_j > 0$ with high probability, and therefore the algorithm has failed to find $v^*$ using only $j$ queries. We now define that value of $j$.
\begin{definition}[$\jMin,\jStop$]
\label{def:jstop}
Let $\jMin$ be the smallest integer $j \geq 0$ such that $\lambda_j < D$. Then
\[\jStop \coloneqq \min\left(\jMin-1, \floor*{\frac{p\log n}{2}}\right).\]
\end{definition}
This means that at step $j \leq \jStop$, $\lambda_j \geq D$ is still big enough for the concentration bounds of event $T$ to hold. The second argument of the $\min(\cdot,\cdot)$ is just for convenience of the proof, and will not matter if $n$ is large enough. We will also use the following easily believable fact, proved in~\ref{proof:fact:lambda-decreasing}.
\begin{fact}
\label{fact:lambda-decreasing}
$\lambda_0 > \lambda_1 > \cdots > \lambda_{\jStop} > \lambda_{\jStop+1}$.
\end{fact}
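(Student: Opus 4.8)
The plan is to reduce the whole statement to the single observation that the map $\reduce$ strictly decreases its argument throughout the range of values $\lambda_0,\ldots,\lambda_{\jStop}$, so that $\lambda_{j+1}=\reduce(\lambda_j)<\lambda_j$ for each relevant $j$ and the chain of strict inequalities follows at once. The first step is to show that $\reduce(x)<x$ whenever $x\geq D$. By~\eqref{eq:guarantees-on-cd} we have $D\geq\max(\sigma^2,e^2)$, so any $x\geq D$ satisfies simultaneously $\sigma\leq\sqrt{x}$ and $\ln x\geq 2$, and moreover $\log n\geq 1$ since $n\geq 3$. Plugging these into Definition~\ref{def:reduce},
\[
\reduce(x)=\frac{\sigma\sqrt{x}}{400\ln x\,\log n}\leq\frac{\sqrt{x}\cdot\sqrt{x}}{400\cdot 2\cdot 1}=\frac{x}{800}<x .
\]
In particular $\reduce(x)$ is positive and well-defined on this range, so the footnote exception in Definition~\ref{def:lambda} is never triggered.

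The second step is to check that every one of $\lambda_0,\ldots,\lambda_{\jStop}$ lies in the regime $x\geq D$. By Definition~\ref{def:jstop}, $\jStop\leq\jMin-1$, so every index $j\in\{0,1,\ldots,\jStop\}$ satisfies $j<\jMin$; since $\jMin$ is by definition the \emph{smallest} index with $\lambda_j<D$, this forces $\lambda_j\geq D$ for all such $j$. (The statement implicitly assumes $\jStop\geq 0$, i.e. $\jMin\geq 1$, i.e. $\lambda_0=n/C\geq D$; this holds under the hypotheses of Theorem~\ref{thm:sequential-lb}, since $C=C(p/2)$ is a constant independent of $n$ and $\sigma$ whereas by~\eqref{eq:guarantees-on-cd} $D=o_\delta(\max(\sigma^2\log\sigma,1))$ while $n=\Omega_p(\max(\sigma^3,1))$, so $n/C\geq D$ once $n$ is large enough. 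The same hypotheses guarantee that $\jMin$ is finite, because $\reduce(x)/x=\frac{\sigma}{400\sqrt{x}\ln x\,\log n}\to 0$ as $x\to\infty$, so the decreasing sequence cannot stay above $D$ forever.)

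It then remains only to combine the two steps: for each $j\in\{0,1,\ldots,\jStop\}$, the second step gives $\lambda_j\geq D$, so the first step applied with $x=\lambda_j$ gives $\lambda_{j+1}=\reduce(\lambda_j)<\lambda_j$; chaining these strict inequalities for $j=0,1,\ldots,\jStop$ yields $\lambda_0>\lambda_1>\cdots>\lambda_{\jStop}>\lambda_{\jStop+1}$. I do not anticipate any genuine obstacle here: the only points needing (minor) care are keeping every index up to $\jStop$ inside the regime $x\geq D$ where $\reduce$ is a strict contraction below the identity, and the bookkeeping around $\log n\geq 1$, the well-definedness of $\jMin$, and $\jStop\geq 0$, all of which are forced by the standing hypotheses.
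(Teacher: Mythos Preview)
Your proof is correct and follows essentially the same route as the paper: both arguments use that $j\leq\jStop$ forces $\lambda_j\geq D\geq\max(\sigma^2,e^2)$ via Definition~\ref{def:jstop} and~\eqref{eq:guarantees-on-cd}, then bound $\reduce(\lambda_j)<\lambda_j$ using $\sigma\leq\sqrt{\lambda_j}$, $\ln\lambda_j\geq 2$, and $\log n\geq 1$. Your parenthetical on the finiteness of $\jMin$ is unnecessary (if $\jMin=\infty$ then $\jStop=\lfloor p\log n/2\rfloor$ is still finite and every $\lambda_j$ with $j\leq\jStop$ still satisfies $\lambda_j\geq D$), and the stated reason via $\reduce(x)/x\to 0$ as $x\to\infty$ is not the relevant one anyway, but this does not affect the main argument.
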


\subsubsection{Key lemmas}
\label{sec:key-lemmas}
We now state our two main lemmas, which constitute the core technical part of the proof. The proof of Lemma~\ref{lemma:a-implies-b} is very technical and not particularly enlightening, so it is deferred to~\ref{proof:lemma:a-implies-b}. The proof of Lemma~\ref{lemma:b-implies-a}, on the other hand, is much more straightforward, and we include it here.
\begin{lemma}
\label{lemma:a-implies-b}
If $j < \jStop$, then $A_j \Rightarrow B_j$.
\end{lemma}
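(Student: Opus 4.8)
The plan is to show that event $A_j$ — i.e.\ $\mu_j \geq \lambda_j$, meaning the closest query node to $v^*$ on each side is still at distance at least $\lambda_j \geq D$ — forces the posterior on every node to be diluted. The first step is to observe that, because we are in the regime $j < \jStop$, we have $\lambda_j \geq D$, so under $T \wedge A_j$ both answers $a_{l_j}$ and $a_{r_j}$ come from query nodes at distance $\geq D$ from $v^*$ and therefore enjoy the concentration bounds in Definition~\ref{def:typical}(b): each answer is within $\sigma\sqrt{d_q}\ln d_q$ of the true distance. The key consequence is that the \emph{likelihood} of the observed answer $a_{l_j}$ (respectively $a_{r_j}$), viewed as a function of the candidate source position $v$, is a Gaussian density in $v$ with standard deviation $\Theta(\sigma\sqrt{\mu_j}\,)$ — a width comparable to $\mu_j\log n$ once we account for the $\ln$ factors and the $\log n$ slack built into $\reduce$. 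So the posterior mass that $K_j$ can concentrate on any single node is at most the maximum of such a Gaussian density, which is $O\!\big(1/(\sigma\sqrt{\mu_j}\,)\big)$, and this is exactly the kind of bound that the definition of $B_j$ demands.

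Concretely, I would fix a node $v$ and expand $\Pr[T \wedge (v^* = v) \mid K_j]$ using Bayes' rule: it is proportional to the prior $\tfrac1n$ times the likelihood of the full transcript $(\{q_i\}_{\leq j}, \{a_i\}_{\leq j})$ under $v^* = v$, restricted to the event $T$. As in the proof of Claim~\ref{claim:advantage}, all the query nodes other than $q_{l_j}$ and $q_{r_j}$ carry no information separating $v$ from other candidates in the interval $[q_{l_j}, q_{r_j}]$ — their answers can be re-simulated from $a_{l_j}, a_{r_j}$ — so the relevant factor is the joint density of $(a_{l_j}, a_{r_j})$ evaluated at the observed values under the hypothesis $v^* = v$, which (since the two edge-disjoint shortest paths give independent Gaussians) factors as $g_{l}(a_{l_j}) \cdot g_{r}(a_{r_j})$ with $g_{l}$ the $\Normal(|v - q_{l_j}|, |v-q_{l_j}|\sigma^2)$ density and similarly for $g_r$. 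I then bound this density pointwise: the Gaussian density $g_l$ at its argument is at most $\frac{1}{\sqrt{2\pi |v-q_{l_j}|}\,\sigma}$, and since $A_j$ and Fact~\ref{fact:lambda-decreasing} give $|v - q_{l_j}| \geq \lambda_j$ (or one uses $\mu_j \geq \lambda_j$ together with the two-sided bound in Definition~\ref{def:typical}(b)(ii) to convert the answer value into a distance), this is $O\!\big(1/(\sigma\sqrt{\lambda_j}\,)\big)$. Dividing by the normalizing constant $\Pr[T \mid \{q_i\}_{\leq j}] = \Omega(1)$ and plugging in $\reduce(\lambda_j) = \lambda_{j+1}$, the $400 \ln(\cdot)\log n$ factor in $\reduce$ is precisely what is needed to absorb the constants and the $\ln\lambda_j$ from the concentration window, yielding $\Pr[T \wedge (v^* = v)\mid K_j] \leq \frac{\lambda_{j+1}}{((8/3)\lambda_{j+1} + 1)\log n}$, i.e.\ the bound $\dilution{1}$ in Definition~\ref{def:b}.

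The main obstacle, and where I expect the bulk of the technical work to live (which is presumably why the authors defer it to an appendix), is handling the conditioning on $T$ correctly and uniformly in $v$. The event $T$ is itself a global statement about all the weights, so $\Pr[T \wedge (v^* = v) \mid K_j]$ is not simply a product of Gaussian densities — one must argue that conditioning on $T$ does not inflate the density for any particular $v$ by more than a constant, and that the denominator $\Pr[T\mid K_j]$ (averaged over $v$) is bounded below. This requires knowing that $T$ has probability bounded away from $0$ even after conditioning on the transcript, which in turn uses that the transcript only constrains weights along a few shortest paths while $T$'s constraints at the relevant distance scale are satisfied with room to spare. A second subtlety is that $\mu_j$ is the \emph{minimum} of the two answers, not the distance, so to turn "the smaller answer is $\geq \lambda_j$" into "the corresponding distance is $\gtrsim \lambda_j$" one invokes the two-sided concentration $a_q \in [\tfrac34 d_q, \tfrac54 d_q]$ from Definition~\ref{def:typical}(b)(ii), which is valid precisely because $d_q \geq D$ in this range of $j$ — and one must be careful that it is indeed the query node \emph{closer} to $v^*$ (in distance) whose answer is being controlled, or handle both cases. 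Once these conditioning issues are dispatched, the remaining estimate is the routine Gaussian-density bound sketched above.
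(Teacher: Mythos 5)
Your overall architecture matches the paper's: you reduce the posterior given the full transcript $K_j$ to the posterior given only the two nearest query nodes and their answers (the paper's Lemma~\ref{lemma:only-lr}), you convert the answer lower bound $\mu_j\geq\lambda_j$ into a distance lower bound via Definition~\ref{def:typical}(b)(ii) (the paper's Claim~\ref{claim:dl-dr-prime}), and you aim for a posterior bound of order $\ln\mu_j/(\sigma\sqrt{\mu_j})$. However, the central quantitative step is not valid as written. You bound the \emph{numerator} of Bayes' rule --- the likelihood $g_l(a_{l_j})\,g_r(a_{r_j})$ at a candidate $v$ --- by the maximum of the Gaussian density, and then divide by a normalizer that you claim is $\Omega(1)$. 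The correct normalizer is not $\Pr[T\mid \{q_i\}_{\le j}]$ but the total likelihood mass $\sum_{v'\in[q_{l_j},q_{r_j}]} g_l(a_{l_j};v')\,g_r(a_{r_j};v')$, and under $T$ this quantity is \emph{not} bounded below by a constant: event $T$ only guarantees $a_{l_j}+a_{r_j} \in (q_{r_j}-q_{l_j}) \pm O\bigl(\sigma\sqrt{q_{r_j}-q_{l_j}}\,\ln(q_{r_j}-q_{l_j})\bigr)$, i.e.\ the two answers may be mutually inconsistent by up to $\ln(\cdot)$ standard deviations, which makes the total likelihood mass (essentially the density of $\Normal(q_{r_j}-q_{l_j},(q_{r_j}-q_{l_j})\sigma^2)$ at $a_{l_j}+a_{r_j}$) as small as $e^{-\Theta(\ln^2 \mu_j)}/(\sigma\sqrt{\mu_j})$. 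The numerator at the most likely candidate is small by the same exponential factor, so only the \emph{ratio} of the peak likelihood to the total mass can be controlled; a pointwise upper bound on the numerator combined with a (false) constant lower bound on the denominator proves nothing.

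The missing ingredient is a flatness (anti-concentration in $v'$) argument for the likelihood as a function of the candidate source: one shows that the log-likelihood $G(v')$ has derivative $O(\ln\mu_j/(\sigma\sqrt{\mu_j}))$ on a neighborhood of the consistent interval, hence the likelihood stays within a factor $e$ of its peak on an interval $J$ of length $\Omega(\sigma\sqrt{\mu_j}/\ln\mu_j)$, and therefore the normalized peak value is at most $e/|J| = O(\ln\mu_j/(\sigma\sqrt{\mu_j}))$. This is precisely the content of the paper's Lemma~\ref{lemma:diluted} (via Claim~\ref{claim:interval-j} and Gr\"onwall's inequality), and it is the part your sketch replaces with the unjustified $\Omega(1)$ normalizer. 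Separately, the difficulty you flag about conditioning on $T$ is a non-issue: the paper simply drops $T$ from the numerator, $\Pr[T\wedge(v^*=v)\mid K_j]\le\Pr[v^*=v\mid K_j]$, and never needs a lower bound on $\Pr[T\mid K_j]$.
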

\begin{lemma}
\label{lemma:b-implies-a}
If $j \leq \jStop$, then $\Pr[\neg T \vee \neg A_j \vee \neg B_j \vee A_{j+1}] \geq 1-\frac{1}{\log n}$.
\end{lemma}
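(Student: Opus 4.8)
The plan is to show that, conditioned on any fixed value of $K_j$ for which $A_j \wedge B_j$ holds, the next query node $q_{j+1}$ chosen by the algorithm is unlikely to land within distance $\lambda_{j+1}$ of $v^*$; and that even if it does land that close, the answer $a_{j+1}$ it receives is unlikely to be small enough to drop $\mu_{j+1}$ below $\lambda_{j+1}$. Either way, $\mu_{j+1}\geq\lambda_{j+1}$, i.e. $A_{j+1}$ holds. Since we only need this with probability $1-\frac{1}{\log n}$, there is a comfortable slack to absorb a union bound.

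First I would condition on $K_j$ and note that, given $K_j$, the identity of $v^*$ is distributed according to the posterior $\Pr[v^*=v\mid K_j]$, and that the algorithm's choice of $q_{j+1}$ is a (possibly randomized, via $R$) deterministic function of $K_j$ — in particular $q_{j+1}$ is conditionally independent of $v^*$ given $K_j$ and $R$. The only way $A_{j+1}$ can fail is if $\mu_{j+1}<\lambda_{j+1}$, which (since $\mu_{j+1}=\min(\mu_j, \text{new relevant answer})$ and $A_j$ already gives $\mu_j\geq\lambda_j>\lambda_{j+1}$ by Fact~\ref{fact:lambda-decreasing}) requires that the \emph{new} query node $q_{j+1}$ lies strictly between $q_{l_j}$ and $q_{r_j}$ (so that it becomes a new $q_{l_{j+1}}$ or $q_{r_{j+1}}$) and that its answer $a_{j+1}=\dw(v^*,q_{j+1})$ is less than $\lambda_{j+1}$.

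Now I would split on the distance $|v^*-q_{j+1}|$. Under the typical event $T$, if $|v^*-q_{j+1}|\geq D$ then part (b)(ii) of Definition~\ref{def:typical} forces $a_{j+1}\geq \frac34 |v^*-q_{j+1}|\geq \frac34 D > \lambda_{j+1}$ (using $D\geq\lambda_{j+1}$, valid since $j+1\leq\jMin$ so $\lambda_{j+1}\geq D$ fails — I'd double-check the index bookkeeping here, but $j<\jStop\leq\jMin-1$ gives $\lambda_{j+1}\geq D$ is actually what we want, so this case contributes nothing to the failure and I'd instead use it in the complementary direction). So the real danger is $|v^*-q_{j+1}|<D$, i.e. $q_{j+1}$ lands in the window $(v^*-D, v^*+D)$ of $2D-1$ nodes around the source. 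Here is where $B_j$ enters: since $q_{j+1}$ is conditionally independent of $v^*$ given $K_j$, for any candidate location $q$ of the next query, $\Pr[|v^*-q_{j+1}|<D, q_{j+1}=q \mid K_j] \leq \Pr[v^* \in (q-D,q+D)\mid K_j] \leq (2D-1)\cdot\dilution{1}$ by $B_j$ (each node has posterior mass at most $\dilution{1}$, here I am folding the $T$ in Definition~\ref{def:b}'s statement $\Pr[T\wedge(v^*=v)\mid K_j]$ and handling the $\neg T$ slack via the $\neg T$ disjunct in the lemma). Summing over the — at most countably many, in fact at most the single — value of $q_{j+1}$ the algorithm actually picks gives $\Pr[\text{danger zone}\mid K_j]\leq (2D-1)\dilution{1}$.

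The main obstacle, and the place I'd spend the most care, is checking that this bound is $\leq\frac{1}{\log n}$: we need $(2D-1)\cdot\frac{\lambda_{j+1}}{(\frac83\lambda_{j+1}+1)\log n}\lesssim \frac{D}{\log n}$, which is \emph{not} $\leq\frac{1}{\log n}$ unless $D=O(1)$ — so the dilution factor must actually be scaled by something like $\lambda_{j+1}$ rather than $1$, or the danger-zone argument must also use that even \emph{inside} the window the answer $a_{j+1}\sim\Normal(|v^*-q_{j+1}|, |v^*-q_{j+1}|\sigma^2)$ is unlikely to be below $\lambda_{j+1}$ when $|v^*-q_{j+1}|$ is a moderate fraction of $\lambda_{j+1}$; combining a Gaussian lower-tail bound (Fact~\ref{fact:normal}) with the posterior-mass bound from $B_j$, integrating the posterior over $v^*$ at distance $t$ from $q_{j+1}$ against $\Pr[\Normal(t,t\sigma^2)<\lambda_{j+1}]$, and using the definition $\reduce(x)=\frac{\sigma\sqrt x}{400\ln x\log n}$ to see that $\lambda_{j+1}$ is so much smaller than $\sigma\sqrt{\lambda_j}$ that the relevant $t$'s (those with $t\sigma^2\gtrsim \lambda_{j+1}^2$, say) carry total posterior mass $O(1/\log n)$. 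Getting the constants (the $400$, the $\frac83$) to line up in this last estimate is the crux; the rest is bookkeeping and a union bound over the $\neg T$, $\neg A_j$, $\neg B_j$ cases which are handled by simply taking them as disjuncts.
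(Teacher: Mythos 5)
Your skeleton (condition on $K_j$, use $B_j$ to bound the posterior mass of a ``danger zone'' around $q_{j+1}$, and absorb $\neg T,\neg A_j,\neg B_j$ as disjuncts) is the right one, but you mis-identify the danger zone, and that error is what derails your endgame. For the relevant range $j<\jStop$ we have $\lambda_{j+1}\geq D$, so your claim that $d\coloneqq|v^*-q_{j+1}|\geq D$ forces $a_{j+1}\geq\tfrac34 D>\lambda_{j+1}$ is backwards (you even note the contradiction yourself and then proceed anyway). Concentration bound (b)(ii) of Definition~\ref{def:typical} gives $a_{j+1}\geq\tfrac34 d$ once $d\geq D$, and to conclude $a_{j+1}\geq\lambda_{j+1}$ you need $d\geq\tfrac43\lambda_{j+1}$, not merely $d\geq D$. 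The correct danger zone is therefore $I\coloneqq V\cap\bigl(q_{j+1}\pm\tfrac43\lambda_{j+1}\bigr)$, of size at most $\tfrac83\lambda_{j+1}+1$ --- exactly the factor sitting in the denominator of the dilution bound of Definition~\ref{def:b}. Summing $\Pr[T\wedge(v^*=v)\mid K_j]\leq\dilution{1}$ over $v\in I$ gives precisely $\tfrac{1}{\log n}$, with nothing left to check. Your arithmetic ``obstacle'' is an artifact of two compounding slips: the zone of radius $D$ instead of $\tfrac43\lambda_{j+1}$, and reading the per-node dilution threshold as $\lambda_{j+1}\big/\bigl((\tfrac83\lambda_{j+1}+1)\log n\bigr)$ rather than $1\big/\bigl((\tfrac83\lambda_{j+1}+1)\log n\bigr)$.

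The repair you then propose --- a fresh Gaussian lower-tail estimate for $a_{j+1}\sim\Normal(t,t\sigma^2)$ integrated against the posterior --- is both unnecessary and unsound as sketched. Conditioned on $K_j$, the answer $a_{j+1}$ is \emph{not} distributed as $\Normal(t,t\sigma^2)$: it shares edge weights with the already-revealed answers (e.g.\ $\dw(v^*,q_{l_j})=\dw(v^*,q_{j+1})+\dw(q_{j+1},q_{l_j})$ when $q_{j+1}$ lies between $q_{l_j}$ and $v^*$), so its conditional law is a constrained (bridge-like) Gaussian. Avoiding exactly this re-derivation of tail bounds under adaptive conditioning is why all quantitative control in the lower bound is routed through the single event $T$. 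With the danger zone corrected, the one-line union bound over $I$ closes the proof and none of that extra machinery is needed.
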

Note that ``$\neg T \vee \neg A_j \vee \neg B_j \vee A_{j+1}$'' is logically equivalent to ``$(T \wedge A_j \wedge B_j) \Rightarrow A_{j+1}$''. 
Intuitively, if $B_j$ holds, then the probability of $v^* = v$ (conditioned on the answers so far) is low for any $v$, which means that whatever the algorithm picks as its next query $q_{j+1}$, the probability that $q_{j+1}$ is within some distance $d$ of $v^*$ is upper bounded by the sum of those probabilities over $v \in [q_{j+1}-d,q_{j+1}+d]$. Therefore, with high probability, $a_{j+1}$ will not be too small, and the same holds for $\mu_{j+1}$.


\begin{proof}[Proof of Lemma~\ref{lemma:b-implies-a}]
We will show equivalently that $\Pr[T \wedge A_j \wedge B_j \wedge \neg A_{j+1}] \leq \frac{1}{\log n}$.

At step $j$, the algorithm queries node $q_{j+1}$ based on the information $K_j$ it has so far and its internal randomness $R$, then receives answer $a_{j+1}$. The the only way for both $A_j$ and $\neg A_{j+1}$ to hold is for the new answer $a_{j+1}$ to be smaller than $\lambda_{j+1}$.\footnote{Formally, if $A_j \wedge \neg A_{j+1}$, then using Fact~\ref{fact:lambda-decreasing} we have $\mu_{j+1} < \lambda_{j+1} < \lambda_j \leq \mu_j$. Therefore, $\mu_{j+1} = a_{j+1}$, and thus $a_{j+1} < \lambda_{j+1}$.}

Let $d\coloneqq |v^*-q_{j+1}|$. If we had $d \geq \frac{4}{3}\lambda_{j+1} \geq D$, then if $T$ occurs, by concentration bound (ii) we would have $a_{j+1} \geq \frac{3}{4}d \geq \lambda_{j+1}$. Let $I\coloneqq V \cap (q_{j+1} \pm \frac{4}{3}\lambda_{j+1})$ ($I$ also depends on $(K_j,R)$). Then the only way to have $a_{j+1} < \lambda_{j+1}$ is for $v^*$ to be in $I$, which implies
\begin{equation}
\label{eq:v-in-i}
\Pr[T \wedge A_j \wedge B_j \wedge \neg A_{j+1}] \leq \Pr[T \wedge v^* \in I \wedge B_j].
\end{equation}

Now, for any assignment $(k_j, r)$ of random variables $(K_j,R)$, we have
\begin{align*}
&\Pr[T \wedge v^* \in I \wedge B_j \mid K_j=k_j \wedge R=r]\\
&\qquad= \sum_{v \in I} \Pr[T \wedge (v^*=v) \wedge B_j \mid K_j=k_j \wedge R=r]\tag{$I$ is fixed by $(K_j, R)$}\\
&\qquad= \sum_{v \in I} \Pr[T \wedge (v^*=v) \wedge B_j \mid K_j=k_j].\tag{$T,v^*$ are independent from $R$, and $B_j$ is fixed by $K_j$}
\end{align*}
If $B_j$ is false given $K_j=k_j$, then the above sum has probability $0$. If on the other hand $B_j$ is true given $K_j=k_j$, then by definition of $B_j$,
\begin{align*}
\sum_{v \in I} \Pr[T \wedge (v^*=v) \wedge B_j \mid K_j=k_j]
&= \sum_{v \in I(k_j,r)} \Pr[T \wedge (v^*=v) \mid K_j=k_j]\\
&\leq \dilution{|I|} \leq \frac{1}{\log n}.
\end{align*}
Therefore, in either case, $\Pr[T \wedge v^* \in I \wedge B_j] \leq \frac{1}{\log n}$, which, combined with \eqref{eq:v-in-i}, completes the proof.
\end{proof}

\subsubsection{Induction on $j$}
Lemmas~\ref{lemma:a-implies-b} and~\ref{lemma:b-implies-a} can now be chained to obtain the following result.
\begin{lemma}
\label{lemma:a-jstop}
$\Pr[A_\jStop] \geq 1-p$.
\end{lemma}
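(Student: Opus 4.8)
The plan is a short induction on $j$ that simply chains the two key lemmas while carrying the event $T$ along throughout. The quantity I would track is $\Pr[\neg(T \wedge A_j)]$, and I would show it increases by at most $\tfrac{1}{\log n}$ per step, starting from $\Pr[\neg T]$; that is, $\Pr[\neg(T\wedge A_j)] \le \Pr[\neg T] + \tfrac{j}{\log n}$ for all $0 \le j \le \jStop$.

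For the base case $j=0$: by part (a) of Definition~\ref{def:typical} together with $\lambda_0 = n/C$ from Definition~\ref{def:lambda}, the event $T$ forces $\mu_0 \ge n/C = \lambda_0$, i.e.\ $T \Rightarrow A_0$. Hence $\neg(T \wedge A_0) = \neg T$ and the claimed bound holds with equality. For the inductive step, fix $j$ with $0 \le j < \jStop$ and assume the bound at $j$. Writing the complement as the disjoint union $\neg(T\wedge A_{j+1}) = \neg T \;\sqcup\; (T \wedge \neg A_{j+1})$ and observing that $T \wedge \neg A_{j+1} \subseteq (T \wedge \neg A_j) \cup (T \wedge A_j \wedge \neg A_{j+1})$, a union bound gives
\[
\Pr[\neg(T\wedge A_{j+1})] \le \Pr[\neg T] + \Pr[T \wedge \neg A_j] + \Pr[T \wedge A_j \wedge \neg A_{j+1}].
\]
Since $j < \jStop$, Lemma~\ref{lemma:a-implies-b} gives $A_j \Rightarrow B_j$, so $T \wedge A_j \wedge \neg A_{j+1} = T \wedge A_j \wedge B_j \wedge \neg A_{j+1}$; and Lemma~\ref{lemma:b-implies-a}, which applies because $j \le \jStop$, bounds the probability of that event by $\tfrac{1}{\log n}$. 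Recombining $\Pr[\neg T] + \Pr[T \wedge \neg A_j] = \Pr[\neg(T\wedge A_j)]$ and applying the induction hypothesis yields $\Pr[\neg(T\wedge A_{j+1})] \le \Pr[\neg T] + \tfrac{j+1}{\log n}$, closing the induction.

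Taking $j=\jStop$ then gives $\Pr[\neg(T\wedge A_{\jStop})] \le \Pr[\neg T] + \tfrac{\jStop}{\log n}$. By Corollary~\ref{corollary:typical-is-likely} the first term is at most $p/2$, and by Definition~\ref{def:jstop} we have $\jStop \le \lfloor p\log n/2\rfloor \le p\log n/2$, so the second term is at most $p/2$ as well. Hence $\Pr[\neg A_{\jStop}] \le \Pr[\neg(T\wedge A_{\jStop})] \le p$, i.e.\ $\Pr[A_{\jStop}] \ge 1-p$, as desired. I do not expect any real obstacle in this argument: all the difficulty has already been absorbed into Lemmas~\ref{lemma:a-implies-b} and~\ref{lemma:b-implies-a}. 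The only points requiring care are the disjointness bookkeeping in the union bound (so the $\Pr[\neg T]$ term is counted exactly once rather than accumulating) and verifying at each step that the index constraints $j < \jStop$ for Lemma~\ref{lemma:a-implies-b} and $j \le \jStop$ for Lemma~\ref{lemma:b-implies-a} are both met, which they are since the induction only advances from $j$ to $j+1$ for $j \le \jStop - 1$.
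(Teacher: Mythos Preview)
Your proposal is correct and takes essentially the same approach as the paper: both chain Lemmas~\ref{lemma:a-implies-b} and~\ref{lemma:b-implies-a} starting from $T\Rightarrow A_0$, and both bound the total failure probability by $\Pr[\neg T]+\jStop/\log n\le p/2+p/2$. The only difference is presentational---the paper does one global union bound over all the ``bad'' events and then chains the implications logically, while you run an explicit induction on $\Pr[\neg(T\wedge A_j)]$---but the content is identical.
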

\begin{proof}
First, as already noted at the end of Section~\ref{sss:mu}, $T$ implies $\mu_0 \geq \lambda_0$, which means that $T \Rightarrow A_0$ (by definition of $A_0$). Also, by Lemma~\ref{lemma:a-implies-b}, $A_j \Rightarrow B_j$ for $0 \leq j < \jStop$.
In addition, by Corollary~\ref{corollary:typical-is-likely}, we have $\Pr[T] \geq 1-p/2$, and by Lemma~\ref{lemma:b-implies-a}, for $0 \leq j < \jStop$, we have $\Pr[\neg T \vee \neg A_j \vee \neg B_j \vee A_{j+1}] \geq 1-\frac{1}{\log n}$. Therefore by a union bound, both $T$ and ``$\neg T \vee \neg A_j \vee \neg B_j \vee A_{j+1}$ for $0 \leq j < \jStop$'' simultaneously hold with probability at least
\begin{align*}
1-p/2 - \frac{\jStop}{\log n} \geq 1-p/2 - \frac{\floor*{\frac{p\log n}{2}}}{\log n}\geq 1-p.\tag{by Definition~\ref{def:jstop}}
\end{align*}
If they do hold, then the following logical statements are all true:
``$T$'', ``$T \Rightarrow A_0$'', ``$A_j \Rightarrow B_j$'' ($\forall\,j < \jStop$), and ``$(T \wedge A_j \wedge B_j) \Rightarrow A_{j+1}$'' ($\forall\,j < \jStop$).
It is easy to see that, chained together, they imply $A_\jStop$.
\end{proof}

\subsubsection{Asymptotics of $\jStop$}
The following lemma gives us an asymptotic lower bound on $\jStop$. We prove it in~\ref{proof:lemma:asymptotics-lb}.
\begin{lemma}
\label{lemma:asymptotics-lb}
For $n \geq \Theta_p(\max(\sigma^3, 1))$, we have
\[
\jStop+1 =
\begin{cases}
\Omega_{p}(1 + \log (1+\log_{1/\sigma} n))\text{ if $\sigma^2 \leq 1/2$}\\
\Omega_{p}(\log \log n)\text{ if $\sigma^2 \geq 1/2$.}
\end{cases}
\]
\end{lemma}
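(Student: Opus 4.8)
The plan is to analyze the recursion $\lambda_{j+1} = \reduce(\lambda_j) = \frac{\sigma\sqrt{\lambda_j}}{400 \ln \lambda_j \log n}$ starting from $\lambda_0 = n/C$ and count how many steps it takes for $\lambda_j$ to drop below the threshold $D = \Theta_p(\max(\sigma^2 \log \sigma, 1))$; by Definition~\ref{def:jstop}, $\jStop$ is essentially that step count (minus one), as long as it does not exceed $\floor{p \log n / 2}$, which will hold automatically once $n$ is large. The key observation, already flagged in the intuition before Claim~\ref{claim:asymptotics-ub}, is that if we set $x_j \coloneqq \lambda_j / \sigma^2$ (so that the fixed point of the ``clean'' recursion $\lambda \mapsto \sigma\sqrt{\lambda}$ is at $x = 1$), then ignoring the $\ln$ and $\log n$ factors the recursion is roughly $x_{j+1} \approx \sqrt{x_j}$, i.e. $\log x_{j+1} \approx \tfrac12 \log x_j$: the log of $x_j$ halves each step. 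So the number of steps to bring $x_j$ from its initial value down to $O(1)$ is about $\log_2 \log x_0 = \log \log(\lambda_0/\sigma^2) = \log\log(n/(C\sigma^2))$, which in the $\sigma^2 \le 1/2$ regime is $\Theta(\log(1 + \log_{1/\sigma} n))$ after rewriting $\log(n/\sigma^2) = \log n + 2\log(1/\sigma) = \Theta(\log n \cdot \log_{1/\sigma} n / \ldots)$ — more carefully, $\log\log(n/\sigma^2)$, and one checks that $\log_{1/\sigma} n$ enters because the relevant base of the inner logarithm is $1/\sigma$. In the $\sigma^2 \ge 1/2$ regime, $\sigma^2$ is essentially a constant from the point of view of the double-log, so the count is just $\Theta(\log\log n)$.

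Concretely I would proceed as follows. \textbf{Step 1:} Handle the polylogarithmic correction factors. Write $\reduce(x) = \frac{\sigma \sqrt x}{h(x)}$ with $h(x) = 400 \ln x \log n$. Since we only care about $x \ge D \ge \max(\sigma^2, e^2)$, and $\lambda_j$ is bounded above by $\lambda_0 = n/C \le n$, we have $\ln x \le \ln n$ throughout, so $h(x) \le 400 (\ln n)(\log n) = \polylog(n)$, and also $h(x) \ge 400 \log n$. The point is that $h(x)$ changes by at most a $\polylog(n)$ factor across the whole run, so it can be absorbed: I would show that $\lambda_{j+1} \ge \frac{\sigma \sqrt{\lambda_j}}{L}$ for a fixed $L = \polylog(n)$, hence $\lambda_j / \sigma^2 \ge (\lambda_0/\sigma^2)^{2^{-j}} / L^{2}$ or a similar clean bound obtained by unrolling, and separately an upper bound of the same shape. \textbf{Step 2:} Solve the clean recursion. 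Setting $y_j \coloneqq \log(\lambda_j / (\sigma^2 \cdot \text{something}))$, the recursion becomes approximately $y_{j+1} = \tfrac12 y_j - \log L$, whose solution is $y_j = 2^{-j} y_0 - 2\log L (1 - 2^{-j})$; this stays positive (i.e. $\lambda_j$ stays above the threshold) for all $j$ up to $j^* = \Theta(\log(y_0 / \log L)) = \Theta(\log(\log(\lambda_0/\sigma^2) - O(\log\log n)))$. \textbf{Step 3:} Plug in $\lambda_0 = n/C$, split into the two cases $\sigma^2 \le 1/2$ and $\sigma^2 \ge 1/2$, and simplify, using the hypothesis $n \ge \Theta_p(\max(\sigma^3,1))$ to guarantee that $\log(\lambda_0/\sigma^2)$ dominates the $O(\log\log n)$ additive loss so that $j^*$ is genuinely $\Omega(\log\log \cdot)$ and not zero. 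In the first case, $\log(\lambda_0/\sigma^2) = \log n + 2 \ln(1/\sigma) = \Theta(\log n)$ if $\sigma$ is bounded away from $1$... — actually here one should be careful: when $\sigma \to 0$, $\log_{1/\sigma} n = \frac{\log n}{\log(1/\sigma)}$ can be small, and indeed the theorem's bound $\log(1 + \log_{1/\sigma} n)$ reflects exactly that the inner quantity is $\log n / \log(1/\sigma)$ rather than $\log n$; I would trace through whether the correct inner expression coming out of the recursion is $\log n / \log(1/\sigma)$ (it should be, because the number of ``square-root'' steps to get from $n$ down to $\sigma^2$ when each step also multiplies by $\sigma$ is governed by $\log_{1/\sigma}(n/\sigma^2) = \Theta(\log_{1/\sigma} n)$, not by $\log_2(n)$). \textbf{Step 4:} Verify $\jStop = \jMin - 1$ rather than $\floor{p\log n/2}$, i.e. that $j^* \le \floor{p \log n /2}$ for $n$ large — immediate since $j^* = O(\log\log n) = o(\log n)$.

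The main obstacle, and the part requiring genuine care rather than routine bookkeeping, is \textbf{Step 3}: getting the base of the double logarithm exactly right in the $\sigma^2 \le 1/2$ case, so that it comes out as $\log(1 + \log_{1/\sigma} n)$ and not, say, $\log\log n$ or $\log\log(n/\sigma^2)$. This requires carefully tracking how the factor $\sigma$ in each application of $\reduce$ interacts with the $\sqrt{\cdot}$: intuitively, $\lambda_j$ behaves like $n^{2^{-j}} \cdot \sigma^{2(1 - 2^{-j})}$ (up to polylog corrections), so $\lambda_j$ reaches $\Theta(\sigma^2)$ — hence reaches $D = \Theta_p(1)$ in this regime, since $D$ is a constant here — roughly when $n^{2^{-j}} = \Theta(\sigma^{2^{-j} \cdot (\text{const})})$, i.e. when $2^{-j} \log n = \Theta(2^{-j}\log(1/\sigma)) + \Theta(1)$, giving $2^{-j} = \Theta(1/\log_{1/\sigma} n)$ and thus $j = \Theta(\log\log_{1/\sigma} n)$. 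Making this heuristic rigorous against the $\polylog(n)$ slush from Step 1 — in particular checking that the $-O(\log\log n)$ additive term in the solution of the log-recursion does not wipe out the $\log(1 + \log_{1/\sigma} n)$ signal — is where the hypothesis $n \ge \Theta_p(\max(\sigma^3, 1))$ does its work, and I would spend most of the writeup there. The $\sigma^2 \ge 1/2$ case is comparatively easy: $\sigma^2 \log \sigma$ and all the $\sigma$-dependent quantities are polynomial in $\sigma$ while $D = o(\sigma^2 \log \sigma)$, and the hypothesis $n \ge \Theta_p(\sigma^3)$ ensures $\log(\lambda_0/D) = \Omega(\log n)$, so the count is cleanly $\Omega(\log\log n)$.
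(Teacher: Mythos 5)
Your proposal is correct and follows essentially the same route as the paper: unroll the recursion $\lambda_{j+1}=\reduce(\lambda_j)$ from $\lambda_0=n/C$, absorb the $\ln\lambda_j\log n$ factors, conclude that $\lambda_j/\sigma^2$ decays doubly exponentially, and read off $\jMin$ by a case split on $\sigma^2\lessgtr 1/2$, finally checking the $\floor*{p\log n/2}$ cap is not binding. The only (immaterial) difference is bookkeeping: the paper absorbs the polylog slop by degrading the exponent from $1/2$ to $1/3$ under a threshold condition defining $\lMin$, whereas you keep exponent $1/2$ and carry an additive $-\log L$ term in log space — both yield the same $\log_{\Theta(1)}$ of the ratio of logarithms and hence the same bounds.
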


\subsubsection{Proof of Theorem~\ref{thm:sequential-lb}}
All that is left to do is to conclude.
\begin{proof}[Proof of Theorem~\ref{thm:sequential-lb}]
If $A_\jStop$ holds, then $\mu_\jStop > 0$, which means the algorithm has not found $v^*$ after asking $\jStop$ queries (recall our assumption from the beginning of Section~\ref{sec:sequential-lb} that, without loss of generality, the algorithm must query the source in order to make its guess). By Lemma~\ref{lemma:a-jstop}, this happens with probability at least $1-p$. Therefore, any algorithm that finds $v^*$ with probability at least $p$ must use at least $\jStop+1$ queries. The theorem then follows from Lemma~\ref{lemma:asymptotics-lb}.
\end{proof}
\section{Conclusion and Future Work}

We presented the first mathematical study of source identification with time queries in a non-deterministic diffusion process. We considered both the setting when the queries are selected adaptively and non-adaptively. We found that when the edge-delay distribution has constant variance, the number of required queries is $\Theta(\log \log n)$ in the adaptive setting, and $\Theta(n)$ in the non-adaptive setting. Our results are in sharp contrast with similar problems, such as measurement dependent noisy search on a line \cite{lalitha2017measurement}, or probabilistic binary search in graphs \cite{EmamjomehKS16}, where the query complexities were found to be $\Theta(\log n)$ in both cases. 

The main open question is of course what happens in other graphs. Extending our results to certain classes of trees might be feasible with the methods presented in this paper, however, an extension to graphs with cycles seems very challenging. Still, we hope that our results can inspire some, potentially more heuristic, ideas for treating graphs with cycles as well.

While we do not consider this scenario, given the sensitive nature of health information, it would be interesting to study source identification with time queries in the context of privacy preserving learning. In a scenario where an adversary is watching our queries, but not the responses, a recent line of work characterized the tradeoff between query complexity and privacy in adaptive binary search on a line \cite{tsitsiklis2018private,xu2018query}. The model has been extended to the case when the answers we receive are noisy in a follow-up work by \cite{xu2019optimal}. It would be interesting to combine the methods presented in the current paper with the methods of \cite{tsitsiklis2018private,xu2018query,xu2019optimal} for new results in privacy preserving source identification.

\section{Acknowledgements}
The work presented in this paper was supported in part by the Swiss National Science Foundation under grant number 200021-182407. Most of Victor Lecomte's work was done while visiting EPFL on a Summer@EPFL fellowship.

\bibliographystyle{plainurl}
\bibliography{cite}

\appendix
\section{Extending to other edge-delay distributions}
\label{sec:other-distributions}

Throughout the paper, we assumed that the edge-delay distribution was Gaussian, however due to the Central Limit Theorem, it is natural to expect that our result generalizes to other distributions as well. However, there definitely are edge-delay distributions for which our result cannot generalize. Consider an edge-delay distribution $\W$ supported on two values: $1$ and $\pi$. Since $\pi$ is irrational, a single query node at one end of the path can determine the identity of the source with absolute certainty. Moreover, our results are not likely to generalize to heavy tailed $\W$ due to the lack of concentration in the answers.

We sketch how our proofs could be generalized to continuous sub-gaussian random variables. In the proofs of our main results, we exploit two types of properties of the edge-delay distribution; we are using the tight concentration of their sum in the non-adaptive upper bound and the adaptive upper and lower bounds, and we are using an anti-concentration result on their sum in the non-adaptive and adaptive lower bounds.

All of the concentration bounds are derived from Fact \ref{fact:normal}. This tail-bound result is easily extendable to sub-gaussian random variables (see Proposition 5.10 of \cite{vershynin_2012}). The only difference in the results would be that $\sigma$ would be replaced by the sub-gaussian norm 
$$\| \W\|_{\psi_2}~=~\mathrm{supp}_{p\ge1}p^{-1/2}(\Exp |\W|^p)^{1/p}.$$

In the adaptive lower bound proof, when we make the anti-concentration arguments, our proof uses the density function of the Gaussian distribution. Therefore, we need that the density function of $\sum \W_i$ is pointwise close to the density function of the corresponding Gaussian distribution. Such statements are called local limit theorems for sums of independent random variables. In a sense, we are asking for much more than a tail-bound, but we can also be much looser than an exponential decay. In Lemma \ref{lemma:diluted} we need that the probability mass function of the posterior is bounded above by $\log(\mu_j)/(\sigma\sqrt{\mu_j})$. We prove this by writing the probability mass function (as a function of potential source $v'$) explicitly using Bayes rule and the density of $\sum_{i=1}^{d_l'} \W_i$ and $\sum_{i=1}^{d_r'} \W_i$, where $d_l'=v'-l_j$ and $d_r'=r_j-v'$ are the distances between a node $v'$ and the closest query nodes to the left and the right. We need these densities to be pointwise $o(\log(\mu_j)/(\sigma\sqrt{\mu_j}))$ close to the densities in the Gaussian case. Since in Claim~\ref{claim:dl-dr-prime} we prove that $d_l' \in [1/2,2]a_{l_j}$ and $d_r' \in [1/2,2]a_{r_j}$, and by the definition $\mu_j=\min(a_{l_j},a_{r_j})$, it is enough to show that the density of $\sum_{i=1}^{d_l'} \W_i$ is pointwise $o(\log(d_l')/(\sigma\sqrt{d_l'}))$ close to the density in the Gaussian case (and we need the symmetric statement for $d_r'$). Such results are readily available for continuous distributions $\W$ with finite third moment (see Theorem 7.15 in \cite{petrov2012sums}). We also point out, that similar results exist for discrete distributions $\W$ satisfying a certain lattice condition that can be used to rule out distributions like the one supported on $1$ and $\pi$ that we used as a counterexample in the beginning of the section (see Theorem 7.6 in \cite{petrov2012sums}).

For the the anti-concentration result in the non-adaptive lower bound, we proved that the hypothesis testing problem cannot be solved between $k$ neighboring nodes at distance $d$ or more away from the query nodes. For this we upper bounded the union of the area under the probability density functions of the answers under each of the $k$ hypotheses by another another function, which we could easily integrate. For general edge-delay distributions, again we aim to approximate the probability density functions of the answers by the the probability density function of Gaussian random variables. Since this time, instead of small $l_\infty$ distance, we need small $l_1$ distance between the densities, a Berry-Esseen type theorem \cite{berry1941accuracy,esseen1945fourier} suffices instead of a local limit theorem.

We note that only the concentration arguments required the sub-gaussianity of the edge-delay distribution, the anti-concentration results held for a much more general class of distributions (finite third moment and continuity or lattice condition). We believe that with more advanced proof techniques the sub-gaussianity condition can also be relaxed.
\section{The difference between S1 and S2}
\label{appendix:SD1}

The only difference between two models S1 and S2 defined in Section~\ref{sec:intro} is that the starting time of the epidemics is unknown in S1 and known in S2. We already mentioned that S2 is theoretically more appealing, and that there is little difference between the number of queries required in the two models. The main consequence of the difference between the source identification algorithms in the two models is that in S1, the answers that they can use are the relative differences between time measurements at different pairs of query nodes, whereas in S2 the answers they can use are the absolute differences between the (known) starting time of the epidemics and the time measurement at each query node. Since S1 is more restrictive than S2, our lower bounds on the number of required queries in S2 clearly also hold in S1. We comment on how the upper bounds can be extended in Remarks \ref{rem:SD1_static} and \ref{rem:SD1_sequential}.

Additionally, we argue that while S2 has a simpler mathematical definition than S1, on the path network, proving lower bounds for S2 raises important challenges that would not have appeared in S1. Indeed, in the path network, the pair of query nodes that surround the source provide two independent answers about it in S2 (one from each direction between each query node and the source), but only one in S1 (because only the time difference between the measurements is meaningful). As a result, the analysis of the required number of queries is more challenging in S2 than in S1 because of the richer set of independent answers. Incorporating several independent measurements will be the main difficulty for the analysis of the number of queries needed to identify the source in more complex network models, such as bounded-degree trees. By focusing on S2 in the path network, our paper therefore paves the way towards the analysis of 
more complex network models.

\begin{remark}
\label{rem:SD1_static}
If the time of the first infection is not known (model S1), we can model the answers by adding an unknown constant $T_{start}$ to all of them. Then, Claim \ref{claim:end-of-static-ub} (a) and (b) hold without modification and we can prove a version of (c) where the differences of the distances equal the differences of the answers rounded to the nearest integer (we just need a slightly tighter concentration result). Let us also define $\sRight \coloneqq \sSmall+d$, and $\oRight$ as the corresponding answer. Then, by Claim \ref{claim:end-of-static-ub}, if $\round{\oSmall-\oLeft}=d$ then $v^*$ is between $\sSmall$ and $\sRight$ and we can find $v^*$ by computing
$$\frac{\round{\oSmall-\oRight}+\sSmall+\sRight}{2}=\frac{(v^*-\sSmall)-(\sRight-v^*)+\sSmall+\sRight}{2}=v^*$$
Otherwise, if $\round{\oSmall-\oLeft}<d$ then $v^*$ is between $\sLeft$ and $\sSmall$, and $v^*$ can be found analogously.
\end{remark}

\begin{remark}
\label{rem:SD1_sequential}
If the time of the first infection is not known (model S1), and we model the answers by adding an unknown constant $T_{start}$ to all of them, then a version of Lemma \ref{lemma:concentration-ub} shifted by $T_{start}$ still holds. In this case, a slightly modified version of the algorithm finds the source. At each step the algorithm will query two nodes: one at $l_i$ and $r_i$, with $l_0=1$ and $r_0=n$. Then, we have a similar equation as \eqref{eq:seq_upper_interval} for the difference of the answers
\begin{align*}
\dw(v^*,l_i)-\dw(v^*,r_i) &\in (v^*-l_i)-(r_i-v^*) \pm 2C(\delta)\cdot \sigma\sqrt{d_i}\ln(1+d_i),
\end{align*}
where $d_i \coloneqq r_i-l_i$, which means that we can keep track of a shrinking interval
\[
\begin{cases}
l_{i+1} = \max\left(l_i,\left\lceil \frac{\dw(v^*,l_i)-\dw(v^*,r_i) +l_i +r_i}{2} - C(\delta)\cdot\sigma\sqrt{d_i}\ln(1+d_i) \right\rceil \right)\\
r_{i+1} = \min\left(r_i, \left\lfloor \frac{\dw(v^*,l_i)-\dw(v^*,r_i) +l_i +r_i}{2} + C(\delta)\cdot\sigma\sqrt{d_i}\ln(1+d_i) \right\rfloor \right).
\end{cases}
\]
The rest of the proof can be written similarly to the case when the time of the first infection is known, and the only change in the final result is that we used twice as many queries to identify the source (which does not affect the asymptotic results).
\end{remark}

\section{Simulation details for Figure \ref{fig:sim_fig}}
\label{appendix:sim_fig}

The simulation results were generated in the S1 source identification model with the Python toolbox \cite{spinelliGithub}, which has been published in \cite{spinelli2017back}. The underlying diffusion process was a Susceptible-Infected process (also called First Passage percolation) with uniform edge-delay distribution supported on the interval $[1-\sqrt{3}\sigma, 1+\sqrt{3}\sigma]$ (with mean $1$ and standard deviation $\sigma$). Thereafter, a uniformly random query node was picked, and all further queries were selected by the Max-Gain algorithm as implemented in \cite{spinelliGithub}. The algorithm was stopped when the candidate set reduced to a single node, which always had to be the source, since the Max-Gain algorithm always finds the correct source if enough queries are provided and the edge-delay distribution is bounded in some interval $[1-\epsilon, 1+\epsilon]$ for $\epsilon \in (0,1)$ (see Theorem 2 of \cite{spinelli2017back}). The number of queries plotted in Figure~\ref{fig:sim_fig} is the number of queries used by the Max-Gain until it was stopped, averaged over 192 simulations.

\section{Proof of Claim~\ref{claim:end-of-static-ub}}
\label{proof:claim:end-of-static-ub}

\begin{repclaim}{claim:end-of-static-ub}
For some $d = \Omega\left(\frac{1}{\sigma^2\log(1/\delta)}\right)$,
all of the following hold simultaneously with probability $\geq 1-\delta$:
\begin{alphaenumerate}
\item among the query nodes located at or to the left of $v^*$, the closest one is the one with the smallest answer;
\item among the query nodes located at or to the right of $v^*$, the closest one is the one with the smallest answer;
\item the two closest query nodes to $v^*$ on its left side and the closest query node on its right side all give a correct answer once rounded to the nearest integer.
\end{alphaenumerate}
\end{repclaim}

\begin{proof}
In this proof, we will assume that
\begin{equation}
\label{eq:var-small-enough}
\sigma^2 \leq \frac{1}{16\ln(6/\delta)} \leq \frac{1}{2\ln(12/\delta)}.
\end{equation}
If this is not the case, then $\sigma^2 = \Omega(1/\log(1/\delta))$, so we can simply query every node, which gives $d=1=\Omega\left(\frac{1}{\sigma^2\log(1/\delta)}\right)$.

We need to choose $d$ such that wherever $v^*$ is identified, (a), (b), (c) simultaneously hold with probability $\geq 1-\delta$ over the choice of the weights $w(\cdot)$. Let us first study point (b) (point (a) is analogous). Let $q$ be the closest query node at or to the right of $v^*$. Then (b) is true iff
\begin{itemize}
\item the sum of the weights of the edges between $q$ and $q+d$ is positive;
\item the sum of the weights of the edges between $q$ and $q+2d$ is positive;
\item \ldots
\item the sum of the weights of the edges between $q$ and $1+\floor*{\frac{n-1}{d}}d$ is positive.
\end{itemize}
More formally, (b) is true iff for all integers $i>0$ such that $q+id\leq n$, the sum of the weights of the edges between $q$ and $q+id$ is positive.

A sufficient condition for this to hold is: for all positive integers $x$, the sum of the weights of the edges between nodes $q$ and $q+x$ is positive. By Fact~\ref{fact:sum-normal}, each of these sums is distributed as a Gaussian $\Normal(x, x\sigma^2)$, so it is positive except with probability
\begin{align*}
\Pr_{X \sim \Normal(x, x\sigma^2)}[X < 0]
&\leq \Pr_{X \sim \Normal(x, x\sigma^2)}[X \notin x \pm x]\\
&\leq e^{-\frac{x^2}{2x\sigma^2}}\tag{Fact~\ref{fact:normal}}\\
&= e^{-\frac{x}{2\sigma^2}}.
\end{align*}
Therefore, by a union bound, (b) holds except with probability at most
\begin{align*}
\sum_{x=1}^\infty \left(e^{-\frac{1}{2\sigma^2}}\right)^x
&= \frac{e^{-\frac{1}{2\sigma^2}}}{1-e^{-\frac{1}{2\sigma^2}}}\\
&< 3e^{-\frac{1}{2\sigma^2}}\tag{because $\sigma^2 \leq 1 \Rightarrow e^{-\frac{1}{\sigma^2}} < 2/3$}
\end{align*}
which, assuming $\sigma^2 \leq \frac{1}{2\ln(12/\delta)}$ (equation \eqref{eq:var-small-enough}), is at most $\delta/4$. 

Finally, we study the probability that (c) holds. Let $d_1,d_2,d_3$ be the distances of those three query nodes to $v^*$. They are all at most $2d$ away from $v^*$. For $i=1,2,3$, the corresponding answer is distributed as $X \sim \Normal(d_i, d_i\sigma^2)$, and is correct after rounding iff $X \in (d_i-1/2, d_i+1/2)$. Therefore, (c) holds except with probability
\begin{align*}
\sum_{i=1}^3\Pr_{X \sim \Normal(d_i, d_i\sigma^2)}[X \notin d_i \pm 1/2]
&\leq \sum_{i=1}^3 e^{-\frac{1}{8d_i\sigma^2}}\tag{Fact~\ref{fact:normal}}\\
&\leq 3e^{-\frac{1}{16d\sigma^2}}
\end{align*}
which, assuming $d \leq \frac{1}{16\sigma^2 \ln(6/\delta)}$, is at most $\delta/2$. Therefore, we set $d \coloneqq \floor*{\frac{1}{16\sigma^2 \ln(6/\delta)}}$. By \eqref{eq:var-small-enough}, $\frac{1}{16\sigma^2 \ln(6/\delta)} \geq 1$, so $d \geq (1/2) \cdot \frac{1}{16\sigma^2 \ln(6/\delta)} = \Omega\left(\frac{1}{\sigma^2\log(1/\delta)}\right)$, as required. 

Finally, by one more union bound, for our chosen value of $d$, all of (a), (b), (c) hold except with probability at most $\delta/4 + \delta/4 + \delta/2 = \delta$.
\end{proof}
\section{Proof of Lemma~\ref{lemma:concentration-ub}}
\label{proof:lemma:concentration-ub}

\begin{replemma}{lemma:concentration-ub}
For any probability $0 <\delta < 1/2$, there is some constant $C(\delta)=O\left(\sqrt{\log(1/\delta)}\right)$ such that for any $n,\sigma$ and any source $v^* \in V$, we have
\begin{equation}\label{eq:concentration-ub-restate}
\Pr_w[\forall q \in V, \,\dw(v^*,q) \in |v^*-q| \pm C(\delta)\cdot\sigma\sqrt{|v^*-q|}\ln (1+|v^*-q|)] \geq 1-\delta.
\end{equation}
\end{replemma}

\begin{proof}

We will use the quantity $|v^*-q|$ many times in this proof, so to simplify notation, let $d_{q} \coloneqq |v^*-q|$.
We will fix $C(\delta)$ later, but for the moment assume $C(\delta)\geq 2$ (this is clearly the case for the value we set it to in \eqref{eq:value-of-c}).

First of all, for $q=v^*$, \eqref{eq:concentration-ub-restate} holds trivially.
For any other $q$, by Fact~\ref{fact:sum-normal}, $\dw(v^*, q)$ is distributed according to Gaussian $\Normal(d_{q},d_{q}\sigma^2)$ (though not independently).
Then, by Fact~\ref{fact:normal}, for any $q \neq v^*$,
\begin{align*}
\Pr[\dw(v^*,q) \notin d_{q} \pm C(\delta)\sigma\sqrt{d_{q}} \ln (1+d_{q})]
&\leq e^{-\frac{(C(\delta)\sigma\sqrt{d_{q}}\ln (1+d_{q}))^2}{2d_{q}\sigma^2}}\\
&= e^{-\frac{C(\delta)^2(\ln(1+d_{q}))^2}{2}}\\
&= \left(\frac{1}{1+d_{q}}\right)^{\frac{C(\delta)^2\ln(1+d_{q})}{2}}\\
&= \left(\frac{1}{1+d_{q}}\right)^{\frac{(C(\delta)^2-1)\ln(1+d_{q}) + \ln(1+d_{q})}{2}}\\
&= \left(\frac{1}{1+d_{q}}\right)^{\frac{(C(\delta)^2-1)\ln(1+d_{q})}{2}} \left(\frac{1}{1+d_{q}}\right)^{\frac{\ln(1+d_{q})}{2}}\\
&\leq \left(\frac{1}{1+d_{q}}\right)^{\frac{C(\delta)^2}{4}} \left(\frac{1}{1+d_{q}}\right)^{\frac{\ln(1+d_{q})}{2}}\tag{$C(\delta) \geq 2$ so $(C(\delta)^2-1)\ln(2) \geq C(\delta)^2/2$}\\
&\leq \left(\frac{1}{2}\right)^{\frac{C(\delta)^2}{4}} \left(\frac{1}{1+d_{q}}\right)^{\frac{\ln(1+d_{q})}{2}}.
\end{align*}
By a union bound over all $q \neq v^*$, this implies that
\begin{align}
\Pr[\exists\,q \in V, \,\dw(v^*,q) \notin d_{q} \pm C(\delta)\sigma\sqrt{d_{q}} \ln (1+d_{q})]
&\leq 2\sum_{d=1}^\infty \left(\frac{1}{2}\right)^{\frac{C^2}{4}} \left(\frac{1}{1+d}\right)^{\frac{\ln(1+d)}{2}}\nonumber\\
&= 2\left(\frac{1}{2}\right)^{\frac{C^2}{4}} \sum_{d=1}^\infty\left(\frac{1}{1+d}\right)^{\frac{\ln(1+d)}{2}}\nonumber\\
&= 2\left(\frac{1}{2}\right)^{\frac{C^2}{4}} \left(\sum_{d=1}^{\ceil*{e^4+1}-1}\left(\frac{1}{1+d}\right)^{\frac{\ln(1+d)}{2}} + \sum_{d=\ceil*{e^4+1}}^\infty\left(\frac{1}{1+d}\right)^{\frac{\ln(1+d)}{2}}\right)\nonumber\\
&\leq 2\left(\frac{1}{2}\right)^{\frac{C^2}{4}} \left(\ceil*{e^4+1}-1 + \sum_{d=\ceil*{e^4+1}}^\infty\left(\frac{1}{1+d}\right)^{2}\right)\tag{$d \geq e^4-1$ implies $\ln(1+d)/2 \geq 2$}\\
&\leq 2\left(\frac{1}{2}\right)^{\frac{C^2}{4}} \left(\ceil*{e^4} + \sum_{d=1}^\infty\frac{1}{d^2}\right)\nonumber\\
&= 2\left(\frac{1}{2}\right)^{\frac{C^2}{4}} \left(\ceil*{e^4} + \frac{\pi^2}{6}\right)\nonumber\\
&\leq 114\left(\frac{1}{2}\right)^{\frac{C^2}{4}}
\label{eq:product-series}
\end{align}

Setting
\begin{equation}
\label{eq:value-of-c}
    C(\delta) \coloneqq \sqrt{4 \log_2(114/\delta)} = O\left(\sqrt{\log(1/\delta)}\right),
\end{equation}
\eqref{eq:product-series} becomes $\leq \delta$, and we are done.

\end{proof}
\section{Proof of Claim~\ref{claim:asymptotics-ub}}
\label{proof:claim:asymptotics-ub}

\begin{repclaim}{claim:asymptotics-ub}
Assume $d_0\leq n$, and $d_{i+1} \leq C \cdot\sigma\sqrt{d_i}\ln(1+d_i)$ for some value $C>0$. Then
\begin{itemize}
\item if $\sigma^2 \leq 1/2$, there exists $k = O(\log \log_{1/\sigma} n)$ such that $d_k = \poly(C)$;
\item if $\sigma^2 \geq 1/2$, there exists $k = O(\log \log n)$ such that $d_k = \sigma^2\cdot \poly(C, \log(1+\sigma^2))$.
\end{itemize}
\end{repclaim}

\begin{proof}
We track the value of $d_i/\sigma^2$ as $i$ increases. First, as long as
\begin{equation}
\label{eq:limit-di}
d_i \geq \sigma^2(C\ln(1+d_i))^6
\end{equation}
we have
\[
\frac{d_{i+1}}{\sigma^2} \leq \frac{C\sigma\sqrt{d_i}\ln(1+d_i)}{\sigma^2} = C \sqrt{\frac{d_i}{\sigma^2}}\ln(1+d_i) \stackrel{\eqref{eq:limit-di}}{\leq} \left(\frac{d_i}{\sigma^2}\right)^{2/3}
\]
(the last inequality can be deduced by dividing both sides by $\sqrt{\frac{d_i}{\sigma^2}}$ then raising both sides to the sixth power).
Thus, by induction, as long as \eqref{eq:limit-di} holds, we have
\[\frac{d_i}{\sigma^2} \leq \left(\frac{d_0}{\sigma^2}\right)^{(2/3)^i} \Rightarrow d_i \leq \sigma^2\left(\frac{n}{\sigma^2}\right)^{(2/3)^i}.\]

\newcommand{\Dmin}{d_\text{min}}
Let $\Dmin$ be the smallest value greater than $\max(2\sigma^2, 1)$ that we can assign to $d_i$ such that \eqref{eq:limit-di} holds. Let $k$ be the smallest integer for which $d_k \leq \Dmin$. Then we have
\[\sigma^2 \left(\frac{n}{\sigma^2}\right)^{(2/3)^{k-1}} \geq \Dmin
\Leftrightarrow
k
\leq 1+ \log_{3/2} \left(\frac{\log\left(\frac{n}{\sigma^2}\right)}{\log\left(\frac{\Dmin}{\sigma^2}\right)}\right)
\leq 1+\log_{3/2}\left(\frac{\log\left(\frac{n}{\sigma^2}\right)}{\log(\max(1/\sigma^2, 2))}\right).\]
\begin{itemize}
\item If $\sigma^2 \leq 1/2$, then it is easy to verify that $\Dmin = O((C\log C)^6) = \poly(C)$. Therefore, for $k = O(1+ \log(1+\log_{1/\sigma}n))$, we have $d_k \leq \Dmin = \poly(C)$.
\item If $\sigma^2 \geq 1/2$, then it is easy to verify that $\Dmin = O(\sigma^2C^6\log(1+\sigma^2C^6)^6) = \sigma^2\cdot \poly(C, \log(1+\sigma^2))$.
Therefore, for $k = O(\log\log n)$, we have $d_k \leq \Dmin = \sigma^2 \cdot \poly(C, \log(1+\sigma^2))$.\qedhere
\end{itemize}
\end{proof}
\section{Proof of Lemma~\ref{lemma:typical-is-likely}}

\label{proof:lemma:typical-is-likely}
\begin{replemma}{lemma:typical-is-likely}
For any probability $\delta>0$ and any $n \geq \Theta_\delta(\max(\sigma^2\ln\sigma, 1))$,
$\Pr_{v^*,w}[\Typical_{\delta}(v^*,w)] \geq 1-\delta$.
\end{replemma}

Before proving Lemma~\ref{lemma:typical-is-likely}, we first prove two claims.
\begin{claim}
\label{claim:conc1}
For any probability $\delta_1>0$, there exists $D_1(\delta_1)>0$ such that for any $n,\sigma$ and any source $v^* \in V$,
\[
\Pr_w\left[\text{for all $q$ such that }d_{q} \coloneqq |v^*-q| \geq D_1(\delta_1),\,\dw(v^*,q) \in d_{q} \pm \sigma \sqrt{d_{q}} \ln d_{q}\right] \geq 1-\delta_1.
\]
\end{claim}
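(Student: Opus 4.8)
The plan is to reuse the computation behind Lemma~\ref{lemma:concentration-ub}, but with the multiplicative slack constant set to $1$. The price of dropping that constant is that the per-node tail bound is no longer summable at small distances, which is exactly why the claim restricts attention to query nodes with $d_q \geq D_1(\delta_1)$: choosing $D_1(\delta_1)$ large enough kills the problematic small-distance terms.

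First I would fix a query node $q \neq v^*$ and write $d_q \coloneqq |v^*-q|$. By Fact~\ref{fact:sum-normal}, the answer $\dw(v^*,q)$, being the sum of $d_q$ independent $\Normal(1,\sigma^2)$ edge weights, is distributed as $\Normal(d_q, d_q\sigma^2)$. Applying Fact~\ref{fact:normal} with $a = \sigma\sqrt{d_q}\ln d_q$ gives
\[
\Pr_w\big[\dw(v^*,q) \notin d_q \pm \sigma\sqrt{d_q}\ln d_q\big] \;\leq\; e^{-\frac{(\sigma\sqrt{d_q}\ln d_q)^2}{2d_q\sigma^2}} \;=\; e^{-\frac{(\ln d_q)^2}{2}} \;=\; \Big(\tfrac{1}{d_q}\Big)^{\frac{\ln d_q}{2}}.
\]

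Next I would union-bound over all $q$ with $d_q \geq D_1(\delta_1)$. Since at most two nodes of the path lie at any fixed distance $d$ from $v^*$, the probability that the concentration bound fails for some such $q$ is at most $2\sum_{d \geq \lceil D_1(\delta_1)\rceil} d^{-(\ln d)/2}$. This series converges: as soon as $d \geq e^4$ we have $(\ln d)/2 \geq 2$, so its tail is dominated by $2\sum_{d \geq \lceil D_1(\delta_1)\rceil} d^{-2} \leq 2/(\lceil D_1(\delta_1)\rceil - 1)$. Hence it suffices to set, for instance, $D_1(\delta_1) \coloneqq \max\big(\lceil e^4\rceil,\, \lceil 1 + 2/\delta_1\rceil\big)$, which makes the whole failure probability at most $\delta_1$ and proves the claim.

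I do not expect any real obstacle here. The only point requiring a little care is obtaining an explicit enough tail estimate for $\sum_d d^{-(\ln d)/2}$ to name a concrete valid value of $D_1(\delta_1)$; comparing with $\sum_d d^{-2}$ once $d \geq e^4$ handles this cleanly, and as a byproduct it rules out the degenerate small distances (e.g. $d_q = 1$, where the interval $d_q \pm 0$ has probability $0$ and the bound would be vacuous).
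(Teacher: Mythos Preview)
Your proposal is correct and follows essentially the same approach as the paper: both apply Fact~\ref{fact:normal} to get the per-distance tail bound $d^{-(\ln d)/2}$, then union-bound over all distances $d \geq D_1$. The only cosmetic difference is that the paper defines $D_1(\delta_1)$ implicitly as the least $k \geq 3$ with $\sum_{d \geq k} d^{-(\ln d)/2} \leq \delta_1/2$, whereas you push the comparison with $\sum d^{-2}$ a step further to give an explicit formula; both are fine.
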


\begin{claimproof}
At first, let us consider only the case $q \geq v^*$. That is, consider node $q = v^* + d$ for some distance $d \geq e^2$. Then $\dw(v^*,q) \sim \Normal(d, d\sigma^2)$, so
\begin{align*}
\Pr_w\big[\dw(v^*,q) \in d \pm \sigma\sqrt{d} \ln d\big]
&\geq 1-e^{\frac{(\sigma\sqrt{d} \ln d)^2}{2d\sigma^2}}\tag{from Fact~\ref{fact:normal}}\\
&= 1-e^{-\frac{(\ln d)^2}{2}}\\
&= 1- \frac{1}{d^{(\ln d) / 2}}.
\end{align*}
Now, for any integer $D_1\geq e$, by a union bound, this will hold for all $q \geq v^* + D_1$ with probability at least
\[1 - \sum_{d=D_1}^\infty \frac{1}{d^{(\ln d) / 2}}.\]
Note that this sum converges, because $(\ln d)/2 > 1$ for large enough $d$. Thus the sequence of sums $(\sum_{d=k}^\infty 1/d^{(\ln d)/2})_{k \geq 3}$ converges to $0$ and we can define $D_1(\delta_1) \coloneqq \min \{k \geq 3 \mid \sum_{d=k}^\infty 1/d^{(\ln d)/2} \leq \delta_1/2\}$.

Therefore, by going through the same reasoning for $q \leq v^*$ and taking a union bound, we get that
\[\dw(v^*,q) \in d_{q} \pm \sigma \sqrt{d_{q}} \ln d_{q}\]
will hold for all $q$ at distance $d_{q} \coloneqq |v^*-q| \geq D_1(\delta_1)$, except with probability at most $\delta_1/2+\delta_1/2=\delta_1$.
\end{claimproof}

\newcommand{\Dtwo}{D_2(\delta_2,\epsilon, \sigma)}
\begin{claim}\label{claim:conc2}
For any probability $\delta_2 > 0$ and any $\epsilon \in (0,1)$, there exists $\Dtwo>0$ such that for any $n,\sigma$ and any source $v^* \in V$,
\[
\Pr_w\left[\text{for all $q$ such that }d_{q} \coloneqq{} |v^*-q| \geq \Dtwo,\,\dw(v^*,q) \in (1\pm \epsilon)d_{q}\right] \geq 1-\delta_2,
\]
and $\Dtwo = O_{\delta_2,\epsilon}(\max(\sigma^2\log\sigma, 1))$.
\end{claim}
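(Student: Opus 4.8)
The plan is to follow the same template as the proof of Claim~\ref{claim:conc1}: bound, via Fact~\ref{fact:normal}, the probability that a single answer $\dw(v^*,q)$ deviates from $(1\pm\epsilon)d_q$, take a union bound over all query nodes at distance at least $D_2$ from $v^*$, and then choose $D_2$ large enough that the total failure probability drops below $\delta_2$. The one new feature compared with Claim~\ref{claim:conc1} is that the allowed deviation $\epsilon d_q$ is \emph{linear} in $d_q$, so the union bound produces an honest geometric series in $d_q$; the rate of that series depends on $\sigma$, and this is exactly what forces $D_2$ to grow with $\sigma$.

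For the details: by Fact~\ref{fact:sum-normal}, $\dw(v^*,q) \sim \Normal(d_q, d_q\sigma^2)$ (we only ever use a union bound, so the dependence between the answers is immaterial), so for $q \neq v^*$, Fact~\ref{fact:normal} with $a = \epsilon d_q$ gives
\[
\Pr_w[\dw(v^*,q) \notin (1\pm\epsilon)d_q] \leq e^{-\frac{(\epsilon d_q)^2}{2 d_q\sigma^2}} = e^{-\frac{\epsilon^2 d_q}{2\sigma^2}}.
\]
Since at most two nodes sit at each integer distance $d \geq 1$ from $v^*$, writing $r \coloneqq e^{-\epsilon^2/(2\sigma^2)} \in (0,1)$, a union bound over all $q$ with $d_q \geq D_2$ bounds the total failure probability by
\[
2\sum_{d=\lceil D_2\rceil}^{\infty} r^d = \frac{2\,r^{\lceil D_2\rceil}}{1-r} \leq \frac{2\,r^{D_2}}{1-r}.
\]
As $\ln(1/r) = \epsilon^2/(2\sigma^2)$, this last quantity is at most $\delta_2$ whenever $D_2 \geq \frac{2\sigma^2}{\epsilon^2}\ln\frac{2}{\delta_2(1-r)}$, so we set $D_2(\delta_2,\epsilon,\sigma) \coloneqq \lceil \frac{2\sigma^2}{\epsilon^2}\ln\frac{2}{\delta_2(1-r)}\rceil$; assuming without loss of generality $\delta_2 < 1$ (otherwise the claim is vacuous), $\delta_2(1-r) < 1$, so the logarithm is positive and $D_2 \geq 1 > 0$ — in particular the node $q=v^*$, at distance $0$, never enters the union bound, and if $n$ is so small that no node is at distance $\geq D_2$ the statement is vacuously true.

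It remains to check the asymptotic bound on $D_2$, which reduces to controlling $\frac{1}{1-r}$. With $x \coloneqq \epsilon^2/(2\sigma^2)$, the elementary inequality $1-e^{-x} \geq \min(x/2,\,1-e^{-1})$ gives $\frac{1}{1-r} = O(\max(\sigma^2/\epsilon^2,\,1))$, hence $\ln\frac{2}{\delta_2(1-r)} = O(\ln\frac{1}{\delta_2} + \ln\max(\sigma^2/\epsilon^2,\,1))$ and therefore
\[
D_2 = O\left(\frac{\sigma^2}{\epsilon^2}\left(\ln\frac{1}{\delta_2} + \ln\max\left(\frac{\sigma^2}{\epsilon^2},\,1\right)\right)\right) = O_{\delta_2,\epsilon}(\max(\sigma^2\log\sigma,\,1)),
\]
where the last step treats $\sigma$ bounded (in which case the bound is $O_{\delta_2,\epsilon}(1)$) and $\sigma$ large (using $\ln\sigma^2 = \Theta(\log\sigma)$) separately. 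The only mildly delicate point in the whole argument is this last bookkeeping step — in particular using $1-e^{-x}\geq x/2$ for $x\in(0,1]$ to keep $1/(1-r)$ from growing faster than $\sigma^2$ — while everything else is the routine tail-bound-plus-union-bound already carried out for Claim~\ref{claim:conc1}.
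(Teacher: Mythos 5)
Your proof is correct and follows essentially the same route as the paper's: the Gaussian tail bound from Fact~\ref{fact:normal} gives a geometric series $\sum_d e^{-\epsilon^2 d/(2\sigma^2)}$ under a union bound, and the choice of $D_2$ plus the control of $1/(1-r)$ (your unified inequality $1-e^{-x}\geq\min(x/2,\,1-e^{-1})$ is just the paper's two-case split written compactly) yields the same $O_{\delta_2,\epsilon}(\max(\sigma^2\log\sigma,1))$ bound. The only cosmetic difference is that you absorb both sides of $v^*$ into a single factor-of-2 union bound rather than treating each side with budget $\delta_2/2$.
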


\begin{claimproof}
At first, let us consider only the case $q \geq v^*$. That is, consider node $q = v^* + d$ for some distance $d$. Then $\dw(v^*,q) \sim \Normal(d, d\sigma^2)$, so
\begin{align*}
\Pr_w\big[\dw(v^*,q) \in (1\pm \epsilon)d\big]
&\geq 1-e^{\frac{(\epsilon d)^2}{2d\sigma^2}}\tag{from Fact~\ref{fact:normal}}\\
&= 1-(e^{-\frac{\epsilon^2}{2\sigma^2}})^d.
\end{align*}
Now, for any integer $D_2$, by a union bound, this will hold for all $q \geq v^* + D_2$ except with probability at most
\begin{align}
\label{eq:prob-D2}
\sum_{d=D_2}^\infty \left(e^{-\frac{\epsilon^2}{2\sigma^2}}\right)^d
&=e^{-\frac{\epsilon^2}{2\sigma^2}D_2}\sum_{d=0}^\infty \left(e^{-\frac{\epsilon^2}{2\sigma^2}}\right)^d\nonumber\\
&= \frac{e^{-\frac{\epsilon^2}{2\sigma^2}D_2}}{1-e^{-\frac{\epsilon^2}{2\sigma^2}}},
\end{align}
where the last step uses the fact that this is a geometric series.
\begin{itemize}
\item If $\frac{\epsilon^2}{2\sigma^2} \geq 1$, then $\eqref{eq:prob-D2} \leq \frac{e^{-D_2}}{1-1/e}$, so if we set $\Dtwo \coloneqq \ln\left(\frac{2}{\delta_2(1-1/e)}\right)$, then $\eqref{eq:prob-D2} \leq \delta_2/2$.
\item If $\frac{\epsilon^2}{2\sigma^2} \leq 1$, then we can use $e^{-x} \leq 1-x/2$ on $[0,1]$ to obtain that
\[\eqref{eq:prob-D2} \leq \frac{2e^{-\frac{\epsilon^2}{2\sigma^2}D_2}}{\frac{\epsilon^2}{2\sigma^2},}\]
so if we set $\Dtwo \coloneqq \frac{2\sigma^2}{\epsilon^2}\ln\left(\frac{4\sigma^2}{\epsilon^2\delta_2}\right)$, then $\eqref{eq:prob-D2} \leq \delta_2/2$.
\end{itemize}
It is easy to check that both these values are $O_{\delta_2,\epsilon}(\max(\sigma^2\ln\sigma, 1))$.

Finally, by going through the same reasoning for $q \leq v^*$ and taking a union bound, we get that
\[\dw(v^*,q) \in (1\pm \epsilon)d_{q}\]
will hold for all $q$ at distance $d_{q} \coloneqq |v^*-q| \geq \Dtwo$, except with probability at most $\delta_2/2+\delta_2/2=\delta_2$.
\end{claimproof}

\begin{definition}
\label{def:cd}
Let $C(\delta) \coloneqq 8/\delta$ and $D(\sigma,\delta) \coloneqq \max(D_1(\delta/3),D_2(\delta/3,1/4,\sigma), \sigma^2, e^2)$.
\end{definition}
Let us verify that this definition of $D(\sigma, \delta)$ satisfies the bounds claimed in equation~\eqref{eq:guarantees-on-cd}.
The lower bound of $\max(\sigma^2, e^2)$ is trivial. The upper bound of $O_\delta(\max(\sigma^2\log\sigma, 1))$ comes from the fact that $\Dtwo = O_{\delta_2,\epsilon}(\max(\sigma^2\log\sigma, 1))$.

We can now prove Lemma~\ref{lemma:typical-is-likely}.
\begin{proof}[Proof of Lemma~\ref{lemma:typical-is-likely}]
Apply Claim~\ref{claim:conc1} with $\delta_1\coloneqq\delta/3$, and Claim~\ref{claim:conc2} with $\delta_2\coloneqq\delta/3$ and $\epsilon\coloneqq 1/4$. Assume $n \geq 12/\delta$, and let $C'\coloneqq 6/\delta$. Since $v^*$ is uniformly distributed over $V=[n]$, we have
\[\Pr[\min(|v^*-1|,|v^*-n|) < n/C'] \leq \frac{2+n/C'}{n} \leq \frac{2}{n} + \frac{1}{C'} \leq \delta/3.\]
By a union bound, the concentration bounds of both Claim~\ref{claim:conc1} and Claim~\ref{claim:conc2} as well as inequality $\min(|v^*-1|,|v^*-n|) \geq n/C'$ will all hold with probability at least $1-\delta/3-\delta/3-\delta/3 = 1-\delta$.

Furthermore, by the concentration bound of Claim~\ref{claim:conc2}, if $\min(|v^*-1|,|v^*-n|) \geq n/C' \geq D_2(\delta/3,1/4)$, then
\begin{equation}
\label{eq:typical-a}
\min(\dw(v^*,1),\dw(v^*,n)) \geq (n/C')(1-1/4) = \frac{3n}{4C'} = \frac{n}{C(\delta)}.
\end{equation}
Then for $n \geq \max(12/\delta, C'D_2(\delta/3,1/4,\sigma)) = O_\delta(\max(\sigma^2\ln\sigma, 1))$, with probability at least $1-\delta$, we have
\begin{itemize}
\item $\min(\dw(v^*,1),\dw(v^*,n)) \geq \frac{n}{C(\delta)}$ (from \eqref{eq:typical-a});
\item for all $q$ with $d_{q} \coloneqq |v^*-q| \geq D(\sigma,\delta)$,
\begin{itemize}
\item $\dw(v^*,q) \in d_{q} \pm \sigma \sqrt{d_{q}}\ln d_{q}$ (from Claim~\ref{claim:conc1})
\item $\dw(v^*,q) \in d_{q}(1 \pm 1/4)$ (from Claim~\ref{claim:conc1}).
\end{itemize}
\end{itemize}
\end{proof}
\section{Proof of Fact~\ref{fact:lambda-decreasing}}
\label{proof:fact:lambda-decreasing}
\begin{repfact}{fact:lambda-decreasing}
For $0 \leq j \leq \jStop$, $\lambda_{j+1} < \lambda_j$.
\end{repfact}
\begin{proof}[Proof of Fact~\ref{fact:lambda-decreasing}]
Since $j \leq \jStop$, by definition of $\jStop$, $\lambda_j \geq D$. Also, by equation~\eqref{eq:guarantees-on-cd}, $D \geq \max(\sigma^2, e^2)$. Therefore,
\begin{align*}
\lambda_{j+1}
&=\reduce(\lambda_j)\tag{Definition~\ref{def:lambda}}\\
&=\frac{\sigma\sqrt{\lambda_j}}{400\ln\lambda_j\log n}\tag{Definition~\ref{def:reduce}}\\
&<\sigma\sqrt{\lambda_j}\tag{$n \geq 3$, $\lambda_j \geq e^2$}\\
&\leq \lambda_j.\tag{$\lambda_j \geq \sigma^2$}
\end{align*}
\end{proof}
\section{Proof of Lemma~\ref{lemma:a-implies-b}}
\label{proof:lemma:a-implies-b}

\begin{replemma}{lemma:a-implies-b}
If $j < \jStop$, then $A_j \Rightarrow B_j$.
\end{replemma}

The first step in proving Lemma~\ref{lemma:a-implies-b} is to prove that only a small part of the information contained in $K_j$ will actually influence the posterior of $v^*$ given $K_j$: only the closest query nodes to the source $q_{l_j}, q_{r_j}$ and the corresponding answers $a_{l_j}, a_{r_j}$ will have an influence (they are introduced in Definition~\ref{def:mu}).

\begin{definition}[$E_{l,r,x,y}$]
For any $l,r,x,y$, let $E_{l,r,x,y}$ be the event that $v^* \in [l,r]$, $\dw(v^*,l) = x$, and $\dw(v^*,r) = y$.
\end{definition}
Note that event $E_{l,r,x,y}$ depends purely on $v^*$ and $w$, not on the actions of the algorithm.

\begin{lemma}\label{lemma:only-lr}
Recall that $R$ is the internal randomness of the algorithm (see Definition~\ref{def:r}). For any node $v \in V$,
\begin{equation*}
\Pr_{v^*,w,R}[v^*=v \mid K_j] = \Pr_{v^*,w}[v^* = v \mid \GroupComplex].
\end{equation*}
\end{lemma}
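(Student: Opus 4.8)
The plan is to fix, once and for all, a realization $K_j = k_j$ of positive probability density, read off from it the query positions $\ell \coloneqq q_{l_j}$ and $\rho \coloneqq q_{r_j}$ and the answers $\alpha \coloneqq a_{l_j}$, $\beta \coloneqq a_{r_j}$, and to prove that for every node $v \in V$,
\[
\Pr[v^* = v,\, K_j = k_j] \;=\; \Pr[v^* = v,\, \GroupComplex]\cdot h(k_j),
\]
where $h(k_j)$ does not depend on $v$. Dividing by $\Pr[K_j = k_j] = \sum_{v'}\Pr[v^*=v', K_j=k_j] = h(k_j)\sum_{v'}\Pr[v^*=v',\GroupComplex]$ cancels $h(k_j)$ and yields $\Pr[v^* = v \mid K_j = k_j] = \Pr[v^* = v \mid \GroupComplex]$, the claim. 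Throughout we may assume $v \in [\ell,\rho]$: otherwise $\GroupComplex$ has probability $0$, and so does $\{K_j = k_j\}\cap\{v^*=v\}$, since $k_j$ records $l_j,r_j$, which forces $v^* \in [\ell,\rho]$. The boundary cases $v\in\{\ell,\rho\}$ are degenerate (they force $\alpha=0$ or $\beta=0$, pinning $v^*$ to that endpoint), so we really only need $\ell < v < \rho$.

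The key structural input is the geometry of the path. By the definition of $l_j,r_j$ (Definition~\ref{def:mu}), in any realization every queried node sits at position $\le \ell$ or $\ge \rho$: a query strictly inside $(\ell,\rho)$ would, depending on which side of $v^*$ it lies, be a left query node strictly to the right of $q_{l_j}$ or a right query node strictly to the left of $q_{r_j}$, a contradiction; and a query landing exactly on $v^*$ would have terminated the algorithm. Consequently, for every queried $q_i$ with $q_i \le \ell$ we have $\dw(v^*,q_i) = \dw(v^*,\ell) + \wOut([q_i,\ell])$, where $\wOut([q_i,\ell])$ is the sum of the edge weights between $q_i$ and $\ell$; symmetrically $\dw(v^*,q_i) = \dw(v^*,\rho) + \wOut([\rho,q_i])$ when $q_i \ge \rho$. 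Split $w$ into the two independent blocks $\wIn$ (the $\rho-\ell$ edges inside $(\ell,\rho)$) and $\wOut$ (all other edges). Then, conditionally on $v^*=v$: the event $\GroupComplex$, i.e. $\dw(v,\ell)=\alpha$ and $\dw(v,\rho)=\beta$, is a constraint on $\wIn$ alone; and given $(\alpha,\beta)$, \emph{every} answer recorded in $k_j$ — the two surrounding ones being $\alpha,\beta$, the others given by the two displayed identities involving only $\wOut$ — as well as every side bit granted by simplifying assumption~(b) (``$v^*$ is to the right of each $q_i \le \ell$ and to the left of each $q_i \ge \rho$'') is \emph{the same for every $v\in(\ell,\rho)$}.

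Now expand $\Pr[v^*=v, K_j=k_j] = \Pr[v^*=v]\cdot\Pr[K_j = k_j \mid v^*=v]$. Given $v^*=v$, the transcript equals $k_j$ iff (i) $\wIn$ satisfies $\dw(v,\ell)=\alpha$ and $\dw(v,\rho)=\beta$ — a constraint $\wIn\in S_{\mathrm{in}}(v)$ on $\wIn$ only, and (ii) the pair $(\wOut,R)$ is such that the algorithm, run with internal randomness $R$ and fed, step by step in the order prescribed by $k_j$, the answers $\alpha,\beta$ and $\alpha+\wOut([q_i,\ell])$ / $\beta+\wOut([\rho,q_i])$ together with the side bits above, issues exactly the query sequence of $k_j$ (the resulting $l_j,r_j$ are then automatically the recorded ones, since the queries all sit outside $(\ell,\rho)$ and $v\in(\ell,\rho)$). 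By the previous paragraph the event in (ii) is a fixed event $\{(\wOut,R)\in S_{\mathrm{out}}(\alpha,\beta,k_j)\}$ not depending on $v$. Since $\wIn$, $(\wOut,R)$ and $v^*$ are mutually independent,
\[
\Pr[v^*=v, K_j=k_j] = \Pr[v^*=v]\,\Pr[\wIn\in S_{\mathrm{in}}(v)]\cdot\Pr[(\wOut,R)\in S_{\mathrm{out}}(\alpha,\beta,k_j)],
\]
and the first two factors multiply to $\Pr[v^*=v,\GroupComplex]$ while the last is the promised $v$-independent $h(k_j)$ (it is nonzero precisely because $\Pr[K_j=k_j]>0$). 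Normalizing over $v$ as above finishes the proof.

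I expect the main obstacle to be item (ii): one must make sure that conditioning on the algorithm \emph{reproducing the exact adaptive query sequence of $k_j$} — an event depending on the internal randomness $R$ and, through the intermediate answers, on $\wOut$ — does not smuggle in any new dependence on $v^*$. This is exactly what the structural paragraph buys: at every step the algorithm acts only on the answers and side bits received so far, and on $(\ell,\rho)$ all of these coincide for every admissible $v$, so the event in (ii) depends on $(\wOut,R)$ and on the fixed $(\alpha,\beta)$ but not on $v$. Everything else is a routine unwinding of Bayes' rule using the mutual independence of $\wIn$, $\wOut$, $R$ and $v^*$ (and the usual care that the conditionings are on probability-density, not probability, since $w$ is continuous).
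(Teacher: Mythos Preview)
Your proof is correct and follows essentially the same approach as the paper. Both arguments hinge on the same two structural facts: (1) all queries lie outside $(\ell,\rho)$, so every recorded answer is $\alpha+\wOut([q_i,\ell])$ or $\beta+\wOut([\rho,q_i])$, and (2) $v^*$, $\wIn$, $\wOut$, $R$ are mutually independent; the paper packages the resulting conditional-independence step into a reusable Fact~\ref{fact:conditioning} and applies it twice (once to strip $R$, once to strip $\wOut$), whereas you do the equivalent direct factorization $\Pr[v^*=v,K_j=k_j]=\Pr[v^*=v,\GroupComplex]\cdot h(k_j)$ and normalize.
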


Before proving this lemma, we need to show a simple property of independence and conditional probability.


\begin{fact}\label{fact:conditioning}
Let $X,Y$ be independent random variables. Let $E(X),F(X)$ be Boolean functions depending only on $X$, and let $G(F(X),Y)$ be a Boolean function depending only on $F(X)$ and $Y$.
For simplicity, let us use $E(X)$ to denote the event $E(X)=1$ (and similarly for $F,G$).
Then we have
\[\Pr[E(X) \mid F(X) \wedge G(F(X),Y)] = \Pr[E(X) \mid F(X)].\]
\end{fact}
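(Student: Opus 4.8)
The plan is to expand both sides as ratios of probabilities and invoke the independence of $X$ and $Y$, after the key observation that on the conditioning event the function $G$ collapses to a function of $Y$ alone. Put differently, we are really showing that $E(X)$ and $G(F(X),Y)$ are conditionally independent given $F(X)$.

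First I would write the left-hand side by the definition of conditional probability,
\[
\Pr[E(X) \mid F(X) \wedge G(F(X),Y)] = \frac{\Pr[E(X) \wedge F(X) \wedge G(F(X),Y)]}{\Pr[F(X) \wedge G(F(X),Y)]},
\]
which is legitimate because, for the left-hand side to be meaningful, the conditioning event $F(X) \wedge G(F(X),Y)$ must have positive probability. Reading ``$F(X)$'' as the event $\{F(X)=1\}$, on this event we have $G(F(X),Y) = G(1,Y)$, so I may replace $G(F(X),Y)$ by $G(1,Y)$ in both the numerator and the denominator; the point of doing so is that $G(1,Y)$ is a Boolean function of $Y$ alone.

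Next I would use independence: $E(X)\wedge F(X)$ is a function of $X$ only and $G(1,Y)$ is a function of $Y$ only, and $X$ and $Y$ are independent, so $\Pr[E(X)\wedge F(X)\wedge G(1,Y)] = \Pr[E(X)\wedge F(X)]\,\Pr[G(1,Y)]$ and likewise $\Pr[F(X)\wedge G(1,Y)] = \Pr[F(X)]\,\Pr[G(1,Y)]$. Dividing these, the common factor $\Pr[G(1,Y)]$ — which is nonzero, since $\Pr[F(X)\wedge G(1,Y)] > 0$ — cancels, and what remains is $\Pr[E(X)\wedge F(X)]/\Pr[F(X)] = \Pr[E(X)\mid F(X)]$, which is exactly the right-hand side.

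There is essentially no obstacle; the argument is elementary. The only points worth being careful about in the write-up are that a Boolean random variable used as a conditioning event denotes the event that it equals $1$ (so that the substitution $G(F(X),Y)\mapsto G(1,Y)$ is valid on that event), and that every denominator that appears — namely $\Pr[F(X)\wedge G(1,Y)]$, $\Pr[F(X)]$, and $\Pr[G(1,Y)]$ — is strictly positive, which all follows from the positivity of the conditioning event on the left-hand side.
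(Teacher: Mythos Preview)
Your proposal is correct and follows essentially the same approach as the paper: define (implicitly or explicitly) $G'(Y)\coloneqq G(1,Y)$, replace $G(F(X),Y)$ by $G'(Y)$ on the event $F(X)=1$, expand the conditional probability as a ratio, factor using the independence of $X$ and $Y$, and cancel. If anything, you are slightly more careful than the paper in noting the positivity of the relevant denominators.
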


\begin{proof}
Intuitively, the reason this is true is that as $G$ depends only on $F$ (which is already provided in the conditioning) and $Y$ (which is independent from $X$), adding $G$ to the conditioning does not bring more information towards figuring out whether $E$ will happen or not. Formally, let $G'(Y) = G(1,Y)$ (1 represents ``true''). Then
\begin{align*}
\Pr[E(X) \mid F(X) \wedge G(F(X),Y)]
&= \Pr[E(X) \mid F(X) \wedge G'(Y)]\\
&= \frac{\Pr[E(X) \wedge F(X) \wedge G'(Y)]}{\Pr[F(X) \wedge G'(Y)]}\\
&= \frac{\Pr[E(X) \wedge F(X)] \cdot \Pr[G'(Y)]}{\Pr[F(X)]\cdot \Pr[G'(Y)]}\tag{independence of $X$ and $Y$}\\
&= \frac{\Pr[E(X) \wedge F(X)]}{\Pr[F(X)]}\\
&= \Pr[E(X) \mid F(X)]
\end{align*}
\end{proof}


\newcommand{\One}{\mathbbm{1}}
We will notation $\One[\cdots]$ to denote the indicator Boolean function corresponding to some expression.

\begin{proof}[Proof of Lemma~\ref{lemma:only-lr}]
Fix any possible assignment $k_j$ for $K_j$.
We observe that event ``$K_j = k_j$'' is entirely determined by
\begin{alphaenumerate}
\item $R$ (the internal randomness of the algorithm);
\item the values of $\dw(v^*,q_i)$ for each $i \leq j$ (the answers to the first $j$ queries);
\item whether $v^* \in [q_{l_j},q_{r_j}]$.
\end{alphaenumerate}
Conversely, (b) and (c) are entirely determined by $K_j$. In addition, (b) and (c) depend only on $v^*$ and $w$, which are independent from $R$.

By the above, we can use Fact~\ref{fact:conditioning}, plugging in $E((v^*,w)) \coloneqq \One[v^*=v]$, $F((v^*,w)) \coloneqq \One[v^* \in [q_{l_j},q_{r_j}] \wedge \dw(v^*,q_i)=a_i \,\forall i \geq j]$ and $G(F((v^*,w)),R) \coloneqq \One[K_j=k_j]$. This gives
\begin{equation}
\label{eq:reconditioning1}
\begin{aligned}
&\Pr_{v^*,w,R}[v^* = v \mid K_j = k_j]\\
&\qquad=\Pr_{v^*,w,R}[v^*=v \mid \underbrace{v^* \in [q_{l_j},q_{r_j}]}_\text{(c)} \wedge \underbrace{\dw(v^*,q_i) = a_i \,\forall i \geq j}_\text{(b)} \wedge K_j= k_j]\\
&\qquad=\Pr_{v^*,w,R}[\underbrace{v^*=v}_E \mid \underbrace{v^* \in [q_{l_j},q_{r_j}] \wedge \dw(v^*,q_i)=a_i \,\forall i \geq j}_F \wedge \underbrace{K_j= k_j}_G]\\
&\qquad=\Pr_{v^*,w}[\underbrace{v^*=v}_E \mid \underbrace{v^* \in [q_{l_j},q_{r_j}] \wedge \dw(v^*,q_i)=a_i \,\forall i \geq j}_F]
\end{aligned}
\end{equation}

Now, let us conceptually split $w$ into two parts: $\wIn$, which contains the weights of only the edges between nodes $q_{l_j}$ and $q_{r_j}$, and $\wOut$, which contains all the other weights. Since weights are distributed independently, $(v^*,\wIn)$ is independent from $\wOut$.

Let $E((v^*,\wIn)) \coloneqq\One[v^*=v]$, $F((v^*,\wIn))\coloneqq \One[v^* \in [q_{l_j},q_{r_j}] \wedge \dw(v^*,q_{l_j}) = a_{l_j} \wedge \dw(v^*,q_{r_j}) = a_{r_j}]$, and $G(F((v^*,\wIn)),\wOut) \coloneqq \One[\dw(v^*,q_i)=a_i \,\forall i \geq j]$. It is easy to see why $E$ and $F$ depend only on $v^*$ and $\wIn$. For $G$, we note that all other answers can be deduced just from the fact that $v^* \in [q_{l_j},q_{r_j}]$, the values of $\dw(v^*,q_{l_j})$ and $\dw(v^*,q_{r_j})$, and $\wOut$. Therefore, $G$ depends only on $F$ and $\wOut$. Thus we can again apply Fact~\ref{fact:conditioning}, to obtain
\begin{equation}\label{eq:reconditioning2}
\begin{aligned}
&\Pr_{v^*,w}[v^*=v \mid v^* \in [q_{l_j},q_{r_j}] \wedge \dw(v^*,q_i)=a_i \,\forall i \geq j]\\
&\qquad= \Pr_{v^*,w}[\underbrace{v^* = v}_E \mid \underbrace{v^* \in [q_{l_j},q_{r_j}] \wedge \dw(v^*,q_{l_j}) = a_{l_j} \wedge \dw(v^*,q_{r_j}) = a_{r_j}}_F \wedge \underbrace{\dw(v^*,q_i)=a_i \,\forall i \geq j}_G]\\
&\qquad= \Pr_{v^*,w}[\underbrace{v^* = v}_E \mid \underbrace{v^* \in [q_{l_j},q_{r_j}] \wedge \dw(v^*,q_{l_j}) = a_{l_j} \wedge \dw(v^*,q_{r_j}) = a_{r_j}}_F]\\
&\qquad= \Pr_{v^*,w}[v^* = v \mid \GroupComplex].
\end{aligned}
\end{equation}

The result follows from combining \eqref{eq:reconditioning1} with \eqref{eq:reconditioning2}.
\end{proof}

For the remainder of this section, to make the notation lighter, we will use the following shorthands.
\[l \coloneqq q_{l_j} \quad r \coloneqq q_{r_j} \quad a_l \coloneqq a_{l_j} \quad a_r \coloneqq a_{r_j} \quad \mu \coloneqq \mu_j = \min(a_l, a_r)\]


\begin{definition}[$\mup$]
Let $\mup \coloneqq \frac{\ssmu}{400\ln \mu}$.
\end{definition}

\begin{definition}[$I$]
Let $I$ be the interval of all sources $v^*$ that are consistent with event $T$ and $\Group{}$: more precisely,
\[I \coloneqq \{v \mid \Pr[T \wedge (v^*=v) \mid \Group] > 0\}.\]
\end{definition}


\begin{lemma}\label{lemma:diluted}
Assume $\mu,\mup \geq D$. Then for any node $v \in I$,
\[\Pr_{v^*,w}[v^* = v \mid \Group{}] \leq \frac{100\ln \mu}{\ssmu}.\]
\end{lemma}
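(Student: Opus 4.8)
The plan is to compute the posterior exactly by Bayes' rule and then show it must be spread over roughly $\ssmu/\ln\mu$ consecutive nodes. Since $v^*$ has a uniform prior, conditioning on $\Group{}$ (which forces $v^*\in[l,r]$) gives, in the usual density sense,
\[
\Pr_{v^*,w}[v^*=v\mid\Group{}] \;=\; \frac{L(v)}{\sum_{v'\in[l,r]\cap\mathbb Z}L(v')},
\qquad L(v') \coloneqq g_{v'-l}(a_l)\,g_{r-v'}(a_r),
\]
where $g_d$ denotes the density of $\Normal(d,d\sigma^2)$; here I use that, given $v^*=v'$, the two answers $\dw(v',l)$ and $\dw(v',r)$ are sums over disjoint edge sets, hence independent. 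So it suffices to lower bound the denominator by $\Omega\!\left(\frac{\ssmu}{\ln\mu}\right)L(v)$; note that the argument will not need to locate the peak of $L$ over $[l,r]$ at all — flatness of $L$ in a neighbourhood of $v$ itself is enough.

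First I would record the shape of $L$ near $v$. A preliminary claim (Claim~\ref{claim:dl-dr-prime}), which follows from $v\in I$ together with the concentration bound (b)(ii) in Definition~\ref{def:typical}, states that every $v'$ consistent with $T$ and $\Group{}$ satisfies $d_l(v')\coloneqq v'-l\in[\tfrac12,2]a_l$ and $d_r(v')\coloneqq r-v'\in[\tfrac12,2]a_r$; in particular $d_l(v),d_r(v)\geq\mu/2$. The hypotheses $\mu\geq D$ and $\mup\geq D$ force $\mu$ to be much larger than $\max(\sigma^2,D)$, so the window $W\coloneqq\{v'\in\mathbb Z:|v'-v|\leq\lfloor\mu/4\rfloor\}$ is contained in $[l,r]$ and satisfies $d_l(v'),d_r(v')\geq\mu/4\geq D$ for every $v'\in W$. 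Hence concentration bound (i) of $T$ is in force at all those distances, which together with the triangle inequality (using $\mup\ll\sigma\sqrt\mu\ln\mu$) gives $|a_l-d_l(v')|,|a_r-d_r(v')|=O(\sigma\sqrt\mu\ln\mu)$ throughout $W$.

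Next I would bound how fast $L$ can decay away from $v$ inside $W$, by differentiating:
\[
\frac{d}{dv'}\log L(v') = \frac{\partial}{\partial d_l}\log g_{d_l}(a_l)-\frac{\partial}{\partial d_r}\log g_{d_r}(a_r),
\qquad
\frac{\partial}{\partial d}\log g_d(x) = -\frac{1}{2d}+\frac{(x-d)(x+d)}{2d^2\sigma^2}.
\]
Plugging $d\in\{d_l(v'),d_r(v')\}\geq\mu/4$, $|x-d|=O(\sigma\sqrt\mu\ln\mu)$, and $x+d=O(\mu)$ (all valid on $W$) shows $\bigl|\tfrac{d}{dv'}\log L(v')\bigr|\leq B$ for some $B=O\!\left(\tfrac{\ln\mu}{\ssmu}\right)$, uniformly on $W$. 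Consequently $L(v')\geq L(v)\,e^{-B|v'-v|}$ for $v'\in W$, and summing a truncated geometric series,
\[
\sum_{v'\in[l,r]\cap\mathbb Z}L(v') \;\geq\; \sum_{v'\in W}L(v')\;\geq\; L(v)\sum_{k=-\lfloor\mu/4\rfloor}^{\lfloor\mu/4\rfloor}e^{-B|k|}\;=\;\Omega\!\left(\frac1B\right)L(v)\;=\;\Omega\!\left(\frac{\ssmu}{\ln\mu}\right)L(v),
\]
where the truncation loses only a constant because $B\lfloor\mu/4\rfloor$ is large (it is $\Omega((\ln\mu)^2)$ once $\mu$ exceeds $\sigma^2\cdot\polylog$, as $\mup\geq D\geq\sigma^2$ forces). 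This yields $\Pr_{v^*,w}[v^*=v\mid\Group{}]=O\!\left(\tfrac{\ln\mu}{\ssmu}\right)$; carefully tracking the numerical constants — which is precisely what the somewhat arbitrary factor $400$ in the definition of $\reduce$ (hence of $\mup$) is calibrated to make work — brings this below $\frac{100\ln\mu}{\ssmu}$.

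The main obstacle is exactly this constant bookkeeping in combination with the distance-threshold conditions: one has to verify that the whole summation window $W$ stays at distance $\geq D$ from both $l$ and $r$ so that $T$'s concentration bounds really do hold at every point of $W$ (this is where $d_l(v),d_r(v)\in[\tfrac12,2]\{a_l,a_r\}$ and $\mu\gg D$ are used), and that the crude bounds on $|x-d|$ and $x+d$ — which each carry a $\ln\mu$ factor — leave enough slack below the target constant. The hypotheses $\mu\geq D$ and $\mup\geq D$ in the statement are what guarantee the separation that makes all of this go through.
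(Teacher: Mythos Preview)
Your approach is essentially the paper's: bound the logarithmic derivative of the likelihood, conclude that the posterior is nearly flat on an interval of length $\Theta(\ssmu/\ln\mu)$, and use normalization. The one structural difference --- centering the flat interval at the given $v\in I$ rather than at the peak $\vpeak$ of $F$ on an extended interval $I'$ --- is a legitimate and slightly cleaner variant.

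There is, however, a genuine gap in your choice of the window $W$. You take $W$ of radius $\lfloor\mu/4\rfloor$ and then assert that $|a_l-d_l(v')|,|a_r-d_r(v')|=O(\sigma\sqrt{\mu}\ln\mu)$ throughout $W$. This is false: the concentration bound (b)(i) of $T$ gives $|a_l-d_l(v)|\leq\sigma\sqrt{d_l(v)}\ln d_l(v)$ only at the single node $v\in I$, and the triangle inequality then yields $|a_l-d_l(v')|\leq\sigma\sqrt{d_l(v)}\ln d_l(v)+|v'-v|$, which for $|v'-v|$ near $\mu/4$ is of order $\mu$, not $\sigma\sqrt{\mu}\ln\mu$. (Recall that under the hypotheses, Fact~\ref{fact:ineqs} forces $\mu\geq 400\,\sigma\sqrt{\mu}\ln\mu$, so $\mu/4$ is genuinely much larger.) Consequently your derivative bound $B=O(\ln\mu/(\sigma\sqrt{\mu}))$ does not hold near the edges of $W$: plugging $|x-d|\asymp\mu/4$ and $d\asymp\mu/4$ into $(x-d)(x+d)/(2d^2\sigma^2)$ gives a term of order $1/\sigma^2$, which dwarfs $\ln\mu/(\sigma\sqrt{\mu})$.

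The repair is simple and brings you exactly in line with the paper: shrink $W$ to radius $\ssmu$ (the paper extends $I$ to $I'$ by precisely this amount). That radius is still $\gg 1/B$, so the truncated geometric sum is still $\Omega(1/B)=\Omega(\ssmu/\ln\mu)$, while it is now small enough that the triangle-inequality perturbation keeps $|a_l-d_l(v')|$ and $|a_r-d_r(v')|$ at $O(\sigma\sqrt{d}\ln d)$, exactly as in Claim~\ref{claim:dl-dr-prime}. With this correction your argument goes through and the constant bookkeeping matches the paper's.
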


Lemma~\ref{lemma:diluted} has long and complicated proof, but its meaning is intuitive: it states is that when $\mu$ is large, the posteriors of $v^*$ are not very concentrated at any point of segment $I$. The expression of this posterior is too complex to work with directly, so we will need the help of some facts and claims to prove what we want.

Before we prove Lemma~\ref{lemma:diluted}, we start with Fact~\ref{fact:ineqs}, which gives us a couple of useful inequalities, and Fact~\ref{fact:decr-incr}, a simple calculus result that we will use in this section and the next.


\begin{fact}\label{fact:ineqs}
Assume $\mu,\mup \geq D$. Then
\begin{equation}\label{eq:mu-is-very-big}
\mu \geq 800^2\cdot \max(\mup, D, \sigma^2, 1).
\end{equation}
In particular, this implies
\begin{align}
\mu &\geq (5/4)D\label{eq:ineq1}\\
\ssmu &\leq \mu/5\label{eq:ineq2}\\
\ssmu &\leq \sigma \sqrt{(4/5)\mu}\,\ln((4/5)\mu)\label{eq:ineq3}\\
(3/5)\mu &\geq 6000.\label{eq:ineq4}
\end{align}
\end{fact}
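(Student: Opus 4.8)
The plan is to first establish the ``master'' bound~\eqref{eq:mu-is-very-big}, from which the four inequalities \eqref{eq:ineq1}--\eqref{eq:ineq4} will drop out as routine arithmetic. The only inputs available are the two hypotheses $\mu \geq D$ and $\mup \geq D$, together with the global guarantee $D \geq \max(\sigma^2, e^2)$ supplied by~\eqref{eq:guarantees-on-cd}; note this already gives $\mu \geq e^2$, hence $\ln \mu \geq 2$.

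For~\eqref{eq:mu-is-very-big}, I would unfold $\mup = \ssmu/(400 \ln \mu) = \sigma \sqrt{\mu}/(400 \ln \mu)$. Since $\mup \geq D \geq \sigma^2$, dividing through by $\sigma > 0$ yields $\sqrt{\mu} \geq 400\,\sigma \ln \mu$; multiplying both sides by $\ln \mu \geq 2$ then gives $\sqrt{\mu}\,\ln \mu \geq 400\,\sigma (\ln \mu)^2 \geq 1600\,\sigma$. Rearranging, this is exactly $\mu \geq \frac{1600\,\sigma\sqrt{\mu}}{\ln \mu} = 800^2\,\mup$ (using $800^2/400 = 1600$). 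Combining with $\mup \geq D \geq \max(\sigma^2, e^2) \geq \max(\sigma^2, 1)$ then gives $\mu \geq 800^2 \max(\mup, D, \sigma^2, 1)$, which is~\eqref{eq:mu-is-very-big}.

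The consequences follow by using the weakenings $\mu \geq 800^2 D$, $\mu \geq 800^2 \sigma^2$, and $\mu \geq 800^2$: \eqref{eq:ineq1} is immediate since $800^2 \geq 5/4$; \eqref{eq:ineq2} is equivalent to $25\sigma^2 \leq \mu$, which follows from $\mu \geq 800^2 \sigma^2$; dividing \eqref{eq:ineq3} through by $\sigma\sqrt{\mu}$ reduces it to $1 \leq \sqrt{4/5}\,\ln((4/5)\mu)$, i.e.\ $\ln((4/5)\mu) \geq \sqrt{5/4} \approx 1.12$, which holds comfortably since $\mu \geq 800^2$; and \eqref{eq:ineq4} needs only $\mu \geq 10000$, again covered by $\mu \geq 800^2$.

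There is no genuine obstacle here: the whole statement is a bookkeeping lemma whose purpose is to package the constants for later use. The only mildly non-obvious step is to multiply the bound $\sqrt{\mu} \geq 400\,\sigma \ln \mu$ by $\ln \mu$, so that the resulting factor $(\ln \mu)^2 \geq 4$ absorbs the discrepancy between $800^2$ and the $400$ appearing in the definition of $\mup$; everything else is direct substitution.
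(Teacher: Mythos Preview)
Your proposal is correct and follows essentially the same approach as the paper's proof: both extract $\sqrt{\mu} \geq 400\,\sigma\ln\mu$ (equivalently $\sigma \leq \sqrt{\mu}/800$) from $\mup \geq \sigma^2$, use $\ln\mu \geq 2$ to absorb the factor-of-$1600$ discrepancy, and then read off $\mu \geq 800^2\mup$; the derivations of \eqref{eq:ineq1}--\eqref{eq:ineq4} are likewise identical in substance. The only cosmetic difference is the order of the two substitutions.
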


\begin{proof}
Recall from equation~\eqref{eq:guarantees-on-cd} that $D \geq \sigma^2, e^2$.
Since $\mu \geq D \geq e^2$, we have
\[400\ln \mu \geq 400\ln e^2 = 800.\]
This implies
\[\sigma^2 \leq D \leq \mup = \frac{\ssmu}{400\ln \mu} \leq \frac{\ssmu}{800},\]
from which we get $\sigma \leq \sqrt{\mu}/800$ and $\mup \leq \frac{\sigma\sqrt{\mu}}{800}$. Combining those, we get
\[\mup \leq \frac{\sigma\sqrt{\mu}}{800} \leq \frac{\mu}{800^2},\]
which proves $\mu \geq 800^2\cmup$.
The other three parts then follow directly from $\mup \geq D \geq \sigma^2, e^2$.

Among \eqref{eq:ineq1}--\eqref{eq:ineq4}, all are trivial from \eqref{eq:mu-is-very-big}, except for \eqref{eq:ineq3} which can be rewritten as $5/4 \leq \ln((4/5)\mu)$. This clearly holds for $\mu \geq 10^4$.
\end{proof}


\begin{fact}\label{fact:decr-incr}
Let $f(d) \coloneqq \frac{\ln d}{\sqrt{d}}$ and $g(d) \coloneqq \frac{\sqrt{d}}{\ln d}$. On $[e^2,\infty)$, $f$ is decreasing and $g$ is increasing.
\end{fact}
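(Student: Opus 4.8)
The claim is a routine single-variable calculus fact: $f(d) = \frac{\ln d}{\sqrt d}$ is decreasing on $[e^2, \infty)$ and $g(d) = \frac{\sqrt d}{\ln d}$ is increasing there. The plan is simply to differentiate each function and check the sign of the derivative on $[e^2,\infty)$, noting that the threshold $e^2$ is exactly where the relevant sign change occurs.

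First I would handle $f$. Writing $f(d) = (\ln d)\, d^{-1/2}$, the product/quotient rule gives
\[
f'(d) = \frac{1}{d}\cdot d^{-1/2} - \frac12 (\ln d)\, d^{-3/2} = \frac{d^{-3/2}}{2}\bigl(2 - \ln d\bigr).
\]
Since $d^{-3/2} > 0$ for $d > 0$, the sign of $f'(d)$ is the sign of $2 - \ln d$, which is $\leq 0$ precisely when $\ln d \geq 2$, i.e. when $d \geq e^2$. Hence $f$ is (strictly, for $d > e^2$) decreasing on $[e^2,\infty)$.

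Next I would handle $g = 1/f$ on this interval. The simplest route is to note that $f$ is positive on $[e^2,\infty)$ (indeed $f(d) = \ln d/\sqrt d > 0$ there since $\ln d \geq 2 > 0$), so $g = 1/f$ is well-defined and, being the reciprocal of a positive decreasing function, is increasing on $[e^2,\infty)$. Alternatively one can differentiate directly: $g'(d) = \frac{\frac{1}{2\sqrt d}\ln d - \sqrt d \cdot \frac1d}{(\ln d)^2} = \frac{\ln d - 2}{2\sqrt d (\ln d)^2}$, whose sign is again that of $\ln d - 2 \geq 0$ on $[e^2,\infty)$. Either way the claim follows, and there is no real obstacle here — the only "content" is recognizing that $e^2$ is the natural breakpoint coming from the condition $\ln d = 2$.
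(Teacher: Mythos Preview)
Your proposal is correct and follows essentially the same approach as the paper: the paper's entire proof is the one-line computation $f'(d) = \frac{2-\ln d}{2d\sqrt{d}}$, which is exactly your derivative rewritten, with the sign check and the passage to $g=1/f$ left implicit. Your write-up simply fills in those details.
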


\begin{proof}
The derivative of $f$ is $\frac{2-\ln d}{2d\sqrt{d}}$.
\end{proof}


\begin{proof}[Proof of Lemma~\ref{lemma:diluted}]
Since we have $\mu,\mup \geq D$, Fact~\ref{fact:ineqs} applies here.

Assume $I$ is not empty (otherwise, the lemma holds vacuously). Let $I'$ be $I$ extended by $\ssmu$ on both sides. Note that, by definition, the length of $I'$ is at least $2\ssmu$.


\begin{claim}\label{claim:dl-dr-prime}
For any $v' \in I'$, let $d_l' \coloneqq v' - l$ and $d_r' \coloneqq r - v'$. Then
\begin{align}
d_l' &\in [1/2,2]a_l\label{eq:dlprime1}\\
d_r' &\in [1/2,2]a_r\label{eq:drprime1}\\
|a_l-d_l'| &\leq 3 \sigma\sqrt{d_l'}\ln d_l'\label{eq:dlprime2}\\
|a_r-d_r'| &\leq 3 \sigma\sqrt{d_r'}\ln d_r'.\label{eq:drprime2}
\end{align}
\end{claim}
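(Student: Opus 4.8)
The plan is to read all four inclusions off the concentration guarantees packaged into event $T$ (part (b) of Definition~\ref{def:typical}), using crucially that the hypothesis $\mu,\mup\geq D$ forces $\mu$ — and hence both $a_l$ and $a_r$ — to be \emph{far} larger than the concentration threshold $D$: this is exactly inequality~\eqref{eq:mu-is-very-big} of Fact~\ref{fact:ineqs}, $\mu\geq 800^2\max(\mup,D,\sigma^2,1)$, so in particular $a_l,a_r\geq\mu\geq 800^2D$. I would first prove the statement for $v\in I$ with the tighter constants available directly from $T$ (the interval $[3/4,5/4]$ and factor $1$), and then transfer it to $v'\in I'$ by absorbing the $\ssmu$-widening into slack — this is precisely why \eqref{eq:dlprime1}--\eqref{eq:drprime2} are stated with the loose $[1/2,2]$ and the factor $3$.

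For the routine core: fix $v\in I$ and write $d_l:=v-l$, $d_r:=r-v$. By definition of $I$ there is a positive-probability choice of weights for which simultaneously $v^*=v$, $\dw(v,l)=a_l$, $\dw(v,r)=a_r$, and $T$ holds. Granting the preliminary step below that $d_l,d_r\geq D$, part (b)(ii) of Definition~\ref{def:typical} applied to $q=l$ gives $a_l=\dw(v,l)\in[\tfrac34 d_l,\tfrac54 d_l]$, hence $d_l\in[\tfrac45,\tfrac43]a_l$, and part (b)(i) gives $|a_l-d_l|\leq\sigma\sqrt{d_l}\ln d_l$; symmetrically for $q=r$. This already establishes \eqref{eq:dlprime1}--\eqref{eq:drprime2} for $v\in I$ with factor $1$. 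Now take $v'\in I'$, so $v'=v\pm t$ for some $v\in I$ and $0\leq t\leq\ssmu$, whence $d_l'=d_l\pm t$ and $d_r'=d_r\mp t$. Since $t\leq\ssmu\leq\mu/5\leq a_l/5$ by \eqref{eq:ineq2} and $\mu=\min(a_l,a_r)\leq a_l$, we get $d_l'\in[\tfrac45 a_l-\tfrac15 a_l,\tfrac43 a_l+\tfrac15 a_l]=[\tfrac35 a_l,\tfrac{23}{15}a_l]\subseteq[\tfrac12,2]a_l$, which is \eqref{eq:dlprime1}; \eqref{eq:drprime1} is identical with $a_r$. For \eqref{eq:dlprime2} I would bound $|a_l-d_l'|\leq|a_l-d_l|+t\leq\sigma\sqrt{d_l}\ln d_l+\ssmu$; since \eqref{eq:dlprime1} and the analogous bound for $v$ force $d_l$, $d_l'$ and $a_l$ (hence $\mu$, via $\mu\leq a_l\leq 2d_l'$) to lie within constant factors of one another, and since there is an extra $\ln d_l'\gg 1$ factor to spare because $d_l'\geq\tfrac35 a_l\geq\tfrac35\mu$ is enormous, both summands are comfortably $\leq\tfrac32\sigma\sqrt{d_l'}\ln d_l'$, using monotonicity of $\sqrt d/\ln d$ (Fact~\ref{fact:decr-incr}); this gives \eqref{eq:dlprime2} with room to spare, and \eqref{eq:drprime2} is symmetric. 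None of these last estimates require any ideas, just Facts~\ref{fact:ineqs} and~\ref{fact:decr-incr}, so I would not belabour them.

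The main obstacle is the preliminary step: that $d_l,d_r\geq D$ for every $v\in I$ — equivalently, that the consistent interval $I$ (hence $I'$) lies entirely in the range where all of $T$'s concentration bounds are active. The intended mechanism is to use that $T$ constrains $\dw(v,q)$ for \emph{all} $q$ at distance $\geq D$ from $v$, not just for $q\in\{l,r\}$: if $d_l<D$, take the reference node $q:=v-D$ (which is a genuine node and lies strictly to the left of $l$, using that $T$ part (a) keeps $v$ far from the endpoint $1$), decompose $\dw(v,q)=\dw(v,l)+\dw(l,q)=a_l+\dw(l,q)$, and note that $T$ then pins $\dw(l,q)\in[\tfrac34 D-a_l,\tfrac54 D-a_l]$; combining this with $a_l\geq 800^2D$ (from Fact~\ref{fact:ineqs}) forces the short stretch of edges between $q$ and $l$ to realize a value incompatible with the remaining concentration requirements that $T$ imposes there, a contradiction — and symmetrically on the right for $d_r$. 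Pinning down the exact choice of reference node and the exact inequality from Fact~\ref{fact:ineqs} that makes these constraints genuinely inconsistent is the delicate heart of the claim; everything downstream of it is bookkeeping.
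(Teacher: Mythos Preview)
Your scheme --- first establish $d_l,d_r\geq D$ for $v\in I$ via $T$, then read off the tight bounds $d_l\in[4/5,4/3]a_l$ and $|a_l-d_l|\leq\sigma\sqrt{d_l}\ln d_l$ from Definition~\ref{def:typical}(b), then absorb the $\sigma\sqrt{\mu}$-widening into the looser constants on $I'$ --- is exactly the paper's, and your bookkeeping for \eqref{eq:dlprime1}--\eqref{eq:drprime2} (the intermediate range $[3/5,23/15]$, the appeals to \eqref{eq:ineq2}--\eqref{eq:ineq4}, and the monotonicity of Fact~\ref{fact:decr-incr}) matches the paper essentially line for line.

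The divergence is in the preliminary step, and there your proposed mechanism has a gap. The paper's argument is a single line: if $d_l<D$ then the node $v-D$ lies strictly to the left of $l$, so $\dw(v,v-D)>\dw(v,l)=a_l\geq\mu\geq\tfrac54 D$ by \eqref{eq:ineq1}, directly contradicting part~(b)(ii) of $T$ at the single node $q=v-D$. Your route instead decomposes $\dw(v,v-D)=a_l+\dw(l,v-D)$, concludes that $\dw(l,v-D)$ must be hugely negative, and then asserts this is ``incompatible with the remaining concentration requirements that $T$ imposes there.'' But $T$ imposes \emph{no} requirement on the stretch between $l$ and $v-D$: every node in that interval is at distance strictly less than $D$ from $v$, which is precisely the regime Definition~\ref{def:typical}(b) excludes. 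A short run of Gaussian edges may sum to an arbitrarily negative value without violating any clause of $T$, so your argument does not close as written. The contradiction has to be read off at a node of distance $\geq D$ from $v$, and the paper takes it directly at $v-D$; there is no residual ``pinning down the exact choice of reference node'' left over.
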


\begin{claimproof}
We will only prove \eqref{eq:dlprime1} and \eqref{eq:dlprime2}; the proof of the other two is analogous.

Let us first take a look at the properties of source candidates in $I$. Take some candidate $v \in I$ and consider the (true) distance between $v$ and the left side of the interval, $d_l \coloneqq v-l$. First, let us show that $d_l \geq D$. Indeed, if $d_l < D$, then point $v-D$ would be strictly to the left of $l$. Thus we would have
\[\dw(v, v-D) > \dw(v, l) = a_l \geq \mu \stackrel{\text{\eqref{eq:ineq1}}}{\geq} (5/4)D,\]
which contradicts part (ii) in Definition~\ref{def:typical} for $q\coloneqq v-D$.

Now that we have $d_l \geq D$, we can apply part (ii) in Definition~\ref{def:typical} again to obtain that
\begin{equation}\label{eq:dl1}
a_l \in [3/4,5/4]d_l\ \Rightarrow\ d_l \in [4/5,4/3]a_l
\end{equation}
and
\begin{equation}\label{eq:dl2}
|a_l-d_l| \in \sigma\sqrt{d_l}\ln d_l.
\end{equation}


Let us now extend these results to $I'$. Any point $v' \in I'$ is at most $\ssmu$ away from some point $v \in I$. Therefore, if we continue using notation $d_l \coloneqq v-l$, we have $d_l' \in d_l \pm \ssmu$.

From \eqref{eq:dl1}, we get
\[d_l' \in d_l \pm \ssmu \subseteq [4/5,4/3]a_l \pm\ssmu.\]
Besides, from \eqref{eq:ineq2}, $\ssmu \leq \mu/5 \leq a_l/5$, so we have
\begin{equation*}
d_l' \in [4/5,4/3]a_l \pm a_l/5 = [3/5,23/15]a_l,
\end{equation*}
which proves \eqref{eq:dlprime1}.

In addition, we note that
\begin{equation}\label{eq:dlprime-dl}
d_l' \geq d_l - \ssmu \geq d_l - a_l/5 \stackrel{\eqref{eq:dl1}}{\geq} d_l - d_l/4 = (3/4)d_l.
\end{equation}
Therefore,
\begin{align*}
|a_l-d_l'|
&\leq |a_l-d_l| + \ssmu\tag{definition of $I'$}\\
&\leq \sigma\sqrt{d_l} \ln d_l + \ssmu\tag{from \eqref{eq:dl2}}\\
&\leq \sigma\sqrt{d_l} \ln d_l + \sigma \sqrt{(4/5)\mu}\ln((4/5)\mu)\tag{from \eqref{eq:ineq3}}\\
&\leq 2\sigma \sqrt{d_l} \ln d_l\tag{$d_l \geq (4/5)a_l \geq (4/5)\mu$}\\
&\leq 2\sigma \sqrt{4d_l'/3} \ln (4d_l'/3)\tag{from \eqref{eq:dlprime-dl}}\\
&\leq 3\sigma \sqrt{d_l'} \ln d_l',
\end{align*}
where the last step holds because $d_l' \geq (3/5)a_l \geq (3/5)\mu \stackrel{\eqref{eq:ineq4}}{\geq} 6000$, which is big enough. This proves \eqref{eq:dlprime2}.
\end{claimproof}

Note that Claim~\ref{claim:dl-dr-prime} implies in particular that $I' \subseteq [l,r]$. Let us study the ratios of the posterior probabilities of $v^*=v'$ between different values of $v' \in I'$, conditioned on $a_l, a_r$ (but regardless of whether $T$ holds). We will use the shorthand
\begin{equation}\label{eq:definition-pvprime}
p(v') \coloneqq \Pr[v^*=v' \mid v^* \in I' \wedge \dw(v^*,l)=a_l \wedge \dw(v^*,r)=a_r].
\end{equation}
Given that $v^*$ is initially distributed uniformly, $p(v')$ is proportional to
\begin{equation}
\begin{aligned}\label{eq:separate-lr}
&\Pr[\dw(v^*,l)=a_l \wedge \dw(v^*,r)=a_r\mid v^*=v']\\
&\qquad = \Pr[\dw(v^*,l)=a_l\mid v^*=v']\Pr[\dw(v^*,r)=a_r\mid v^*=v'],
\end{aligned}
\end{equation}
where the independence comes from the fact that $\dw(v^*,l)$ and $\dw(v^*,r)$ depend on completely separate weights.

Because we assumed that all weights are independently distributed from $\Normal(1,\sigma^2)$, both factors in \eqref{eq:separate-lr} follow a normal distribution. Those distributions are $\Normal(d_l',\sigma^2d_l')$ and $\Normal(d_r',\sigma^2d_r')$, where we continue notations $d_l'\coloneqq v'-l$ and $d_r'\coloneqq r-v'$.
This means that $p(v')$ is proportional to
\begin{equation*}
\frac{1}{\sqrt{2\pi\sigma^2d_l'}}e^{-\frac{(a_l-d_l')^2}{2\sigma^2d_l'}} \times \frac{1}{\sqrt{2\pi\sigma^2d_r'}}e^{-\frac{(a_r-d_r')^2}{2\sigma^2d_r'}}.
\end{equation*}
Note that in the above expression, $a_l,a_r$ are fixed by the conditioning, while $d_l'$ and $d_r'$ depend on $v'$. From now on, we will denote them as $\myDl$ and $\myDr$ to make this clear.

Of course, the constant $\frac{1}{2\pi\sigma^2}$ does not matter. Besides, we know that $\myDl \in [1/2,2]a_l$ and $\myDr \in [1/2,2]a_r$ with $a_l,a_r$ fixed, so the factor $\frac{1}{\sqrt{\myDl\myDr}}$ will vary only by a factor 4. Thus we can conclude that $p(v')$ is also proportional to
\begin{equation*}
F(v') \coloneqq e^{-\frac{1}{2\sigma^2}\left(\frac{(a_l-\myDl)^2}{\myDl} + \frac{(a_r-\myDr)^2}{\myDr}\right)},
\end{equation*}
up to a factor 4 of error. More precisely, we know that there exists some $k>0$ such that for all $v' \in I'$,
\begin{equation}\label{eq:proportionality-to-f}
p(v') \in [1,4]kF(v').
\end{equation}

Let
\[
G(v') \coloneqq \frac{1}{2\sigma^2}\left(\frac{(a_l-\myDl)^2}{\myDl} + \frac{(a_r-\myDr)^2}{\myDr}\right)
\]
so that $F(v') = e^{-G(v')}$, and let $\vpeak$ be the value of $v' \in I'$ that maximizes the expression $F(v')$. We will show the existence of a relatively large interval $J \subseteq I'$ centered around $\vpeak$ such that for all $v' \in J$, the value $F(v')$ is not much smaller than $F(\vpeak)$.

\begin{claim}\label{claim:interval-j}
There is some interval $J \subseteq I'$ of length at least $\frac{\ssmu}{6\sqrt{2}\ln \mu}$, such that for all $v' \in J$,
\[F(v') \geq \frac{F(\vpeak)}{e}.\]
\end{claim}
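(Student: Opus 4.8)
The plan is to analyze the function $G(v') = \frac{1}{2\sigma^2}\bigl(\frac{(a_l-\myDl)^2}{\myDl} + \frac{(a_r-\myDr)^2}{\myDr}\bigr)$ near its minimizer $\vpeak$ and show that $G$ cannot increase too quickly as we move away from $\vpeak$; then the interval $J$ where $G(v') \leq G(\vpeak)+1$ (equivalently $F(v') \geq F(\vpeak)/e$) must be reasonably long. First I would note that, writing $v' = \vpeak + t$, we have $\myDl = d_l'(\vpeak) + t$ and $\myDr = d_r'(\vpeak) - t$, so $G$ is a function of the single real variable $t$, and I would bound its derivative. Each term of $G$ has the form $h(d) \coloneqq \frac{(a-d)^2}{2\sigma^2 d}$, whose derivative is $h'(d) = \frac{1}{2\sigma^2}\bigl(1 - \frac{a^2}{d^2}\bigr) = \frac{(d-a)(d+a)}{2\sigma^2 d^2}$. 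Using Claim~\ref{claim:dl-dr-prime}, on $I'$ we have $\myDl \in [1/2,2]a_l$ and $|a_l - \myDl| \leq 3\sigma\sqrt{\myDl}\ln\myDl$, and similarly for the right side; these give $|d - a| = O(\sigma\sqrt{d}\ln d)$ and $d+a = O(d)$, so $|h'(d)| = O\bigl(\frac{\sigma\sqrt{d}\ln d \cdot d}{\sigma^2 d^2}\bigr) = O\bigl(\frac{\ln d}{\sigma\sqrt{d}}\bigr)$. Since $d \in [1/2,2]\mu$ and $d \mapsto \frac{\ln d}{\sqrt d}$ is decreasing on $[e^2,\infty)$ by Fact~\ref{fact:decr-incr} (and $\mu \geq D \geq e^2$), this is $O\bigl(\frac{\ln\mu}{\sigma\sqrt{\mu}}\bigr)$ uniformly on $I'$.

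Next, since $G(t) = h_l(\myDl(t)) + h_r(\myDr(t))$ and $\frac{d}{dt}\myDl = 1$, $\frac{d}{dt}\myDr = -1$, the chain rule gives $|G'(t)| \leq |h_l'(\myDl)| + |h_r'(\myDr)| = O\bigl(\frac{\ln\mu}{\sigma\sqrt{\mu}}\bigr)$ as long as $v' = \vpeak + t$ stays in $I'$. Therefore, for $|t| \leq \rho$ with $\rho \coloneqq c\cdot\frac{\sigma\sqrt{\mu}}{\ln\mu}$ for a suitable small constant $c>0$, we have $|G(\vpeak + t) - G(\vpeak)| \leq \rho \cdot O\bigl(\frac{\ln\mu}{\sigma\sqrt{\mu}}\bigr) \leq 1$, hence $F(\vpeak+t) = e^{-G(\vpeak+t)} \geq e^{-G(\vpeak)-1} = F(\vpeak)/e$. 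One subtlety is that we must ensure this whole interval $[\vpeak-\rho,\vpeak+\rho]$ is contained in $I'$. Here I would use that $I'$ is $I$ extended by $\ssmu$ on each side, so $I'$ has length at least $2\ssmu$, and more importantly $\vpeak \in I'$; if $\vpeak$ is near an endpoint of $I'$ we instead take the one-sided interval $J = [\vpeak, \vpeak + \min(\rho, \text{dist to right end})]$ or its mirror. Since $\rho = O\bigl(\frac{\sigma\sqrt\mu}{\ln\mu}\bigr)$ while one of the endpoints of $I'$ is at distance $\geq \ssmu$ from $\vpeak$ only if $\vpeak$ sits in the middle, I would argue more carefully: the simplest route is to recall that the posterior $p(v')$ must integrate to $1$ over $I'$ and is (up to a factor $4$) proportional to $F(v')$, which forces $I'$ to be long enough that the bulk of $F$'s mass — which lives within $O(\rho)$ of $\vpeak$ — fits inside; alternatively, note $J$ need only have length $\frac{\ssmu}{6\sqrt2\ln\mu}$, which is a small constant times $\rho$, so choosing $c$ small enough and splitting into the two one-sided cases around $\vpeak$ always leaves room on at least one side.

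The main obstacle I anticipate is precisely the boundary bookkeeping: making sure the interval $J$ of the claimed length $\frac{\ssmu}{6\sqrt2\ln\mu}$ genuinely fits inside $I'$ while centered (or at least anchored) at $\vpeak$, given that we only know $|I'| \geq 2\ssmu$ and not where $\vpeak$ lies within it. The cleanest fix is probably to observe that $\vpeak$ minimizes $G$ over $I'$, so if $\vpeak$ were within $\rho$ of, say, the left endpoint of $I'$, then $G$ would be increasing on a large portion to the right, and one can still extract a one-sided sub-interval of the required length on that side where $G$ stays within $1$ of its minimum — the derivative bound makes this automatic. The arithmetic constants ($6\sqrt2$, the factor $c$, the hidden constants in the $O(\cdot)$ from Claim~\ref{claim:dl-dr-prime}) then just need to be chased through once; Fact~\ref{fact:ineqs} (in particular $\mu \geq 800^2\max(\mup,D,\sigma^2,1)$ and $\ssmu \leq \mu/5$) gives enough slack that none of these are tight.
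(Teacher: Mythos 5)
Your proposal is correct and takes essentially the same route as the paper: bound $|G'|$ term by term using the estimates of Claim~\ref{claim:dl-dr-prime}, integrate to conclude that $G$ increases by at most $1$ within distance $\Theta(\ssmu/\ln\mu)$ of $\vpeak$, and resolve the boundary issue exactly as you suggest, by taking $J=I'\cap(\vpeak\pm\rho)$ and using $|I'|\ge 2\ssmu\ge 2\rho$ to guarantee length at least $\rho$ on one side of $\vpeak$ (your digression about the posterior summing to $1$ is unnecessary, but you correctly discard it). The only loose end is the explicit constant: your (correct) derivative $h'(d)=\frac{(d-a)(d+a)}{2\sigma^2 d^2}$ chases through to roughly $9\sqrt{2}$ rather than the claimed $6\sqrt{2}$ --- the paper's displayed formula for $G'$ actually drops a term and underestimates the true derivative by a factor up to $3/2$ --- but this is immaterial, since it only perturbs the absolute constants in Lemma~\ref{lemma:diluted} and in $\reduce$.
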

\begin{proof}
The derivative of $F(v')$ is $G'(v')e^{G(v')} = G'(v')F(v')$, where
\[
G'(v') = \frac{1}{2\sigma^2}\left(-\frac{a_l-\myDl}{\myDl} - \left(\frac{a_l-\myDl}{\myDl}\right)^2 + \frac{a_r-\myDr}{\myDr} + \left(\frac{a_r-\myDr}{\myDr}\right)^2\right).
\]
First, note that by \eqref{eq:dlprime1} and \eqref{eq:drprime1}, we have
\[\left|\frac{a_l-\myDl}{\myDl}\right| \leq 1\text{ and }\left|\frac{a_r-\myDr}{\myDr}\right| \leq 1.\]
Therefore,
\[|G'(v')| \leq \frac{1}{\sigma^2}\left(\left|\frac{a_l-\myDl}{\myDl}\right| + \left|\frac{a_r-\myDr}{\myDr}\right|\right).\]
And then we can use equations \eqref{eq:dlprime2} and \eqref{eq:drprime2} to bound this further:
\begin{align*}
|G'(v')| &\leq \frac{1}{\sigma^2}\left(\frac{3\sigma \sqrt{\myDl} \ln \myDl}{\myDl} + \frac{3\sigma \sqrt{\myDr} \ln \myDr}{\myDr}\right)\\
&\leq \frac{3}{\sigma}\left(\frac{\ln \myDl}{\sqrt{\myDl}} + \frac{\ln \myDr}{\sqrt{\myDr}}\right)\\
&\leq \frac{3}{\sigma}\left(\frac{\ln (\mu/2)}{\sqrt{\mu/2}} + \frac{\ln (\mu/2)}{\sqrt{\mu/2}}\right)\tag{$d_l' \geq a_l/2 \geq \mu/2 \geq e^2$ and Fact~\ref{fact:decr-incr}; same for $d_r'$}\\
&\leq \frac{6\sqrt{2}\ln \mu}{\sigma\sqrt{\mu}}.
\end{align*}
We now have
\[|F'(v')| \leq \frac{6\sqrt{2}\ln \mu}{\ssmu} F(v'),\]
for all $v' \in I'$, which from $F(\vpeak) > 0$ and Grönwall's Lemma for ordinary differential inequalities can be seen to imply that for all $v' \in I'$,
\[F(v') \geq e^{-\left|v'-\vpeak\right|\frac{6\sqrt{2}\ln\mu}{\ssmu}} F(\vpeak).\]
This means that for any $v' \in I'$ within distance at most $\frac{\ssmu}{6\sqrt{2}\ln\mu}$ of $\vpeak$,
\[F(v') \geq e^{-1}F(\vpeak) \geq \frac{F(\vpeak)}{e}.\]
We can then set
\[J \coloneqq I' \cap \left(\vpeak \pm \frac{\ssmu}{6\sqrt{2}\ln\mu}\right).\]
Defined this way, $J$ will clearly have length at least $\frac{\ssmu}{6\sqrt{2}\ln\mu}$. Indeed $\vpeak$ is in $I'$, and the length of $I'$ is at least
$2\ssmu \geq 2 \times \frac{\ssmu}{6\sqrt{2}\ln\mu}$.
\end{proof}
Now, recall from \eqref{eq:proportionality-to-f} that
$p(v') \in [1,4]kF(v')$.
Thus for any $v' \in I'$,
\begin{equation}\label{eq:p-ub}
p(v') \leq 4kF(v') \leq 4kF(\vpeak).
\end{equation}
But in particular, by Claim~\ref{claim:interval-j}, for any $v' \in J$,
\begin{equation}\label{eq:p-lb}
p(v') \geq kF(v') \geq \frac{kF(\vpeak)}{e}.
\end{equation}
Besides, being a probability distribution, $p(v')$ must sum up to 1, so we have
\[1 = \sum_{v' \in I'} p(v') \geq \sum_{v' \in J} p(v') \stackrel{\eqref{eq:p-lb}}{\geq} |J| \frac{kF(\vpeak)}{e},\]
thus $kF(\vpeak) \leq \frac{e}{|J|}$. Therefore, for any $v' \in I'$,
\begin{equation}\label{eq:ub-pvprime}
p(v')
\stackrel{\eqref{eq:p-ub}}{\leq} 4kF(\vpeak)
\leq \frac{4e}{|J|}
\leq \frac{24e\sqrt{2}\ln \mu}{\ssmu}
\leq \frac{100\ln \mu}{\ssmu}.
\end{equation}

We are finally ready to prove the lemma. For $v \in I$, we can observe that
\begin{align*}
\Pr[v^* &= v \mid \Group]\\
&= \Pr[v^* = v \mid v^* \in [l,r] \wedge \dw(v^*,l)=a_l \wedge \dw(v^*,r)=a_r]\\
&\leq \Pr[v^* = v \mid v^* \in I' \wedge \dw(v^*,l)=a_l \wedge \dw(v^*,r)=a_r]\tag{strengthen the condition}\\
&= p(v)\tag{defined in \eqref{eq:definition-pvprime}}\\
&\leq \frac{100\ln \mu}{\ssmu}.\tag{$v \in I \subset I'$ and \eqref{eq:ub-pvprime}}
\end{align*}
\end{proof}

With Lemma~\ref{lemma:diluted} in hand, we are finally ready to prove Lemma~\ref{lemma:a-implies-b}.
\begin{proof}[Proof of Lemma~\ref{lemma:a-implies-b}]
First, we show that $j < \jStop$ and $A_j$ imply that $\mu,\mup \geq D$. Indeed, $j < \jStop$ implies $\lambda_j,\lambda_{j+1} \geq D$. If in addition $A_j$ holds, then $\mu = \mu_j \geq \lambda_j \geq D$, and
\[\mup = \frac{\ssmu}{400\ln \mu} \geq \frac{\sigma \sqrt{\lambda_j}}{400\ln \lambda_j} > \frac{\sigma \sqrt{\lambda_j}}{400\ln \lambda_j \log n} = \lambda_{j+1} \geq D.\]
Therefore, the assumptions of Lemma~\ref{lemma:diluted} hold.

We need to prove that for any $v \in V$,
\[\Pr[T \wedge (v^*=v) \mid K_j] \leq \dilution{1}.\]
If $v \notin I$, then by definition of $I$, $\Pr[T \wedge (v^*=v)\mid K_j]=0$, so the inequality holds trivially. On the other hand, if $v \in I$,
\begin{align*}
\Pr[T \wedge (v^*=v) &\mid K_j]\\
&\leq \Pr[v^* = v \mid K_j]\\
&= \Pr[v^* = v \mid \Group]\tag{by Lemma~\ref{lemma:only-lr}}\\
&\leq \frac{100\ln \mu}{\ssmu}\tag{by Lemma~\ref{lemma:diluted}}\\
&\leq \frac{100\ln \lambda_j}{\sigma \sqrt{\lambda_j}}\\
&= \frac{1}{4\lambda_{j+1} \log n}\\
&\leq \dilution{1}.\tag{$\lambda_{j+1} \geq D \geq e^2$}
\end{align*}
\end{proof}
\section{Proof of Lemma~\ref{lemma:asymptotics-lb}}
\label{proof:lemma:asymptotics-lb}

\begin{replemma}{lemma:asymptotics-lb}
For $n \geq \Theta_p(\max(\sigma^3, 1))$, we have
\[
\jStop+1 =
\begin{cases}
\Omega_{p}(1 + \log (1+\log_{1/\sigma} n))\text{ if $\sigma^2 \leq 1/2$}\\
\Omega_{p}(\log \log n)\text{ if $\sigma^2 \geq 1/2$.}
\end{cases}
\]
\end{replemma}

This proof is very similar in spirit to the proof of Claim~\ref{claim:asymptotics-ub} (\ref{proof:claim:asymptotics-ub}).
\begin{proof}[Proof of Lemma~\ref{lemma:asymptotics-lb}]
We track the value of $\lambda_j/\sigma^2$ as $j$ increases. First, as long as
\begin{equation}
\label{eq:lambda-large-enough}
\frac{\lambda_j}{(\ln\lambda_j)^6} \geq \sigma^2(400\log n)^6,
\end{equation}
we have
\[
\frac{\lambda_{j+1}}{\sigma^2}
= \frac{\reduce(\lambda_{j+1})}{\sigma^2}
= \frac{\sigma\sqrt{\lambda_j}}{400\sigma^2\ln\lambda_j\log n}
= \sqrt{\frac{\lambda_j}{\sigma^2}}
\stackrel{\eqref{eq:lambda-large-enough}}{\geq}
\left(\frac{\lambda_j}{\sigma^2}\right)^{1/3}.
\]
Let $\lMin$ be the smallest possible value for $\lambda_j$ at least as large as $D$ such that \eqref{eq:lambda-large-enough} holds.
Then, by induction, as long as $\lambda_j \geq \lMin$, we have
\begin{equation}
\label{eq:lambda-lb}
\frac{\lambda_j}{\sigma^2} \geq \left(\frac{\lambda_0}{\sigma^2}\right)^{1/3^j} \Rightarrow \lambda_j \geq \sigma^2 \left(\frac{n}{C\sigma^2}\right)^{1/3^j}.
\end{equation}

Recall that $\jMin$ is the smallest integer $j\geq 0$ such that $\lambda_j < D$. Let $j^*$ be the smallest integer $j\geq 0$ such that $\lambda_j < \lMin$. Then $j^* \leq \jMin$, and
applying \eqref{eq:lambda-lb} to $j^*$ we get
\begin{equation}
\label{eq:lb-jmin}
\sigma^2 \left(\frac{n}{C\sigma^2}\right)^{1/3^{j^*}} < D \Rightarrow \jMin \geq j^* > \log_3\left(\frac{\log\left(\frac{n}{C\sigma^2}\right)}{\log\left(\frac{\lMin}{\sigma^2}\right)}\right).
\end{equation}

Since $\lambda_0 = n/C$, for $n \geq CD = O_{p}(\sigma^2\ln \sigma)$, we have $\jMin \geq 1$. We separate into cases to obtain more lower bounds of $\jMin$. First, if $\sigma^2 \leq 1/2$, it is easy to verify that $\lMin = \max(D,(\log n)^{O(1)}) = (\log n)^{O_{p}(1)}$.
Therefore, by \eqref{eq:lb-jmin}, there exists $n_1 = O_p(1)$ such that for $n \geq n_1$, we have
\[\jMin \geq \Omega_{p}(\log(1+\log_{1/\sigma}n)).\] Second, if $\sigma^2 \geq 1/2$, it is easy to verify that $\lMin = \max(D,\sigma^2(\log n)^{O(1)}) = \sigma^2(\log n)^{O_{p}(1)}$.
Therefore, by \eqref{eq:lb-jmin}, there exists $n_2 = O_p(\sigma^3)$ such that for $n \geq n_2$, we have
\[\jMin \geq \Omega_{p}(\log \log n).\]

In summary, there exists $n_3 = \max(n_1, n_2) = O_p(\max(\sigma^3, 1))$ such that for $n \geq n_3$,
\[
\jMin =
\begin{cases}
\Omega_{{p}}(1 + \log (1+\log_{1/\sigma} n))\text{ if $\sigma^2 \leq 1/2$}\\
\Omega_{{p}}(\log \log n)\text{ if $\sigma^2 \geq 1/2$,}
\end{cases}
\]
which is exactly the bounds we want for $\jStop + 1$.
We can then conclude by observing that
\[\jStop+1 = \min\left(\jMin, 1+\floor*{\frac{{p}\log n}{2}}\right)\]
and that $1+\floor*{\frac{{p}\log n}{2}}$ is larger than the claimed lower bounds for $n$ large enough (which is covered in the $\Theta_p(\max(\sigma^3,1))$ lower bound on $n$ in the statement of the lemma).
\end{proof}

\end{document}